\pretocmd{\thebibliography}{%
  \footnotesize
  \setstretch{0.95}
  \setlength{\itemsep}{0pt}%
  \setlength{\parsep}{0pt}%
  \setlength{\parskip}{0pt}%
  \sloppy
}{}{}
\newtheorem{theorem}{Theorem}
\newtheorem{lemma}{Lemma}
\newtheorem{proposition}{Proposition}
\newtheorem{corollary}{Corollary}
\theoremstyle{definition}
\newtheorem{definition}{Definition}
\newtheorem{assumption}{Assumption}
\theoremstyle{remark}
\newtheorem{remark}{Remark}
\numberwithin{equation}{section}
\newcommand{\R}{\mathbb{R}}
\newcommand{\E}{\mathbb{E}}
\newcommand{\Var}{\mathrm{Var}}
\newcommand{\FB}{\mathrm{FB}}
\crefname{assumption}{Assumption}{Assumptions}
\crefname{theorem}{Theorem}{Theorems}
\crefname{lemma}{Lemma}{Lemmas}
\crefname{proposition}{Proposition}{Propositions}
\crefname{corollary}{Corollary}{Corollaries}
\crefname{definition}{Definition}{Definitions}
\crefname{remark}{Remark}{Remarks}
\title{Real Option AI: Reversibility, Silence, and the Release Ladder\thanks{Comments from the ``IO in Chile'' Workshop (U.~Los Andes \& UC~Chile, Oct.~2025) are gratefully acknowledged. Ben Golub and Yann Calv\'o L\'opez's \emph{Refine} app (\url{https://www.refine.ink/}) flagged inconsistencies and gaps in a preliminary draft. All remaining errors are mine.}
}
\author{I.\ Sebastian Buhai\thanks{Contact email: \texttt{sebastian.buhai@sofi.su.se}. Full coordinates: \url{https://www.sebastianbuhai.com}.}\\
{\small SOFI at Stockholm University, }
{\small Instituto de Economia at UC Chile, }
{\small and NIPE at Minho University}}
\date{Version of \today. \href{https://www.sebastianbuhai.com/papers/publications/real_option_ai.pdf}{Latest version.}}
\begin{document}
\maketitle

\begin{abstract}
\footnotesize
We model the observed cadence of AI product releases (extended quiet spells, bursts of reversible \emph{patches}, and rarer, less-reversible \emph{pivots}) as optimal exercise of \emph{strategic real options} under reputational learning.

A privately observed technical/reputational state evolves as a diffusion. The firm controls two hidden upgrade options with asymmetric fixed costs and reversibility (a cheap ``patch'' and a costly ``pivot''), and it also controls a \emph{publication-frequency clock}: a Cox process whose intensity $\lambda_t$ governs when noisy public performance and safety signals are disclosed. For sufficiently low clock costs, the firm optimally posts short, observable ``clock-off'' windows (local silence with $\lambda_t = 0$) around knife edges. These windows shut down the martingale part of public beliefs locally and eliminate knife-edge mixing. Equilibrium behavior collapses to a two-rung \emph{release ladder}: two endogenous triggers and two endogenous jump targets, with \emph{no interior mixing}.

We give a full characterization of this ladder. Within stationary Markov strategies, we prove existence and uniqueness via a boundary-value / verification system: on the inaction band, the value function solves a linear ODE; it satisfies value matching and smooth pasting at triggers, and target optimality at targets; and a no-local-time lemma for beliefs inside posted silence windows delivers regularity at the boundaries. We then (i) endogenize market/platform adoption: a downstream buyer adopts when public belief $m_t$ crosses a unique cutoff $\alpha$, pinned by a smooth-fit condition in belief space; and (ii) map financing into a tight ``irreversibility wedge'': under leverage, the gap between first-best and levered surplus is bounded by the least-reversible reset option's takeover switching cost. Patches are debt-insensitive; pivots can be distorted, but only up to that bound.

The framework delivers testable telemetry signatures in firm-authored disclosures (evaluation cards, release notes, mitigation writeups): (1) a \emph{pre-release cadence dip} in both publication intensity and intra-month dispersion as the firm shuts the clock off just before a major reset; (2) \emph{two post-release plateaus} in disclosed performance outcomes, consistent with jump targets (patch vs.\ pivot); and (3) \emph{debt-insensitive patch timing} in high-reversibility regimes, with any leverage effect concentrated in pivots. These are distinct from option-implied volatility spikes around earnings-style events: we are measuring the firm's own throttling of outward technical signals, not the market's pricing of event risk. Conceptually, the paper links dynamic disclosure \citep{guttman2014aer,orlov2020jpe} to S--s style impulse control with costly reversibility \citep{dixitpindyck1994,bertola1994restud,abel1996restud,buhai2014jbes}, and shows that leverage matters only through irreversibility, cf.\ \citet{manso2008ecta}'s bound on the agency wedge. Substantively, it rationalizes AI ``wait $\rightarrow$ patch $\rightarrow$ wait $\rightarrow$ pivot'' cadence and gives an empirical playbook for disclosure tempo, patch cascades, and reversibility in model/service telemetry \citep{arora2010isr,li2017ccs}, in contrast to event-risk evidence from options markets \citep{todorovzhang2025et,alexiou2025rof}.

\medskip
\noindent\textbf{JEL codes:} C61; C73; D83; G32; L86.\\
\noindent\textbf{Keywords:} dynamic disclosure; real options; costly reversibility; AI product releases; capital structure and leverage.

\end{abstract}


\section{Introduction}\label{sec:intro}

Major AI labs and model-service vendors display a distinctive release cadence: extended quiet periods, bursts of minor patches (fine-tunes; context, latency, or price tweaks), and occasional architecture pivots involving new base models or modality stacks. The observable pattern, \emph{wait $\rightarrow$ patch $\rightarrow$ wait $\rightarrow$ pivot}, recurs across consumer assistants, developer APIs, and safety-critical verticals such as clinical and financial AI. Firms visibly \emph{throttle} their own outward signal flow, then jump.

We provide a compact, fully analytical rationalization of that cadence. The environment combines reputational learning from publicly disclosed, noisy performance and safety signals; two hidden upgrade options with different fixed costs and reversibility (a cheap, reversible \emph{patch} and a costlier, less reversible \emph{pivot}); and firm control over publication frequency via a predictable Cox clock that gates disclosure. Their interaction yields equilibrium paths with two jump targets---a \emph{release ladder}---and, crucially, no interior mixing. Near knife-edge regions the firm imposes a short, publicly posted local period of silence and then executes a jump reset rather than letting reputational diffusion sustain asymmetric drifts or mixed actions. We microfound the disclosure instrument as an observable policy and show that it suppresses the martingale part of public beliefs on a vanishing neighborhood (Section~\ref{sec:micro_silence}). A no--local-time lemma for beliefs inside those disclosure windows (Lemma~\ref{lem:no-local-time-belief}) underpins both the selection logic and the boundary analysis.

From an economic perspective, the ladder separates reversible patches from less-reversible pivots. Patches are frequent, operate in ``maintenance mode,'' and are largely insensitive to financing. Pivots are rarer, feel architectural and contentious, and any distortion in their timing is tightly bounded by the present value of the least-reversible reset. Within stationary Markov strategies, we deliver a complete boundary-value characterization of the ladder: the value function solves a linear ODE on the inaction band, satisfies value matching and smooth pasting at the triggers, and target optimality at the jump targets, and we prove existence and uniqueness within this class via a standard QVI/BVP argument (Section~\ref{sec:verification}). We then endogenize market adoption via a buyer or platform block, show that disclosure tempo shapes a unique belief cutoff $\alpha$ at which adoption occurs (Section~\ref{sec:adoption}), and map leverage into a tight ``irreversibility wedge'': under debt, the gap between first-best all-equity value and the levered total (equity $+$ debt) is bounded by the takeover switching cost of the \emph{least reversible} rung (Section~\ref{sec:finance}). In particular, as long as the low-cost patch is easily reversible at takeover, debt is effectively neutral on that patch rung: patch timing (trigger and target) is debt-insensitive and total surplus along that block coincides with first best. Any distortion shows up at the pivot rung, and even there it is tightly bounded by the expected takeover cost of forcing that pivot.

The theory generates sharp empirical predictions in firm-authored telemetry. First, just ahead of major releases, the variance and cadence of publicly disclosed performance and safety signals should \emph{dip} and then \emph{jump} at release: publication intensity falls and intra-month dispersion collapses as the clock is locally turned off, then both spike on the event. Second, conditional on observables that proxy for reversibility, patch timing should be debt-insensitive, with leverage effects concentrated around pivots in low-reversibility regimes. Third, disclosed performance and safety metrics should exhibit two post-release plateaus, one after patches and one after pivots, consistent with an S--s ladder of targets rather than a smooth continuous ramp (Section~\ref{sec:empirical}). By design, these signatures use firm-authored signals (evaluation cards, release notes, mitigation advisories) rather than option-implied volatility or broad market chatter: the object of the theory is how the firm gates \emph{its own} outward signal flow, not how markets price event risk.

\paragraph{Literature and positioning.}
We unify dynamic disclosure with selection in real options, S--s verification with costly reversibility, and financing under irreversibility. On selection, a microfounded publication-frequency rule (short, posted clock-off windows) selects pure reset equilibria by shutting down the public-belief martingale locally while preserving belief drift and the adoption cutoff, linking dynamic disclosure and information design \citep{guttman2014aer,orlov2020jpe} to release-cadence instrumentation. Related work on dynamic timing and signaling with belief states and exogenous information flows includes \citet{kolb2015jet,kolb2019jet} and, with stochastic stakes, \citet{GryglewiczKolb2022TE}, which are complementary but do not rely on a clock-off selection device. On verification, we adapt classic S--s impulse control with costly reversibility to a two-reset ladder \citep{dixitpindyck1994,bertola1994restud,abel1996restud,buhai2014jbes}. In our setting, a posted-silence disclosure protocol delivers a no-local-time lemma for the \emph{public-belief} process $m_t$ at the intervention boundaries (the private diffusion $z_t$ continues to carry its Brownian martingale component) and, combined with the usual value-matching and envelope conditions, this yields $V'=0$ at both triggers (high contact / smooth pasting) and targets (target optimality) and delivers existence and uniqueness of the ladder. On financing, we map leverage wedges to \emph{takeover switching costs} and derive a tight bound (Proposition~\ref{prop:debt-bound}) that renders reversible patches debt-neutral and pivots distortion-limited, sharpening the connection between irreversibility and agency in the spirit of \citet{manso2008ecta}, who bounds the agency cost of debt via switching costs at default. Empirically, we propose a telemetry-based \emph{measurement blueprint} built from vendor blogs, release notes, advisories, and tagged open-source releases. The blueprint aligns with evidence on software patching and cadence \citep{arora2010isr,li2017ccs} and delivers testable predictions, most notably a pre-release dip in the variance and cadence of firm-published signals (local silence just before the jump) and a discrete burst at the event itself. In contrast, the option-implied event-risk literature typically finds a pre-event spike in implied volatility as markets price elevated jump risk \citep{todorovzhang2025et,alexiou2025rof}; our telemetry is designed to capture the firm’s own suppression and release of information, rather than investors’ ex ante pricing of event risk.

The paper develops the Cox-clock microfoundations and the selection result (Section~\ref{sec:micro_silence}); establishes the boundary-value system and verification for the two-rung ladder (Section~\ref{sec:verification}); introduces the adoption block and the feedback from disclosure tempo to the adoption cutoff (Section~\ref{sec:adoption}); maps financing wedges to takeover-based switching-cost bounds (Section~\ref{sec:finance}); derives testable implications and a measurement and falsification blueprint in firm-published telemetry (Section~\ref{sec:empirical}); and concludes with policy and industry implications, emphasizing disclosure tempo as the \emph{selection lever} and modularity as the \emph{wedge lever} (Section~\ref{sec:policy_design}).

\section{Environment}\label{sec:environment}

We describe primitives, information, and controls in a way that nests both the Cox–clock microfoundation of Section~\ref{sec:micro_silence} and the boundary–value and verification system in Section~\ref{sec:verification}. Regularity is standard: drifts and diffusions are locally Lipschitz with linear growth, $r>0$, and running costs are bounded, so that the private state has a unique strong solution and discounted values are well defined.

\paragraph{Private state and signals.}
A firm of hidden type $\theta\in\{H,L\}$ observes a one-dimensional private state $z_t$ evolving as
\[
\mathrm{d}z_t \;=\; \mu_\theta(z_t)\,\mathrm{d}t + \phi(z_t)\,\mathrm{d}W_t,
\qquad \phi(\cdot)>0,
\]
on a filtered probability space carrying a Brownian motion $W$. Public information arrives only at publication times $\{T_n\}_{n\ge 1}$ generated by a Cox process $N_t$ with instantaneous intensity~$\lambda_t$. At $T_n$ the market observes a noisy signal
\[
y_n \;=\; z_{T_n} + \varepsilon_n,
\qquad 
\varepsilon_n \stackrel{\text{i.i.d.}}{\sim} \mathcal{N}(0,\sigma_\varepsilon^2),
\]
independent of $(W,N)$. Let $\mathcal{F}^P_t=\sigma\!\bigl(\{(T_k,y_k):T_k\le t\}\bigr)$ be the public filtration.

We impose the following structure on the law of motion of $z_t$, which pins down how beliefs evolve.

\begin{assumption}[Common technical drift for filtering]\label{ass:filter}
For all $z$, $\mu_H(z)=\mu_L(z)=:\mu(z)$ and $\phi(\cdot)$ does not depend on~$\theta$. That is, conditional on $z_t$, the physical law of motion for $z_t$ does not depend on the hidden type. The hidden type may matter for payoffs $\pi(\cdot)$ or costs, but not for the drift or diffusion of $z_t$.
\end{assumption}

Under Assumption~\ref{ass:filter} and the finite-dimensional filter benchmark formalised in Section~\ref{subsec:beliefs} (linear-Gaussian / Kalman or a parametric finite-dimensional approximation), the public posterior for $z_t$ is summarized by the conditional mean and variance,
\[
m_t:=\E[z_t\mid\mathcal{F}^P_t],
\qquad
v_t:=\Var(z_t\mid\mathcal{F}^P_t).
\]
Between publications $(m_t,v_t)$ evolves deterministically according to the ODE system implied by the common drift and diffusion; at each $T_n$ the Bayesian update maps $(m_{T_n^-},v_{T_n^-},y_n)$ to $(m_{T_n},v_{T_n})$.

In particular, under the linear-Gaussian filter formalised in Section~\ref{subsec:beliefs}, $m_t$ is a piecewise deterministic Markov process: between publications it follows a deterministic ODE $\dot m_t=\bar\mu(m_t)$, and at $T_n$ it jumps via the update map.

\emph{Remark.} If $\mu_H\neq\mu_L$, then the public sufficient statistic would be $(m_t,v_t,p_t)$ with
\[
p_t:=\Pr(\theta=H\mid\mathcal{F}^P_t),
\]
because the expected drift between publications would be $p_t\,\mu_H(m_t)+(1-p_t)\,\mu_L(m_t)$. We adopt Assumption~\ref{ass:filter} to keep the sufficient statistic two-dimensional and to interpret $m_t$ as evolving under a single deterministic drift $\bar\mu(m_t)$ between publications in the baseline. Section~\ref{subsec:beliefs} shows that the silence logic and selection arguments rely only on two properties of the belief process highlighted there, deterministic drift between publications and jump risk proportional to the Cox intensity, and that these extend to richer belief states as long as the filter remains finite-dimensional.

\paragraph{Disclosure control (publication clock).}
The firm commits to a publication-frequency protocol: a rule for an intensity process $\lambda_t\in[0,\bar\lambda]$, with $\lambda_t$ publicly observable in real time. Think of $\lambda_t$ as a visible release throttle or embargo clock. When $\lambda_t=0$ the firm is openly on hold and no public updates will be released; when $\lambda_t>0$ the firm is in an allowed-disclosure mode. Market participants can see whether the clock is currently off or on, so $\lambda_t$ is predictable in $\mathcal{F}^P_t$ and the absence of publications while $\lambda_t=0$ carries no surprise. The firm pays a convex $C^1$ cost $k(\lambda_t)$ with $k(0)=0$, $k'(\cdot)\ge0$, $k''(\cdot)\ge0$.

A \emph{local silence policy} sets $\lambda_t=0$ on a small, publicly posted quiet window and otherwise allows $\lambda_t>0$. For clarity in this section we anchor that window directly in the public sufficient statistic: fix a center $\hat m$ and radius $\varepsilon>0$, define
\[
I_\varepsilon(\hat m):=(\hat m-\varepsilon,\hat m+\varepsilon)\subset\R,
\]
and stipulate that whenever $m_t\in I_\varepsilon(\hat m)$ the firm turns the clock off, i.e.\ $\lambda_t=0$. Because the protocol is posted and $\lambda_t$ is observable, the market knows when the Cox clock is off. On $I_\varepsilon(\hat m)$ there are therefore no publication jumps; the jump martingale in $m_t$ disappears and $m_t$ has finite variation and zero quadratic variation; see Lemma~\ref{lem:equivalence} in Appendix~\ref{app:silence} and Lemma~\ref{lem:no-local-time-belief} in Appendix~\ref{app:lemmas}. These short, observable quiet windows are on path and so silence inside them is not itself a signal; their role is to locally gate belief volatility while letting the deterministic drift of $m_t$ continue.

Section~\ref{sec:micro_silence} provides a Cox–clock microfoundation in which the same local-silence protocol is implemented via a stationary map $\lambda(z)\in[0,\bar\lambda]$: the firm posts in advance the subset of private states on which $\lambda(z)=0$ and then runs the realized intensity $\lambda_t=\lambda(z_t)$. Because $\lambda_t$ is publicly observed, seeing $\lambda_t=0$ reveals that $z_t$ lies in the posted band, so the full information sets under the $\lambda(z_t)$ protocol and under the $m_t$-based window need not coincide. For our selection and verification results, however, we use only two consequences that are common to both formulations: (i) on any interval where the protocol dictates $\lambda_t=0$ there are no publication jumps, so the compensated jump martingale in the posterior mean $m_t$ disappears and $m_t$ has bounded variation there; and (ii) by shrinking the band radius one can make any truncation effect of the band on $(m_t,v_t)$ arbitrarily small while property (i) continues to hold. In this reduced-form sense the $\lambda(z)$ microfoundation and the $m_t$-window formulation are equivalent for our purposes: both generate observable local silence windows in which the martingale part of $m_t$ is shut down while its deterministic drift continues.

\paragraph{Adoption and payoffs.}
A competitive buyer or platform adopts when public beliefs cross a cutoff $\alpha$ in the $m$ coordinate; $\alpha$ is endogenized in Section~\ref{sec:adoption} as the solution to the buyer’s stopping problem given the disclosure protocol. The firm’s instantaneous payoff is $\pi(z,m)$ net of disclosure cost $k(\lambda_t)$. A constant pre-adoption loss and post-adoption gain is a convenient special case, but all results are stated for the general $\pi(z,m)$ and are linked to adoption via~\eqref{eq:pi-with-adoption}.

\paragraph{Hidden real options (patch and pivot).}
The firm has two impulse controls. A \emph{patch} costs $K_1>0$ and instantaneously resets the private state from the pre-impulse level $z$ to a target $z_1^\ast$. A \emph{pivot} costs $K_2>K_1$ and resets to a target $z_2^\ast$. These resets are not publicly revealed at the instant of action; they affect public beliefs only through subsequent publications governed by the clock. The trigger/target pairs are determined in equilibrium by value matching and smooth pasting.

\paragraph{Support consistency under instantaneous resets.}
Under a stationary Markov ladder with instantaneous impulses at the triggers, the private state remains in the closed inaction band:
\[
z_t \in [\beta_1,\beta_2]\quad\text{a.s. for all }t,
\]
because upon first hitting either boundary the state is reset immediately. If (as maintained here) the publication protocol and the trigger policy are common knowledge, then the conditional law of $z_t$ given public information is supported on $[\beta_1,\beta_2]$. Hence the public posterior mean satisfies
\[
m_t \;=\; \E[z_t\mid \mathcal{F}^P_t] \in [\beta_1,\beta_2]\quad\text{a.s. for all }t.
\]
In particular, during posted clock–off windows (local silence), $m_t$ evolves deterministically within $[\beta_1,\beta_2]$ along its ODE drift. This single, common-knowledge inaction band is the benchmark geometry we use when first describing the ladder on the private state. As noted in Footnote~\ref{fn:alpha-above-beta2}, one can also relax common knowledge of the exact trigger locations so that the public holds a thin posterior over $(\beta_1,\beta_2)$ and $m_t$ (and hence the adoption cutoff) can sit slightly above $\beta_2$ while preserving belief consistency; that is the information structure implicitly used later when we write $\beta_2<\alpha$ and refer to the belief-space block $(\beta_2,\alpha)$.

\paragraph{Strategies and equilibrium.}
We focus on stationary Markov strategies. There exist private-state triggers
$\beta_1<\beta_2$ such that the firm executes a patch when $z_t$ first hits
$\beta_1$ from above and a pivot when $z_t$ first hits $\beta_2$ from below.

In the common-knowledge-band benchmark there is a belief cutoff $\alpha\in[\beta_1,\beta_2]$ at which the buyer adopts.\footnote{\label{fn:alpha-above-beta2}%
If one wishes to allow $\alpha>\beta_2$, one can relax common knowledge of the exact trigger locations, for example by letting the public hold a thin posterior over $(\beta_1,\beta_2)$, so the conditional support of $z_t$ need not be the single known interval $[\beta_1,\beta_2]$. Then $m_t$ can slightly exceed $\beta_2$ while remaining belief-consistent.

We use the common-knowledge-band benchmark $m_t\in[\beta_1,\beta_2]$ when first setting up the ladder on the private state. When we later write $\beta_2<\alpha$ and talk about the belief-space block $(\beta_2,\alpha)$ in Theorem~\ref{thm:silence} and in Section~\ref{sec:verification}, we are implicitly appealing to this thin-posterior relaxation: the private state still diffuses only on $[\beta_1,\beta_2]$, but the public posterior mean and the adoption cutoff may sit slightly above $\beta_2$. The selection and verification arguments themselves depend only on the private inaction band $[\beta_1,\beta_2]$ and on the PDMP structure of $m_t$, so both information structures are interchangeable for our purposes.}

The disclosure protocol consists of the posted clock rule and the observed $\lambda_t$.
In the selection benchmark, that protocol implements a local silence window around the
knife-edge region; within that window $m_t$ has no jump martingale and \emph{drifts
deterministically inside $[\beta_1,\beta_2]$ toward the cutoff $\alpha$}.

An equilibrium is a tuple of thresholds $(\beta_1,\beta_2,\alpha)$, targets $(z_1^\ast,z_2^\ast)$, and an intensity protocol for $\lambda_t$ such that:
\begin{enumerate}[label=(\roman*),leftmargin=1.5em,itemsep=2pt,topsep=2pt]
\item \emph{Firm optimality.} Given beliefs and the posted clock rule, the firm’s Markov control (patch trigger, pivot trigger, targets) maximizes discounted firm value and satisfies value matching and smooth pasting at both triggers and both targets.
\item \emph{Belief consistency.} $(m_t,v_t)$ evolves as the piecewise deterministic Markov process implied by Assumption~\ref{ass:filter}, the observed $\lambda_t$, and the update map. On publicly posted silence windows ($\lambda_t=0$), $m_t$ follows its deterministic drift $\dot m_t=\bar\mu(m_t)$ with no jump martingale.
\item \emph{Buyer optimality.} The buyer adopts at the unique cutoff $\alpha$ characterized in Section~\ref{sec:adoption}, which solves the buyer’s stopping problem given the observed disclosure protocol.
\end{enumerate}
Existence, uniqueness within this stationary Markov class, and the complete boundary-value characterization are established in Section~\ref{sec:verification}.

\paragraph{Reduced-form interpretation.}

The publication clock is the instrument that gates public volatility. In reduced form it creates short, observable clock-off windows in which the martingale part of $m_t$ is suppressed while the deterministic drift of $m_t$ continues. Those windows eliminate interior mixing, yield no local time at the intervention boundaries for the \emph{belief process}, and allow verification of the two-reset ladder via value matching and smooth pasting, with the quiet ODE drift of $m_t$ taking place within the belief support induced by the ladder (in the common-knowledge-band benchmark, $[\beta_1,\beta_2]$) on the approach to $\alpha$. Section~\ref{sec:micro_silence} gives the Cox–clock microfoundation and shows how a posted $\lambda(z)$ rule implements the same observable silence windows, and Section~\ref{subsec:beliefs} makes precise that the selection arguments use only the PDMP structure of beliefs, not the full linear-Gaussian form.

\section{Microfoundations for Silence: Publication Frequency Control}
\label{sec:micro_silence}

This section provides a microfoundation for the reduced-form silence band by modeling the firm's control over the publication frequency of public signals. The firm chooses the intensity of a Cox publication clock and pays a convex flow cost. When the clock does not jump, no new public signals arrive and public beliefs evolve deterministically from the last release. A policy that locally turns the clock off implements the variance-suppression environment used in the baseline and, as we show in Theorem~\ref{thm:silence} in Section~\ref{sec:equilibrium}, selects the pure two-reset/no-mixing stationary Markov structure within the stationary Markov class.

\subsection{Timeline, information, and observability}
\label{subsec:timeline_info}

\paragraph{Latent state.}
The firm's privately observed technical or reputational state $z_t\in\mathbb{R}$ evolves as a time-homogeneous It\^o diffusion
\[
\mathrm{d}z_t \;=\; \mu_\theta(z_t)\,\mathrm{d}t \;+\; \phi(z_t)\,\mathrm{d}W_t,
\]
with $\mu_\theta$ and $\phi$ measurable, locally Lipschitz, and of linear growth; $\phi$ is strictly positive and bounded on compact sets away from reset targets. These conditions ensure a unique strong solution and a well-defined generator $\mathcal{L}$.%
\footnote{Regularity matches the S--s and costly-reversibility literature \citep{dixitpindyck1994,bertola1994restud,abel1996restud,buhai2014jbes}.}

\paragraph{Publications and the Cox clock.}
Public signals are published at jump times $\{T_n\}_{n\ge 1}$ of a Cox process with firm-chosen predictable intensity $\lambda_t\in[0,\bar\lambda]$. At $T_n$ the market observes
\[
y_n \;=\; z_{T_n}+\varepsilon_n,\qquad \varepsilon_n\stackrel{\text{i.i.d.}}{\sim}\mathcal{N}(0,\sigma_\varepsilon^2),
\]
independent of $(W,\{T_n\})$. The firm pays a convex $C^1$ cost $k(\lambda_t)$ with $k(0)=0$, $k'(\cdot)\ge 0$, and $k''(\cdot)\ge 0$.

\paragraph{Observability.}
We take the realized publication intensity process $\lambda_t$ to be publicly observable and predictable in real time, for example because the firm throttles visible release channels, rate-limits API telemetry, or publishes a binding cadence protocol. In the stationary Markov policies below the firm implements $\lambda_t = \lambda(z_t)$ for some mapping $z\mapsto\lambda(z)\in[0,\bar\lambda]$ that depends on the private state $z_t$, but only the realized value $\lambda_t$ is publicly seen. In particular, if the firm sets $\lambda_t=0$ on a posted ``silence window,'' outside observers can directly see that the publication clock has been shut off, so the absence of new signals during that spell is on path and carries no incremental surprise. We use the phrase ``$\lambda_t$ is predictable with respect to $\mathcal{F}^P_t$'' only in this sense: the intensity process itself is publicly predictable, not that the mapping $z\mapsto\lambda(z)$ must be a function of public beliefs.

\medskip

\subsection{Admissible controls and strategies}
\label{subsec:controls}

\paragraph{Reset controls.}
The firm has two instantaneous resets: a patch with cost $K_1\ge 0$ and a pivot with cost $K_2>K_1$. A stationary Markov policy is summarized by triggers $(\beta_1,\beta_2)$ and targets $(z_1^\ast,z_2^\ast)$: when $z_t$ first hits $\beta_1$ from above the firm pays $K_1$ and resets to $z_1^\ast$; when $z_t$ first hits $\beta_2$ from below it pays $K_2$ and resets to $z_2^\ast$.%
\footnote{Targets need not coincide with boundaries; they are pinned by value matching and smooth pasting.}

\paragraph{Publication-frequency control.}
The firm also chooses a stationary mapping $\lambda(z)\in[0,\bar\lambda]$ and pays $k(\lambda(z))$ in flow cost. Operationally, this mapping is implemented via the realized Cox intensity $\lambda_t=\lambda(z_t)$. Any region in private state space on which $\lambda(\cdot)=0$ is a \emph{local silence region}: within that region the firm publicly dials publication intensity to zero, the market observes $\lambda_t=0$, and anticipates that no new signals will arrive while $z_t$ remains there.

\paragraph{Admissibility.}
A policy $(\beta_1,\beta_2;\,z_1^\ast,z_2^\ast;\,\lambda(\cdot))$ is admissible if it is Borel measurable and stationary, and the induced process is non-explosive with finite expected discounted costs.

\subsection{Public beliefs as a PDMP}
\label{subsec:beliefs}

Let
\[
\mathcal{F}^P_t
\;=\;
\sigma\!\Bigl(\{(T_n,y_n):T_n\le t\}\,\cup\,\{\lambda_s:0\le s\le t\}\Bigr)
\]
be the public filtration generated by the history of publication times and signals and by the realized intensity path, and define
\[
m_t := \E[z_t \mid \mathcal{F}^P_t],
\qquad
v_t := \Var(z_t \mid \mathcal{F}^P_t).
\]

To keep the belief state finite-dimensional, we impose the standard linear-Gaussian and finite-dimensional filter benchmark: between publications the law of $z_t$ evolves under dynamics for which the conditional distribution of $z_t$ given $\mathcal{F}^P_t$ stays Gaussian and is summarized by $(m_t,v_t)$. Canonical cases are:
(i) an affine drift and constant diffusion (Kalman--Bucy setting), possibly after absorbing any type differences into a common drift $\mu(\cdot)$ that is publicly known; or
(ii) any specification under which the posterior belongs to a parametric family with finitely many sufficient statistics, of which $(m_t,v_t)$ are the ones relevant for payoffs and adoption.%
\footnote{For a generic nonlinear diffusion with state-dependent drift and diffusion, the exact conditional law is not Gaussian and the nonlinear filter is infinite-dimensional. The standing assumption here is therefore a tractability restriction: we work with a finite-dimensional filter (the linear-Gaussian/Kalman benchmark, or an explicitly parametric finite-dimensional approximation) so that $(m_t,v_t)$ is a Markov state for beliefs. None of the selection arguments below rely on linear-Gaussian structure beyond the two properties highlighted after \eqref{eq:belief-update}.}

Under this benchmark, $(m_t,v_t)$ follows a piecewise deterministic Markov process (PDMP). Between publications,
\[
\dot m_t=\bar\mu(m_t),\qquad \dot v_t=\bar\gamma(m_t,v_t),
\]
for continuous drift maps $(\bar\mu,\bar\gamma)$ implied by the prior transition; and at a publication time $T_n$ the Gaussian update gives
\[
m_{T_n} \;=\; \frac{m_{T_n^-}/v_{T_n^-}+y_n/\sigma_\varepsilon^2}{1/v_{T_n^-}+1/\sigma_\varepsilon^2},
\qquad
v_{T_n} \;=\; \bigl(1/v_{T_n^-}+1/\sigma_\varepsilon^2\bigr)^{-1}.
\tag{\theequation}\stepcounter{equation}\label{eq:belief-update}
\]
Thus, conditional on the observed publication clock $\lambda_t$, public beliefs evolve via a deterministic ODE flow between jumps plus discrete Bayesian jumps at the Cox times $\{T_n\}$.

Two features of \eqref{eq:belief-update} are all we use later:
\begin{enumerate}[label=(\roman*),leftmargin=1.5em,itemsep=2pt,topsep=2pt]
\item \textbf{Deterministic drift between publications.} Between jumps, $m_t$ has finite variation and its evolution is completely pinned down by the last release and by primitives. There is no Brownian martingale term in $m_t$ between publications.
\item \textbf{All randomness comes from disclosure jumps.} The only source of $\mathcal{F}^P_t$-martingale risk in $m_t$ is the arrival (or non-arrival) of publications. The quadratic variation of the jump martingale is proportional to the intensity $\lambda_t$ of the Cox clock.
\end{enumerate}

These two properties survive even if one replaced $(m_t,v_t)$ by a richer belief state in a fully nonlinear diffusion model. The silence logic below therefore does not hinge on global linear-Gaussian structure; that benchmark is used only to keep the notation finite-dimensional.

When $\lambda_t=0$ on a publicly observed window, there are no jumps in that window and the martingale part of public beliefs shuts down. On that window the evolution of beliefs is fully deterministic from the public point of view.

\subsection{Firm's value and HJB with a publication clock}
\label{subsec:HJB_clock}

\paragraph{State-space clarification.}
There are two natural value objects. If one solves the joint problem over resets \emph{and} disclosure tempo, the state is the pair $(z,m)$ and the value is $S(z,m)$ (which later appears when we endogenize adoption in Section~\ref{sec:adoption}). In the present subsection, however, we analyze the firm's \emph{reset} problem \emph{conditional on a posted publication-frequency rule} $\lambda(\cdot)$, in particular on a local silence window where $\lambda=0$ and the public mean $m_t$ has no martingale part (Appendix~\ref{app:silence}). Conditional on that rule, $m_t$ follows a deterministic ODE flow in a neighborhood of the intervention boundary that is the same for all realizations of $z_t$. On such a window we either work with specifications in which $\pi$ is already independent of $m$ (for example because adoption is locally constant), or, in more general payoffs, any purely time-deterministic component induced by $m_t$ under the posted rule is common across all $z$ and can be folded into the normalization of the value function without affecting the reset calculus. With this convention the influence of $m_t$ on payoffs is absorbed into a time-invariant local flow $\pi(z)$, and the dynamic-programming state collapses locally to the private variable $z$.

\medskip
Let $V$ denote the firm's stationary Markov value \emph{conditional on the posted} $\lambda(\cdot)$. The flow payoff $\pi(z,m)$ may generally depend on public beliefs (through adoption or pricing); we write $\pi(z)$ for the payoff net of any purely time-deterministic component induced by $m_t$ under the posted rule (for example on a silence window, where that dependence is locally inessential in the sense just described). On any inaction interval,
\begin{equation}
\label{eq:HJBclock}
rV(z) \;=\; \pi(z) \;+\; (\mathcal{L}V)(z) \;-\; k(\lambda(z)),
\end{equation}
with the usual reset boundary conditions (value matching and smooth pasting at both triggers and both targets), verified in Section~\ref{sec:verification}.

\noindent\emph{Interpretation.} The clock $\lambda(\cdot)$ does not affect the private diffusion $z_t$ and therefore enters \eqref{eq:HJBclock} only through its flow cost $k(\cdot)$ in this conditional problem. Its \emph{strategic} role is to govern the martingale component of public beliefs (and hence adoption and pricing) described in \S\ref{subsec:beliefs}. When $\lambda=0$ on a posted window the belief mean has no jump martingale there, so locally the HJB is one-dimensional in $z$. If instead one solves the full joint problem without conditioning on $\lambda(\cdot)$, the natural state is $(z,m)$ and the generator for $m$ would appear explicitly in the HJB for $S(z,m)$; we make use of that formulation when characterizing the adoption cutoff in Section~\ref{sec:adoption}.

\subsection{Variance suppression as a limit of publication policies}
\label{subsec:equivalence}

We now connect the ``turn the clock off'' policy to the reduced-form variance-suppression assumption used in the baseline analysis of equilibrium selection.

\begin{assumption}[Local silence window]\label{ass:silence_window}
Fix $\hat z\in\mathbb{R}$ and $\delta>0$ and consider
\[
\lambda_\delta(z) \;=\;
\begin{cases}
0, & |z-\hat z|\le \delta,\\
\bar\lambda, & |z-\hat z|>\delta,
\end{cases}
\qquad k(0)=0,\; k(\bar\lambda)<\infty.
\]
\end{assumption}

Under Assumption~\ref{ass:silence_window}, and recalling property (ii) above, the martingale part of $m_t$ is switched off on $\{|z_t-\hat z|\le\delta\}$: the market sees $\lambda_t=0$, expects no publications there, and therefore expects no jumps in beliefs.%
\footnote{Lemma~\ref{lem:equivalence} in Appendix~\ref{app:silence} formalizes this via a Doob--Meyer decomposition for the posterior mean: the compensated jump martingale has quadratic variation proportional to $\lambda_t$ and thus has zero quadratic variation on the silence window.} On that window, $m_t$ evolves deterministically with finite variation, driven only by the ODE flow inherited from the last disclosed release.

Formally, for any $C^1$ test function $f$ and any stopping time $\tau$ that avoids resets,
\[
\E\!\left[
f(m_{t\wedge\tau}) - f(m_0)
- \int_0^{t\wedge\tau} f'(m_s)\,\bar\mu(m_s)\,\mathrm{d}s
\right]
\;\xrightarrow[\delta\downarrow 0]{}\; 0,
\]
so the reduced-form variance-suppression environment we use in the proofs is the $\delta\downarrow 0$ limit of a sequence of admissible publication-frequency policies.

\subsection{Equilibrium with publication frequency}
\label{subsec:eq_def}

\begin{definition}[Stationary Markov equilibrium with publication control]\label{def:equilibrium}
An equilibrium is a tuple
\[
\bigl(\beta_1,\beta_2;\, z_1^\ast,z_2^\ast;\, \alpha;\, \lambda(\cdot);\; \text{belief system for }(m_t,v_t)\bigr)
\]
such that:
\begin{enumerate}[label=(\alph*),topsep=1pt,itemsep=1pt,leftmargin=1.5em]
\item \emph{Firm optimality:} given beliefs and $\lambda(\cdot)$, the reset and target policy solves the firm's problem and satisfies value matching and smooth pasting at both triggers and both targets (verified in Section~\ref{sec:verification});
\item \emph{Belief consistency:} $(m_t,v_t)$ evolves as the PDMP induced by $\lambda(\cdot)$ and Bayes' rule under the finite-dimensional filter benchmark. Within posted silence windows ($\lambda_t=0$), beliefs follow the deterministic flow with no jump martingale (Lemma~\ref{lem:equivalence});
\item \emph{Buyer optimality:} given the observed disclosure protocol $\lambda(\cdot)$, the buyer adopts at the unique cutoff $\alpha$ characterized in Section~\ref{sec:adoption};
\item \emph{Policy observability:} the realized intensity process $\lambda_t=\lambda(z_t)$ is $\mathcal{F}^P_t$-predictable and observable in real time (for example via visible throttling or posted cadence), so that $\lambda_t=0$ on a stated window is common knowledge. Therefore, the absence of public news inside such a window is literally on the equilibrium path and conveys no additional information beyond (b).
\end{enumerate}
\end{definition}

The key feature for selection is that in an announced silence window public beliefs have no martingale term, they drift deterministically, and they do not accumulate local time at the reset thresholds.

\subsection{Silence selects pure two-reset policies}
\label{subsec:selection}

We use a mild regularity condition that rules out flat tangencies at the triggers.

\begin{assumption}[Regularity for selection]\label{ass:selection}
On each diffusive block the continuation value $V$ is $C^2$ and, at each trigger $\beta_i$ for $i\in\{1,2\}$, the reset value $z\mapsto V(z_i^\ast)-K_i$ crosses $V(z)$ transversally. The unique knife-edge is pinned by smooth pasting.
\end{assumption}

Under Assumptions~\ref{ass:silence_window} and \ref{ass:selection}, local silence eliminates interior mixing and selects pure resets. Precisely, Proposition~\ref{prop:selection} (Appendix~\ref{app:silence}) shows that, for sufficiently small windows posted around the triggers, any stationary Markov equilibrium is pure: the firm resets exactly at $(\beta_1,\beta_2)$ and never mixes with continuation on a positive-measure set. The mechanism relies on the no-local-time property for beliefs inside posted windows (Lemma~\ref{lem:no-local-time-belief}) and transversality at the triggers. This selection result is the key input into the equilibrium characterization with silence in Section~\ref{sec:equilibrium}, where Theorem~\ref{thm:silence} delivers the two-reset/no-mixing stationary Markov equilibrium.

\subsection{Discussion and links}
\label{subsec:discussion}

The Cox-clock microfoundation shows how a choice of disclosure tempo selects equilibria in a reputational real-options environment. This connects dynamic disclosure and information design \citep{guttman2014aer,orlov2020jpe} to S--s multi-trigger investment with costly reversibility \citep{dixitpindyck1994,bertola1994restud,abel1996restud} and aligns with empirical evidence on release and patch cadences \citep{arora2010isr,li2017ccs}. Full proofs of Lemma~\ref{lem:equivalence} and Proposition~\ref{prop:selection} are in Appendix~\ref{app:silence}. Section~\ref{sec:verification} provides the boundary-value verification, including the no-local-time lemma, and Section~\ref{sec:adoption} endogenizes adoption and its feedback on the firm's problem.


\section{Equilibrium with Silence: Two Resets, No Mixing}
\label{sec:equilibrium}

We fix the Cox-clock disclosure environment of Section~\ref{sec:micro_silence}. The firm has two reset options (patch and pivot) with targets $(z_1^\ast,z_2^\ast)$ and triggers $(\beta_1,\beta_2)$, and chooses a stationary publication-frequency policy $\lambda(\cdot)\in[0,\bar\lambda]$ with convex cost $k(\lambda)$ and $k(0)=0$. Flow payoffs may depend on public beliefs, but the private state $z$ evolves as in Section~\ref{subsec:timeline_info}, and $V$ solves the inaction HJB \eqref{eq:HJBclock} between interventions. The adoption cutoff $\alpha$ is taken as given here and is endogenized in Section~\ref{sec:adoption}.

\paragraph{Posted local silence around triggers.}
Let $\lambda_\varepsilon$ be any stationary policy that sets $\lambda=0$ on a small window around each trigger and $\lambda=\bar\lambda$ elsewhere:
\[
\lambda_\varepsilon(z)=
\begin{cases}
0, & z\in[\beta_1-\varepsilon,\beta_1+\varepsilon]\ \cup\ [\beta_2-\varepsilon,\beta_2+\varepsilon],\\[4pt]
\bar\lambda, & \text{otherwise},
\end{cases}
\qquad \varepsilon>0.
\]
By Lemma~\ref{lem:equivalence} (Appendix~\ref{app:silence}), the martingale part of public beliefs vanishes while $z_t$ lies in the posted windows. By the no-local-time lemma for beliefs (Lemma~\ref{lem:no-local-time-belief}, Appendix~\ref{app:lemmas}), paths do not accumulate at the window boundaries. Assumption~\ref{ass:selection} (transversality at triggers) then implies that each trigger admits a unique knife-edge point of indifference.

\begin{theorem}[Two pure resets and no interior mixing]
\label{thm:silence}
Fix primitives as in Section~\ref{sec:micro_silence} and Assumption~\ref{ass:selection}. There exists $\varepsilon_0>0$ such that for any posted windows with $\varepsilon\in(0,\varepsilon_0]$ the stationary Markov equilibrium under $\lambda_\varepsilon$ is pure and has the ladder form:
\begin{enumerate}[label=(\alph*),leftmargin=1.5em,itemsep=3pt,topsep=2pt]
\item \emph{Two resets.} There exist $\beta_1<\beta_2<\alpha$ and targets $z_1^\ast<z_2^\ast$ such that the firm resets from $\beta_1$ to $z_1^\ast$ (patch) and from $\beta_2$ to $z_2^\ast$ (pivot).

\item \emph{No interior mixing.} The firm does not mix between intervening and waiting on any set of positive Lebesgue measure in a neighborhood of the triggers. Intervention occurs exactly at $\beta_i$ for $i\in\{1,2\}$.

\item \emph{Regularity.} The value function $V$ is $C^1$ on $\mathbb{R}$ and $C^2$ on the diffusive inaction interval $(\beta_1,\beta_2)$. Boundary conditions at each trigger and its target are
\[
V(\beta_i^-)=V(z_i^\ast)-K_i,\qquad
V'(\beta_i^-)=0,\qquad
V'(z_i^\ast)=0,
\]
together with the interior ODE \eqref{eq:HJBclock}. The notation $(\beta_2,\alpha)$ refers to the belief-space block between the pivot trigger and the adoption cutoff; in private-state space, the unique diffusive inaction region is $[\beta_1,\beta_2]$.

\item \emph{Identification and uniqueness on blocks.} On each block $[\beta_i,\beta_{i+1})$ with $\beta_3:=\alpha$, the pair $(\beta_i,z_i^\ast)$ is uniquely pinned down by the verification system of Section~\ref{sec:verification}, that is, by the linear ODE together with value matching and smooth pasting at the source and the target.
\end{enumerate}
\end{theorem}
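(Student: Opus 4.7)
The plan is to address the four claims in order: (b) no mixing via the posted-silence selection result already in hand; (c) the boundary conditions via standard S--s first-order optimality; (a) the ordering via the cost structure $K_1<K_2$; and (d) identification and uniqueness via a block-wise BVP closure.

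First, I would invoke Proposition~\ref{prop:selection} in Appendix~\ref{app:silence} directly for claim (b). Under the posted local-silence policy $\lambda_\varepsilon$, Lemma~\ref{lem:equivalence} kills the compensated jump martingale in the public mean $m_t$ inside the windows, and Lemma~\ref{lem:no-local-time-belief} rules out local-time accumulation at the window boundaries. Together with transversality (Assumption~\ref{ass:selection}), these shut down the only mechanism that could sustain interior mixing in a neighborhood of either trigger; the firm therefore resets exactly at $\beta_1$ and $\beta_2$, never on a set of positive Lebesgue measure.

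Next, I would derive the boundary conditions in (c). Conditional on a pure two-trigger policy, the HJB \eqref{eq:HJBclock} holds with equality on $(\beta_1,\beta_2)$, so $V$ solves a linear second-order ODE and is automatically $C^2$ there. Value matching $V(\beta_i^-)=V(z_i^\ast)-K_i$ is forced by indifference at each knife edge. Because the post-impulse value $V(z_i^\ast)-K_i$ is constant in the pre-impulse state, the first-order condition in $\beta_i$ collapses to $V'(\beta_i^-)=0$ (smooth pasting), and the first-order condition in the jump target collapses to $V'(z_i^\ast)=0$ (target optimality). Gluing these conditions across the reset boundaries delivers the global $C^1$ claim. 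The ordering in (a) then follows from a single-crossing comparison using $K_1<K_2$ and strict concavity of $V(\cdot)-K_i$ near each interior target: the pivot is strictly costlier, so it is deferred past the patch trigger, giving $\beta_1<\beta_2$ and $z_1^\ast<z_2^\ast$. The position of $\alpha$ above $\beta_2$ is inherited from the belief-space block relaxation flagged after the statement.

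Finally, for (d) I would close the BVP by counting. The linear ODE on $[\beta_1,\beta_2]$ has two free constants; together with $(\beta_1,\beta_2,z_1^\ast,z_2^\ast)$ this yields six unknowns, matched by six boundary conditions (value matching and smooth pasting at each trigger, target optimality at each target). Explicit exponential solutions to the ODE plus strict monotonicity of the value-matching surfaces under transversality deliver local uniqueness, and a classical QVI verification argument upgrades this to uniqueness of the value within the stationary Markov class. The hard part will be the coupling between the two rungs through the shared ODE constants and through the global ordering constraint $\beta_1<\beta_2<\alpha$: preventing spurious disconnected solutions of the six-equation system requires a careful monotone comparison of Harrison--Taylor / Dixit--Pindyck type, adapted to coupled rungs as in \citet{buhai2014jbes}. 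I expect this to reduce to a single-parameter contraction once the silence windows have decoupled the two reset problems at the boundaries, but that is the technical core of the argument.
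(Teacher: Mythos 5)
Your core selection step follows the paper's own route: claims (b) and (c) are exactly Proposition~\ref{prop:selection} plus Lemmas~\ref{lem:equivalence} and~\ref{lem:no-local-time-belief}, with the standard impulse-control boundary system on the inaction band, and citing Proposition~\ref{prop:BVP-unique} would have disposed of (d) just as the paper does. But there are genuine gaps. First, you never explain where $\varepsilon_0$ comes from. In the paper's proof the smallness of the windows does real work twice: (i) the window must sit inside the punctured neighborhood of each $\beta_i$ on which the gain function $G(z)=[V(z_i^\ast)-K_i]-V(z)$ has a strict sign --- and note that this neighborhood is delivered by \emph{curvature}, $V''(\beta_i)\neq 0$, not by a first-order transversal crossing, since smooth pasting already forces $G'(\beta_i)=0$; your appeals to ``transversality'' and ``strict concavity'' gloss over this second-order structure; and (ii) the paper closes with a cost--benefit step showing the posted windows are consistent with equilibrium for small clock costs: expected residence time in the windows is $O(\varepsilon)$ (via the cycle-time bound), so the incremental clock cost is $O(\varepsilon\chi)$, while the mixing loss avoided is first order because indifference would require an intervention hazard diverging like $(z-\beta_i)^{-2}$. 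Your proposal contains no analogue of this step, so the ``there exists $\varepsilon_0$'' part of the statement is not actually established.

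Second, your argument for (a) misreads the geometry: in this model the patch fires when $z_t$ hits the \emph{lower} boundary $\beta_1$ from above and the pivot when $z_t$ hits the \emph{upper} boundary $\beta_2$ from below, so ``the pivot is costlier and hence deferred past the patch trigger'' is not a meaningful ordering argument --- the two triggers bound the same inaction band from opposite sides, and the assignment of patch to $\beta_1$ and pivot to $\beta_2$ is enforced by the QVI boundary equalities \eqref{eq:IC1}--\eqref{eq:IC2}, with the ordering and $\beta_2<\alpha$ inherited from the verification construction and the thin-posterior belief block, not from a single-crossing deferral comparison. Finally, for (d) you explicitly leave the coupled six-equation uniqueness (``the technical core'') unresolved and only conjecture a contraction; the paper's proof does not attempt this inline but inherits identification and uniqueness on each block from Proposition~\ref{prop:BVP-unique}, so as written your proposal proves less than the theorem claims on that item.
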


\begin{proof}[Proof (Appendix~\ref{app:proofs})]
The detailed proof is provided in Appendix~\ref{app:proofs}.
\end{proof}

The theorem states that once short, observable silence windows are posted around the triggers, the only stationary Markov equilibrium in this environment is a two-rung ladder: a lower patch trigger and target, an upper pivot trigger and target, and a single contiguous inaction band in the private state, with no mixing on any set of positive measure.

\begin{remark}[Tight windows and robustness]
Since $\lambda=0$ only on small posted windows, the induced change in running cost is $o(1)$ as $\varepsilon\downarrow 0$, because $k(0)=0$ and the residence time in the windows is $O(\varepsilon)$ by the no-local-time lemma. The locations of triggers and targets, and the no-mixing conclusion, are therefore stable to small deviations that allow $0<\lambda\ll 1$ on the windows; see Appendix~\ref{app:silence}.
\end{remark}

\section{Verification and Boundary-Value Problem}
\label{sec:verification}

We verify the two-reset release ladder by solving a boundary-value problem (BVP) for the value function and proving optimality via a quasi-variational inequality (QVI). The Cox–clock microfoundation of Section~\ref{sec:micro_silence} implies that, conditional on a posted publication rule $\lambda(\cdot)$, disclosure affects the firm's objective only through the flow cost $k(\lambda(z))$. The privately observed technical or reputational state $z_t$ follows the It\^o diffusion with generator
\[
(\mathcal{L}V)(z) \;=\; \mu_\theta(z)V'(z) \;+\; \tfrac12 \phi^2(z)V''(z),
\]
between discrete intervention times. The firm has two impulse actions $i\in\{1,2\}$ with fixed costs $K_1<K_2$ and endogenous targets $z_i^\ast$.

\subsection{QVI formulation}

Let the instantaneous net flow on inaction be $\pi(z) - k(\lambda(z))$, as in Section~\ref{sec:micro_silence}. Define the impulse operator
\[
(\mathcal{M}V)(z) \;=\; \max\{\, V(z_1^\ast) - K_1,\; V(z_2^\ast) - K_2 \,\}.
\]
The stationary Markov value $V$ solves the QVI
\begin{equation}
\max\Big\{\, rV(z) - (\mathcal{L}V)(z) - \pi(z) + k(\lambda(z)),\;\; V(z) - (\mathcal{M}V)(z) \,\Big\} \;=\; 0
\quad\text{for all }z.
\label{eq:QVI}
\end{equation}
Let the inaction region be
\[
I \;:=\; \{\, z : V(z) > (\mathcal{M}V)(z)\,\}
\]
with boundary $\partial I$. On $I$ the QVI reduces to the linear ODE
\begin{equation}
rV(z) \;=\; \pi(z) + (\mathcal{L}V)(z) - k(\lambda(z)), 
\qquad z\in I,
\label{eq:HJB-inaction}
\end{equation}
and $V = \mathcal{M}V$ on $\partial I$. With two impulses and a single contiguous inaction band for the private state $z$, we adopt the ladder geometry
\begin{equation}
\beta_1 < \beta_2,\qquad
z_1^\ast, z_2^\ast \in [\beta_1,\beta_2],\qquad
I = [\beta_1,\beta_2],
\label{eq:geometry}
\end{equation}
that is, a lower trigger $\beta_1$ at which the firm patches to $z_1^\ast$ at cost $K_1$, and an upper trigger $\beta_2$ at which it pivots to $z_2^\ast$ at cost $K_2$. The QVI formulation in~\eqref{eq:QVI} still allows either impulse to be chosen at any $z$; the assignment of the patch to $\beta_1$ and the pivot to $\beta_2$ will be pinned down by the boundary equalities $V=\mathcal{M}V$ in the verification step below.

\paragraph{Geometry and state space.}
Equation~\eqref{eq:geometry} is the core private-state geometry behind verification. The privately observed state $z_t$ is only ever allowed to diffuse on the single inaction interval $[\beta_1,\beta_2]$. When $z_t$ first hits $\beta_1$ from above, the firm immediately pays $K_1$ and jumps to $z_1^\ast\in[\beta_1,\beta_2]$. When $z_t$ first hits $\beta_2$ from below, it immediately pays $K_2$ and jumps to $z_2^\ast\in[\beta_1,\beta_2]$. Thus $z_t$ does not wander outside $[\beta_1,\beta_2]$ in diffusion time: at each first hit of either boundary, it is impulsed back into the band.

This is why the BVP and QVI are written on a single inaction band $I=[\beta_1,\beta_2]$, and why we impose interior regularity on $(\beta_1,\beta_2)$. By contrast, the adoption cutoff $\alpha$ introduced in Section~\ref{sec:adoption} is not an upper diffusion boundary for $z_t$. It is a boundary in belief space for the public posterior mean $m_t$. When we refer informally to two ``blocks'', such as $(\beta_1,\beta_2)$ and $(\beta_2,\alpha)$, the interval $(\beta_2,\alpha)$ is the belief-space block between the pivot trigger and the adoption cutoff under local silence. The only diffusive inaction region for $z_t$ itself is $[\beta_1,\beta_2]$, as already emphasized in Theorem~\ref{thm:silence}.

\subsection{Boundary conditions and the free boundary system}

Assume $V\in C^2\big((\beta_1,\beta_2)\big)$ and that $V$ is continuous on $\mathbb{R}$. The BVP couples the interior ODE~\eqref{eq:HJB-inaction} with four boundary conditions at the triggers and targets, plus two target optimality conditions.

\medskip

\noindent\emph{Value matching at triggers.} At each trigger, the firm is indifferent between continuing and paying the fixed cost to jump to the corresponding target:
\begin{align}
V(\beta_1) &= V(z_1^\ast) - K_1, \label{eq:VM1}\\
V(\beta_2) &= V(z_2^\ast) - K_2. \label{eq:VM2}
\end{align}

\noindent\emph{High contact (smooth pasting) at triggers.} Because $K_i$ does not depend on the pre-jump state, the envelope condition at each trigger implies first-order high contact:
\begin{align}
V'(\beta_1) &= 0, \label{eq:sp-beta1}\\
V'(\beta_2) &= 0. \label{eq:sp-beta2}
\end{align}
We use ``high contact'' here in the usual first-order sense $V'(\cdot)=0$ at the trigger, not in a higher-order tangency sense.

\noindent\emph{Target optimality.} Conditional on paying $K_i$, the target $z_i^\ast$ maximizes post-impulse continuation value net of the fixed cost. With no proportional move cost, this yields
\begin{align}
V'(z_1^\ast) &= 0, \label{eq:FOC1}\\
V'(z_2^\ast) &= 0. \label{eq:FOC2}
\end{align}
With proportional costs, \eqref{eq:sp-beta1}–\eqref{eq:FOC2} become the standard marginal equalities of S--s policies.

\medskip

\noindent\emph{Incentive compatibility at the triggers.}
The impulse operator in~\eqref{eq:QVI} allows the firm to choose either reset $i\in\{1,2\}$ at any state $z$. The ladder description, i.e. patch at $\beta_1$, pivot at $\beta_2$, is therefore not imposed as a primitive restriction; it is enforced by the QVI at the free boundaries. Once we have constructed a solution $(\beta_1,\beta_2;z_1^\ast,z_2^\ast;V)$ to \eqref{eq:HJB-inaction}–\eqref{eq:FOC2} and shown in Theorem~\ref{thm:verification} below that it satisfies the QVI~\eqref{eq:QVI}, the boundary equalities $V=\mathcal{M}V$ imply that, at the lower trigger,
\[
V(\beta_1) \;=\; (\mathcal{M}V)(\beta_1) \;=\; \max\{V(z_1^\ast)-K_1,\, V(z_2^\ast)-K_2\}.
\]
Combining this with value matching at $\beta_1$ in~\eqref{eq:VM1} yields
\begin{equation}
V(z_1^\ast)-K_1 \;\ge\; V(z_2^\ast)-K_2,
\label{eq:IC1}
\end{equation}
so impulse $1$ (the patch) is optimal at $\beta_1$ in the sense of the impulse operator $\mathcal{M}$. Analogously, at the upper trigger,
\[
V(\beta_2) \;=\; (\mathcal{M}V)(\beta_2) \;=\; \max\{V(z_1^\ast)-K_1,\, V(z_2^\ast)-K_2\}
\]
together with~\eqref{eq:VM2} implies
\begin{equation}
V(z_2^\ast)-K_2 \;\ge\; V(z_1^\ast)-K_1.
\label{eq:IC2}
\end{equation}
Thus the ladder geometry in~\eqref{eq:geometry} is incentive compatible with the QVI formulation: when the state first hits $\beta_1$ (respectively, $\beta_2$), it is indeed optimal to execute the patch (respectively, pivot) rather than the alternative impulse. Combining \eqref{eq:IC1} and \eqref{eq:IC2} also shows that the two impulses deliver the same net continuation value, $V(z_1^\ast)-K_1 = V(z_2^\ast)-K_2$, in the verified optimum.

\medskip

\noindent\emph{Growth and consistency.} We assume $\pi$ and $k$ are bounded and that there exists $\rho>0$ such that
\[
\limsup_{|z|\to\infty} e^{-\rho |z|}\,\big|V(z)\big| < \infty.
\]
Together with the geometry in~\eqref{eq:geometry}, this selects the economically relevant solution and rules out disconnected inaction bands.

\subsection{Existence and uniqueness of the BVP solution}

\begin{proposition}[Well posedness and uniqueness]
\label{prop:BVP-unique}
Suppose $r>0$, $\mu_\theta$ and $\phi$ are continuous on $[\beta_1,\beta_2]$ with $\phi$ bounded and bounded away from zero there, $\lambda(\cdot)$ is bounded, and $\pi$ is continuous. Then there exists at most one quadruple $(\beta_1,\beta_2;z_1^\ast,z_2^\ast)$ and function $V\in C^1(\mathbb{R})\cap C^2\big((\beta_1,\beta_2)\big)$ solving \eqref{eq:HJB-inaction}–\eqref{eq:FOC2} and the growth condition. If, in addition, $\pi$ is Lipschitz and $\mu_\theta,\phi$ are $C^1$, a solution exists and is unique.
\end{proposition}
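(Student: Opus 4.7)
The plan is to reduce the BVP to a finite algebraic system in the free boundaries and targets, then handle uniqueness and existence separately by leveraging the linearity of (\ref{eq:HJB-inaction}) and standard QVI theory. First I would fix a candidate band $(\beta_1,\beta_2)$ and write the general solution to the interior ODE as $V(z)=c_{+}\psi_{+}(z)+c_{-}\psi_{-}(z)+V_{p}(z)$, where $\psi_{\pm}$ are the two fundamental solutions of the homogeneous equation $r\psi=\mathcal{L}\psi$ (the standard increasing/decreasing pair associated with a nondegenerate diffusion when $r>0$ and $\phi$ is bounded away from $0$) and $V_{p}$ is a bounded particular solution pinned down by $\pi-k(\lambda(\cdot))$. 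The two smooth-pasting conditions (\ref{eq:sp-beta1})--(\ref{eq:sp-beta2}) form a $2\times 2$ linear system for $(c_{+},c_{-})$ whose determinant is a Wronskian-type expression, nonzero under $r>0$, so $V=V_{\beta_1,\beta_2}$ is uniquely determined by the band.

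Next I would pin down the targets from $V_{\beta_1,\beta_2}$. The target-optimality equations (\ref{eq:FOC1})--(\ref{eq:FOC2}) identify $z_i^{\ast}$ as interior critical points of $V_{\beta_1,\beta_2}$; Assumption~\ref{ass:selection} (transversality) ensures that these critical points are locally isolated and depend continuously on $(\beta_1,\beta_2)$, with the two local maxima selected by incentive compatibility (\ref{eq:IC1})--(\ref{eq:IC2}). What remains is the $2\times 2$ value-matching system
\[
G_i(\beta_1,\beta_2):=V_{\beta_1,\beta_2}\bigl(z_i^{\ast}(\beta_1,\beta_2)\bigr)-V_{\beta_1,\beta_2}(\beta_i)-K_i=0,\qquad i=1,2.
\]
For uniqueness I would compute the Jacobian of $(G_1,G_2)$ and show it is nonsingular. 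The envelope structure simplifies matters dramatically: because $V'$ vanishes at both $\beta_i$ and $z_i^{\ast}$, any derivative of $G_i$ through the dependence of $\beta_i$ or $z_i^{\ast}$ drops out, leaving only cross-terms whose sign is inherited from the Green's function of the resolvent $r-\mathcal{L}$ (sign-preserving under $r>0$). An elementary sign argument then forces the Jacobian to be sign-definite, so the reduced system admits at most one solution. An alternative route invokes a comparison principle for the QVI (\ref{eq:QVI}): two bounded solutions with differing bands would satisfy mutual sub/supersolution inequalities, and the maximum principle together with the growth condition would force them to coincide.

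For existence under the additional Lipschitz and $C^1$ hypotheses, I would take the standard QVI route: construct $V$ as the value function of the impulse control problem and invoke Bensoussan--Lions / Øksendal--Sulem impulse-control theory to obtain a globally $C^1$, piecewise-$C^2$ solution of (\ref{eq:QVI}), from which the free boundaries $\beta_i$ and the impulse map $z_i^{\ast}$ are extracted. More elementarily, one can verify directly that $(\beta_1,\beta_2)\mapsto(G_1,G_2)$ is continuous under the Lipschitz data and that each $G_i$ flips sign as the band width varies (narrow bands make a reset unattractive relative to the flow gain, wide bands make it worthwhile), so a degree-theoretic or Brouwer-type fixed-point argument delivers a zero. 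Globally, $V$ extends outside $[\beta_1,\beta_2]$ by the constants $V(z_i^{\ast})-K_i$ on the two impulse regions, inheriting $C^1$-regularity from smooth pasting and satisfying the growth condition trivially. The main obstacle is establishing the sign of the Jacobian in the uniqueness step: even with the envelope structure cancelling the worst terms, a careful analysis of the Green's function of $r-\mathcal{L}$ together with the transversality in Assumption~\ref{ass:selection} is needed to pin down nondegeneracy.
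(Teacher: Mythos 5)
Your overall architecture (reduce to the interior ODE's two-parameter solution family, use the derivative conditions to eliminate $(c_+,c_-)$ and the targets, and close with a $2\times2$ free-boundary system) is sensible, and in fact it parallels the only machinery the paper itself puts on record: the stated proof is deferred to Appendix~\ref{app:proofs}, where it does not actually appear, and the nearest analogue is the six-equation implicit-function-theorem computation of Lemma~\ref{lem:cs-constant}, which is explicitly local and constant-coefficient. That is exactly where your uniqueness step breaks: a sign-definite (or merely nonvanishing) Jacobian of $(G_1,G_2)$ yields only \emph{local} injectivity, not ``at most one solution.'' A planar map can have everywhere-positive Jacobian determinant and still fail to be injective, so the conclusion of global uniqueness requires an additional global tool (a Gale--Nikaido/P-matrix argument on a rectangle of admissible bands, global monotonicity of $(\beta_1,\beta_2)\mapsto(G_1,G_2)$, or a degree/index count), none of which is supplied. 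Your fallback sentence via a comparison principle for the QVI~\eqref{eq:QVI} does not rescue this as stated, because a solution of the BVP \eqref{eq:HJB-inaction}--\eqref{eq:FOC2} is not known to satisfy the QVI off the band: one must first verify the obstacle inequalities on the action regions, which is the content of Theorem~\ref{thm:verification} and goes beyond the six boundary conditions you are working with.

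There are also two subsidiary gaps. First, the reduction to a single-valued map $(\beta_1,\beta_2)\mapsto(z_1^\ast,z_2^\ast)$ is not innocent: for a fixed band, $V_{\beta_1,\beta_2}$ can have several interior critical points, and \eqref{eq:FOC1}--\eqref{eq:FOC2} alone do not select which of them serve as targets; you resolve the ambiguity by invoking Assumption~\ref{ass:selection} and the incentive conditions, but Assumption~\ref{ass:selection} is not among the hypotheses of the proposition, so a proof under the stated conditions must either dispense with it or handle the multiple branches directly (uniqueness of the quadruple is exactly a statement about ruling out other branches). Second, the determinant you call ``Wronskian-type'' is the two-point quantity $\psi_+'(\beta_1)\psi_-'(\beta_2)-\psi_-'(\beta_1)\psi_+'(\beta_2)$, not a Wronskian; it is in fact nonzero for $\beta_1<\beta_2$ (the ratio $\psi_-'/\psi_+'$ is strictly monotone, which follows from the Wronskian identity and the ODE, using $r>0$ and uniform ellipticity), but this needs the short computation, and the determinant degenerates as $\beta_2\downarrow\beta_1$, which matters for your existence argument. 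On existence, the claimed sign flips of $G_i$ as the band widens, and the extraction of exactly the two-rung geometry \eqref{eq:geometry} (a single contiguous inaction interval with two distinct targets and smooth fit) from general Bensoussan--Lions/\O ksendal--Sulem impulse-control theory, are asserted rather than established; the Lipschitz and $C^1$ hypotheses should enter precisely at these points, and currently they do no work in your sketch.
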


\begin{proof}[Proof (Appendix~\ref{app:proofs})]
A complete proof is provided in Appendix~\ref{app:proofs}.
\end{proof}

\subsection{Verification: optimality of the ladder policy}

\begin{theorem}[Verification for the two-reset ladder]
\label{thm:verification}
Let $(\beta_1,\beta_2;z_1^\ast,z_2^\ast;V)$ satisfy \eqref{eq:HJB-inaction}–\eqref{eq:FOC2}, the ordering~\eqref{eq:geometry}, the growth condition, and suppose the no-local-time lemma of Appendix~\ref{app:lemmas} holds under the publication policy $\lambda(\cdot)$ near $\beta_1,\beta_2$. Consider the stationary Markov control that
\begin{itemize}[leftmargin=1.5em,itemsep=2pt,topsep=2pt]
\item lets $z_t$ diffuse on $[\beta_1,\beta_2]$ according to $\mathcal{L}$ as long as $z_t\in(\beta_1,\beta_2)$;
\item at the first hitting time of $\beta_1$ from above, pays $K_1$ and resets $z_t$ instantly to $z_1^\ast\in[\beta_1,\beta_2]$;
\item at the first hitting time of $\beta_2$ from below, pays $K_2$ and resets $z_t$ instantly to $z_2^\ast\in[\beta_1,\beta_2]$.
\end{itemize}
This two-reset ladder is optimal among all admissible controls. Its value is $V$, and it solves the QVI~\eqref{eq:QVI}.
\end{theorem}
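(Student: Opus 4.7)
The plan is a standard impulse-control verification: I will show $V(z)\ge J(z;u)$ for every admissible $u$ and then verify that the proposed two-reset ladder achieves equality. Here $u = ((\tau_n,i_n))_{n\ge 1}$ encodes impulse times and labels on top of the posted disclosure rule $\lambda(\cdot)$, the controlled process is $z^u$, and the payoff is
\[
J(z;u) \;=\; \E_z\!\left[\int_0^\infty e^{-rt}\bigl(\pi(z_t^u)-k(\lambda(z_t^u))\bigr)\,\dd t \;-\; \sum_{n\ge 1} e^{-r\tau_n}\,K_{i_n}\right].
\]
The attained-equality step then yields both $V(z)=J(z;\text{ladder})$ and the QVI characterization~\eqref{eq:QVI}.

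For the upper bound I localize by $(T,\tau_N)$ (with $\tau_N$ the $N$-th impulse time) and apply a generalized It\^o formula to $e^{-rt}V(z_t^u)$. Since $V$ is $C^1$ on $\R$ and equals the constant $\mathcal{M}V$ on $\R\setminus[\beta_1,\beta_2]$, $V'$ is absolutely continuous and $V''$ is defined a.e., so Meyer's formula applies between impulses. Invoking the hypothesis-level no-local-time lemma for $z_t^u$ under $\lambda(\cdot)$ near $\beta_1,\beta_2$, together with smooth pasting $V'(\beta_i)=0$, to rule out Tanaka-type boundary contributions, the expansion reads
\[
e^{-rt}V(z_t^u) \;=\; V(z) + \int_0^t e^{-rs}(\mathcal{L}V-rV)(z_s^u)\,\dd s + M_t + \sum_{\tau_n\le t} e^{-r\tau_n}\,\Delta V_n,
\]
with $\Delta V_n := V(z_{\tau_n^+})-V(z_{\tau_n^-})$ and $M$ a local martingale. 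Branch one of~\eqref{eq:QVI} gives $rV-\mathcal{L}V\ge\pi-k(\lambda)$ pointwise, bounding the drift in the right direction; branch two combined with the impulse geometry gives $\Delta V_n \le K_{i_n}$ at each reset. Taking expectation after localization (so $\E M_{T\wedge\tau_N}=0$), rearranging, and sending $T,N\to\infty$ using $r>0$ and the growth bound on $V$ delivers $V(z)\ge J(z;u)$.

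Attainment by the ladder follows by tightening every inequality. On $(\beta_1,\beta_2)$ the interior ODE~\eqref{eq:HJB-inaction} is an equality, so the drift term exactly reproduces $-e^{-rs}(\pi-k(\lambda))(z_s)\,\dd s$. At each reset, value matching~\eqref{eq:VM1}--\eqref{eq:VM2} gives $\Delta V_n = K_{i_n}$, which precisely offsets the impulse cost $-K_{i_n}$ in $J$. The incentive-compatibility inequalities~\eqref{eq:IC1}--\eqref{eq:IC2} derived earlier confirm that impulse $1$ at $\beta_1$ and impulse $2$ at $\beta_2$ are the arg-max selections in $\mathcal{M}$, so the ladder is QVI-consistent. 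Combining, $J(z;\text{ladder})=V(z)$; since no admissible control can exceed this, the ladder is optimal and $V$ solves~\eqref{eq:QVI}.

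The main technical obstacle is the regularity of $V$ at the triggers. Smooth pasting makes $V$ globally $C^1$, but $V''$ jumps at $\beta_1,\beta_2$, so a naive Tanaka expansion for an arbitrary admissible path that crosses those points would produce boundary local-time terms. The no-local-time lemma of Appendix~\ref{app:lemmas} under the posted $\lambda(\cdot)$ is precisely the ingredient that kills those terms; with it in place, the Meyer--It\^o expansion collapses to the clean integrated form used above, and all remaining steps (localization of $M$, the $T,N\to\infty$ passage via the growth bound, zero expectation of the stochastic integral, and the incentive-compatible selection of the correct impulse at each trigger) are routine.
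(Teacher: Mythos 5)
Your overall architecture (Meyer--It\^o expansion of $e^{-rt}V(z_t^u)$, QVI inequalities for the upper bound, tight equalities under the ladder) is the standard verification route the paper appeals to, but as written it is circular at the decisive step. The theorem's hypotheses are only the boundary-value data \eqref{eq:HJB-inaction}--\eqref{eq:FOC2}, the ordering \eqref{eq:geometry} and the growth condition; that $V$ satisfies the QVI \eqref{eq:QVI} is part of the \emph{conclusion}. Yet your upper-bound step invokes ``branch one'' ($rV-\mathcal{L}V\ge \pi-k$) and ``branch two'' ($\Delta V_n\le K_{i_n}$, i.e.\ $V\ge\mathcal{M}V$ at the pre-impulse state) as given. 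The substantive content of the verification theorem is precisely to derive these two global inequalities from the BVP: (i) $V\ge \mathcal{M}V$ on all of $[\beta_1,\beta_2]$, i.e.\ that the common reset value $V(z_i^\ast)-K_i$ (equal across rungs by \eqref{eq:IC1}--\eqref{eq:IC2}) is attained only at the triggers and $V$ lies above it in the interior, which requires a shape/maximum-principle argument from value matching, smooth pasting, target optimality and the interior ODE; and (ii) the supersolution inequality $rV-\mathcal{L}V-\pi+k\ge 0$ on the region where $V=\mathcal{M}V$, including outside $[\beta_1,\beta_2]$ where you extend $V$ by the constant $\mathcal{M}V$ --- this matters because an arbitrary admissible deviation need not keep $z^u$ inside the band, and for a constant extension the inequality is a nontrivial condition relating $\pi$, $k$, $r$ and $\mathcal{M}V$ that must be checked, not assumed. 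Without (i) and (ii), $V(z)\ge J(z;u)$ is not established and the closing claim that $V$ solves \eqref{eq:QVI} is unearned. (You also silently read \eqref{eq:QVI} in the min-form $\min\{rV-\mathcal{L}V-\pi+k,\;V-\mathcal{M}V\}=0$, which is the intended content given $I=\{V>\mathcal{M}V\}$, but you should state that this is the form you verify.)

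Two smaller technical points. The no-local-time lemma (Lemma~\ref{lem:no-local-time-belief}) is a statement about the public-belief process $m_t$ inside posted silence windows; in the paper its role in verification is to underwrite smooth pasting/endpoint regularity and the selection of the pure ladder, not to clean up the It\^o expansion for the private state, so invoking it ``for $z_t^u$'' is a misattribution --- and it is unnecessary for that purpose: since smooth pasting makes $V$ globally $C^1$ with $V''$ locally bounded and discontinuous only on the Lebesgue-null set $\{\beta_1,\beta_2\}$, the Meyer--It\^o/occupation-time formula gives your expansion with no Tanaka terms (a jump in $V''$ does not create local-time contributions; only a kink in $V'$ would). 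Finally, on the attainment side you should note that the ladder produces finitely many resets on finite horizons with integrable discounted impulse costs (the cycle-time/no-Zeno bound, Lemma~\ref{lem:cycle-time-bound}); that is what licenses the passage $T,N\to\infty$ and the identity $J(z;\text{ladder})=V(z)$.
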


\begin{proof}[Proof (Appendix~\ref{app:proofs})]
A complete proof is provided in Appendix~\ref{app:proofs}.
\end{proof}

\subsection{Notes on regularity and comparative statics}

\paragraph{Endpoint regularity.}
Under local silence (Section~\ref{sec:micro_silence}), the martingale component of public beliefs vanishes in neighborhoods of $\beta_1$ and $\beta_2$. The no-local-time lemma in Appendix~\ref{app:lemmas} ensures that the sufficient statistic for beliefs does not accumulate local time at those boundaries. This justifies the smooth-pasting conditions \eqref{eq:sp-beta1}–\eqref{eq:sp-beta2} at the triggers. Because resets are implemented at first hitting times and jump $z_t$ back into $(\beta_1,\beta_2)$, the private state itself also does not accumulate local time at $\beta_1$ or $\beta_2$.

\paragraph{Costly reversibility and S--s comparative statics.}
In constant-coefficient examples, the usual S--s monotonicities obtain: the triggers and targets move outward with the fixed costs $K_i$, and the jump sizes increase (Appendix~\ref{app:lemmas}). Intuitively, a higher fixed cost widens the inaction band and makes each reset more lumpy. This is the ladder analogue of classic S--s policies with costly reversibility.\footnote{See, for example, \citet{dixitpindyck1994,bertola1994restud,abel1996restud,buhai2014jbes}.}

\paragraph{Selection via disclosure.}
Section~\ref{sec:micro_silence} and Appendix~\ref{app:silence} show that posted local silence windows eliminate interior mixing and select this pure two-reset ladder within the stationary Markov class. The inaction band is endogenously a single contiguous interval $[\beta_1,\beta_2]$ in the private state $z$, consistent with~\eqref{eq:geometry} and with the equilibrium characterization in Theorem~\ref{thm:silence}.

\section{Endogenous Adoption Side}
\label{sec:adoption}

We model a representative buyer (or platform) that observes the \emph{public} posterior $(m_t,v_t)$ generated by the publication technology in Section~\ref{sec:micro_silence}. The buyer chooses a stopping time $\tau$ at which to adopt, in order to maximize discounted surplus. The optimal rule will be a single threshold $\alpha$ in the public mean $m$, and that cutoff feeds back into the firm's payoff.

\subsection{Buyer's problem as optimal stopping on a PDMP}
\label{subsec:buyer_problem}

Let $S(m)$ denote the static surplus from adopting at public mean $m$ (for example, expected user benefit net of price or switching costs).

\begin{assumption}\label{ass:surplus}
$S:\R\to\R$ is $C^2$, strictly increasing, and weakly concave. The public state $(m_t,v_t)$ follows the piecewise-deterministic Markov process implied by Section~\ref{sec:micro_silence}: between publications,
\[
\dot m_t=\bar\mu(m_t),\qquad \dot v_t=\bar\gamma(m_t,v_t),
\]
and at a publication time beliefs jump according to the publicly observed disclosure intensity $\lambda(\cdot)$.
\end{assumption}

Before adoption the buyer earns zero; at $\tau$ the buyer receives the lump sum $S(m_\tau)$.\footnote{A perpetual post-adoption flow that is a function of $m_\tau$ yields the same stopping boundary. We keep the lump sum for clarity.} The value function is
\[
W(m,v)=\sup_{\tau\ge0}\E\!\left[e^{-r\tau}S(m_\tau)\,\middle|\, (m_0,v_0)=(m,v)\right].
\]

Let $A_\lambda$ be the generator of $(m_t,v_t)$ under a given publication policy $\lambda(\cdot)$, acting on smooth $f$:
\begin{equation}\label{eq:PDMP-generator}
(A_\lambda f)(m,v)=\bar\mu(m)\,\partial_m f(m,v)+\bar\gamma(m,v)\,\partial_v f(m,v)
+\lambda(m)\,\E\!\big[f(\mathcal{U}(m,v,\varepsilon))-f(m,v)\big],
\end{equation}
where $\mathcal{U}$ is the disclosed-belief update at a publication shock with noise $\varepsilon$. The buyer's problem satisfies the variational inequality
\begin{equation}\label{eq:QVI-buyer}
\max\Big\{\, r W(m,v)-A_\lambda W(m,v),\; S(m)-W(m,v)\,\Big\}=0.
\end{equation}

\subsection{Local silence and threshold structure}

We now state the equilibrium regularity that makes the stopping rule one dimensional.

\begin{assumption}[Local silence at the adoption margin]\label{ass:local_silence_alpha}
In equilibrium there is an open interval $U_\alpha=(\alpha-\delta,\alpha+\delta)$ in belief space such that the firm publicly sets $\lambda_t=0$ whenever $m_t\in U_\alpha$. While $m_t\in U_\alpha$ there are no new publications, so $(m_t,v_t)$ follows the deterministic ODE flow
\[
\dot m_t = \bar\mu(m_t),\qquad \dot v_t = \bar\gamma(m_t,v_t),
\]
with no jumps. In particular, inside $U_\alpha$ the future path of $m_t$ depends only on $m_t$ itself (not on $v_t$), and $m_t$ is continuous with bounded variation.

In addition, the window $U_\alpha$ is chosen so that the deterministic drift carries reputations that enter it from below up to the adoption point before they can exit through the left edge: for every $m\in(\alpha-\delta,\alpha)$ the solution of $\dot m_t=\bar\mu(m_t)$ with $m_0=m$ remains in $U_\alpha$ and hits $\alpha$ in finite time before reaching $\alpha-\delta$.
\end{assumption}

\begin{remark}[Economic rationale and relationship to earlier local silence near $\alpha$]
Assumption~\ref{ass:local_silence_alpha} is an equilibrium regularity restriction rather than a technological constraint, and it is consistent with the disclosure-tempo microfoundation in Section~\ref{sec:micro_silence} and with the firm's payoff aggregator in Assumption~\ref{ass:aggregator}. 
In particular, it is implemented by the same publication clock $\lambda(\cdot)$ that generates the local-silence bands around the private-state triggers $\beta_1$ and $\beta_2$ in Section~\ref{sec:micro_silence}. 
The equilibrium publication policy can therefore feature several disjoint clock-off bands: one around each $\beta_i$ in the private state and, in addition, the adoption-margin window $U_\alpha$ in public-belief space defined above. 
Assumption~\ref{ass:local_silence_alpha} simply selects equilibria in which the firm also shuts the clock off on $U_\alpha$; it is not an extra technological assumption beyond the earlier selection device. 

Once $m_t$ has entered a small neighborhood of the adoption boundary from below and is drifting upward according to $\dot m_t=\bar\mu(m_t)$, adoption is already imminent even if the publication clock is kept off: the deterministic flow carries $m_t$ to $\alpha$ in finite time.

Turning the clock on inside $U_\alpha$ adds jump risk without changing this local drift. Under the linear-Gaussian benchmark, belief jumps are driven by a mean-zero innovation, so additional disclosures in $U_\alpha$ generate a mean-preserving spread of the path of $m_t$ around the same deterministic trend. A sufficiently negative realization can push $m_t$ back below $U_\alpha$ and restart the approach to the boundary, delaying adoption by an additional spell, whereas a positive realization has little scope to improve on the already short drift time to $\alpha$ in a narrow window.

Under Assumption~\ref{ass:aggregator}, post-adoption surplus enters the firm's flow payoff with positive weight ($\eta>0$ and $\Lambda'(\cdot)\ge 0$), so the firm weakly prefers paths on which adoption occurs sooner in expectation. For a small enough window $U_\alpha$, this makes the choice $\lambda_t=0$ locally (weakly) profitable: it removes downside jump risk at the margin without sacrificing any systematic upside in expected time-to-adoption. Rather than re-solving the joint problem for the firm's optimal $\lambda(\cdot)$ at the adoption margin, we therefore focus on equilibria in which the firm takes this natural best reply and implements local silence on $U_\alpha$, working with the resulting deterministic approach of $m_t$ to $\alpha$ in that band.
\end{remark}

Assumption~\ref{ass:local_silence_alpha} has two consequences. First, near $\alpha$ the continuation value from any state $(m,v)$ with $m\in U_\alpha$ depends only on $m$, so we may write $W(m,v)=\widehat W(m)$ there. Second, all buyers who are just about to adopt face the same deterministic drift path for $m_t$, independent of $v$. This collapses the two-dimensional stopping problem to a scalar cutoff in $m$.

A sufficient condition for the last sentence of Assumption~\ref{ass:local_silence_alpha} is that $\bar\mu(m)\ge 0$ on $(\alpha-\delta,\alpha)$ and $\bar\mu$ is continuous at $\alpha$; the linear-drift benchmark in Section~\ref{subsec:linear_case} satisfies this sign condition.

\begin{proposition}\label{prop:alpha_threshold}
Under Assumption~\ref{ass:surplus}, the regularity in Section~\ref{sec:micro_silence} (bounded $\lambda$, Lipschitz $\bar\mu$), and Assumption~\ref{ass:local_silence_alpha}, there is a unique $\alpha\in\R$ such that the optimal stopping region is
\[
\Gamma=\{(m,v): m\ge\alpha\}.
\]
The value function $W$ is the unique continuous solution to \eqref{eq:QVI-buyer} satisfying
\begin{align}
\label{eq:buyer-ODE}
rW(m,v)=A_\lambda W(m,v)\qquad &\text{for } m<\alpha,\\
\label{eq:buyer-contfit}
W(\alpha,v)=S(\alpha)\qquad &\text{for all feasible boundary } v.
\end{align}
Moreover, because $\lambda(\cdot)=0$ on $U_\alpha$, $(m_t,v_t)$ approaches the boundary $\{m=\alpha\}$ from below along a deterministic path with $m_t$ continuous. Writing $\widehat W(m)$ for $W(m,v)$ on $U_\alpha$, smooth fit holds at $\alpha$:
\begin{equation}\label{eq:buyer-smoothfit}
\widehat W'(\alpha)=S'(\alpha).
\end{equation}
\end{proposition}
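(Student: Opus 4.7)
My plan is to verify the three claims in Proposition~\ref{prop:alpha_threshold} in the order (i) threshold structure with a scalar cutoff $\alpha$, (ii) the ODE characterization of the continuation value on $U_\alpha$, and (iii) smooth fit together with uniqueness of $\alpha$. Throughout I would treat existence and continuity of $W$ as a viscosity solution of the PDMP variational inequality \eqref{eq:QVI-buyer} as standard under bounded, Lipschitz $(\bar\mu,\bar\gamma,\lambda)$, Lipschitz $S$, and $r>0$; the substantive content is to pin down the optimal boundary and its regularity.

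First I would show that $W(\cdot,v)$ is non-decreasing in $m$ by a pathwise coupling: for $m<m'$ and fixed $v$, couple two copies of the PDMP using the same Brownian path and the same publication times, noting that the order $m_t\le m_t'$ is preserved by Lipschitz comparison for $\bar\mu$ between publications and by the monotonicity of the Gaussian update map \eqref{eq:belief-update} in the pre-jump mean at publication instants. Since $S$ is strictly increasing and $v$-independent, the Snell envelope $W$ inherits this monotonicity, so the continuation region is down-connected in $m$ and the stopping region has the form $\{m\ge\alpha(v)\}$. By Assumption~\ref{ass:local_silence_alpha}, on $U_\alpha$ the publication intensity is zero and the generator \eqref{eq:PDMP-generator} collapses to $\bar\mu(m)\,\partial_m$; the $v$-component evolves autonomously and does not enter the Bellman equation, so $W(m,v)=\widehat W(m)$ on $U_\alpha$ and the boundary $\alpha(v)$ is the constant $\alpha$. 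The ODE \eqref{eq:buyer-ODE} then reduces to $r\widehat W(m)=\bar\mu(m)\widehat W'(m)$, a first-order linear ODE whose unique solution compatible with the continuity fit $\widehat W(\alpha)=S(\alpha)$ is
\[
\widehat W(m)\;=\;S(\alpha)\exp\!\Big(-\int_m^\alpha \tfrac{r}{\bar\mu(s)}\,\dd s\Big),
\]
well-defined because $\bar\mu>0$ on $(\alpha-\delta,\alpha)$ by the sign condition stated after Assumption~\ref{ass:local_silence_alpha}.

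For smooth fit and uniqueness I would use a first-order variational argument exploiting the deterministic approach inside $U_\alpha$. Since $m_t$ reaches any candidate threshold $a'$ from $m<a'$ in the deterministic time $T(m;a')=\int_m^{a'}\dd s/\bar\mu(s)$, the candidate value from stopping at $a'$ is $V(m;a')=e^{-rT(m;a')}S(a')$, which is $C^1$ in $a'$. Differentiating in $a'$ and setting the derivative to zero at the optimum gives $S'(\alpha)=rS(\alpha)/\bar\mu(\alpha)$, and comparing with $\widehat W'(\alpha)=rS(\alpha)/\bar\mu(\alpha)$ read off directly from the ODE yields \eqref{eq:buyer-smoothfit}. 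Uniqueness of the root of $g(a):=S'(a)\bar\mu(a)-rS(a)$ then follows from weak concavity of $S$ (Assumption~\ref{ass:surplus}) and a standard sign-crossing argument at the endpoints of the feasible interval; the verification that the candidate $(\alpha,\widehat W)$ extends to a global supersolution of \eqref{eq:QVI-buyer} is routine given the monotonicity established in step~(i).

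The step I expect to be the main obstacle is the reduction of $\alpha(v)$ to a $v$-independent constant: the coupling argument delivers a threshold for each fixed $v$ but does not on its own exclude $v$-dependence of that threshold. The cleanest resolution is to read Assumption~\ref{ass:local_silence_alpha} as asserting that the equilibrium silence window $U_\alpha$ is posted in belief-mean space uniformly in $v$, so that the boundary is a level set of $m$ by construction; the coupling then only needs to verify that this candidate boundary is optimal, which follows from the QVI \eqref{eq:QVI-buyer} once the one-dimensional smooth fit holds and $W$ is shown to dominate $S$ strictly on the continuation region. This also secures the claim that the smooth-fit equation is a genuinely scalar equation in $\alpha$, as asserted in the statement.
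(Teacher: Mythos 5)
Your overall architecture matches the paper's: establish a threshold structure for each fixed $v$, use Assumption~\ref{ass:local_silence_alpha} to make the continuation problem on $U_\alpha$ one-dimensional in $m$ (so the cutoffs $\alpha(v)$ collapse to a scalar $\alpha$), solve the first-order ODE $r\widehat W=\bar\mu\,\widehat W'$ with value matching, and extract smooth fit. Two of your sub-arguments differ from the paper's and are fine: you obtain monotonicity via an explicit pathwise coupling of the PDMP (the paper simply invokes Snell-envelope arguments plus monotone $S$ to get the upper-set property), and you derive \eqref{eq:buyer-smoothfit} by a first-order variational condition over candidate thresholds $a'\mapsto e^{-rT(m;a')}S(a')$ rather than the paper's value-matching-plus-supercontact argument for bounded-variation dynamics; inside the silence window these are equivalent, and your explicit formula for $\widehat W$ is exactly the paper's worked expression \eqref{eq:W-deterministic}. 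Your resolution of the $v$-dependence issue is also substantively the same as the paper's Step 2: what does the work is not that the window is posted in $m$-space, but that inside $U_\alpha$ the generator has no jump term and no $v$-feedback into $m$, and the drift carries $m$ to $\alpha$ before the path can exit on the left, so the continuation value there is a function of $m$ alone.

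The one step that does not go through as written is your uniqueness argument. You propose uniqueness of $\alpha$ as uniqueness of the root of $g(a):=\bar\mu(a)S'(a)-rS(a)$, using weak concavity of $S$ and a sign-crossing argument. Concavity of $S$ alone does not make $a\mapsto\bar\mu(a)S'(a)$ monotone for a general Lipschitz $\bar\mu$ (the paper only proves root uniqueness under the linear-drift benchmark $\bar\mu(m)=\kappa(\bar m-m)$ in Section~\ref{subsec:linear_case}, where $\bar\mu$ is decreasing), so $g$ may have several zeros and your sign-crossing step can fail. The proposition's uniqueness claim does not need this: once the stopping region is shown to be an upper set in $m$ and to be $v$-free on $U_\alpha$, the cutoff is pinned down as the infimum of the set $\{W=S\}$, and the Snell envelope (hence $W$ and that set) is unique; this is the paper's Step 4, which uses only strict monotonicity of $S$ and the one-dimensional structure inside the window. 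Replacing your $g$-root argument by this envelope-based argument closes the gap; everything else in your proposal is sound.
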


\begin{proof}[Proof (Appendix~\ref{app:proofs})]
A complete proof is provided in Appendix~\ref{app:proofs}. Briefly, standard Snell-envelope arguments imply that for each fixed $v$ the optimal stopping set is an upper set in $m$, so it can be written $\{m\ge\alpha(v)\}$. Assumption~\ref{ass:local_silence_alpha} implies that inside $U_\alpha$ the continuation value depends only on $m$, not on $v$, so all $\alpha(v)$ coincide and equal some scalar $\alpha$. Inside $U_\alpha$, $m_t$ has bounded variation and no jumps, so $\widehat W$ solves a first-order ODE below $\alpha$ with value matching $\widehat W(\alpha)=S(\alpha)$ and supercontact, which yields smooth fit $\widehat W'(\alpha)=S'(\alpha)$. Uniqueness follows from the strict monotonicity of $S$ and linearity of the continuation problem.
\end{proof}

\paragraph{Interpretation.}
The buyer adopts once public reputation $m_t$ is high enough.

Local silence pins down a clean marginal condition at $\alpha$: near $\alpha$ the public mean $m_t$ drifts deterministically toward the cutoff, cannot exit the silence window on the left before hitting $\alpha$, and cannot jump across it. As a result the stopping problem at the margin is effectively one dimensional in $m$ and admits a single belief-mean cutoff $\alpha$ rather than a $v$-dependent surface.

From this point on we write $W(m)$ for the buyer's continuation value in the neighborhood of~$\alpha$, using the one-dimensional representation justified above.

\subsection{Feedback into the firm's problem}
\label{subsec:feedback}

\begin{assumption}\label{ass:aggregator}
Let $\varpi(m;\alpha)$ be the adoption intensity induced by the cutoff, with $\varpi(m;\alpha)=0$ for $m<\alpha$ and weakly increasing and right-continuous for $m\ge\alpha$. The firm's flow payoff is
\begin{equation}\label{eq:pi-with-adoption}
\pi(z,m)=\pi_0(z)+\eta\,\Lambda\!\big(\varpi(m;\alpha)\big),
\end{equation}
with $\eta>0$ and $\Lambda$ increasing and concave (for example, $\Lambda(x)=p\,x$ or $\Lambda(x)=p\,x-\tfrac{\kappa_p}{2}x^2$).
\end{assumption}

This creates a revenue kink at $\alpha$: adoption ramps up only after $m$ crosses the threshold, linking disclosure incentives back to Section~\ref{sec:micro_silence} and to the verification problem in Section~\ref{sec:verification}.

\subsection{Worked linear-drift case with local silence}
\label{subsec:linear_case}

We now derive a closed form for $\alpha$ when the drift of public belief is locally linear and local silence holds near the boundary.

\begin{assumption}\label{ass:LG}
On a neighborhood of $\alpha$: (i) $\dot m_t=\bar\mu(m_t)$ with $\bar\mu(m)=\kappa(\bar m-m)$, $\kappa>0$, and $\alpha\le\bar m$; (ii) $\lambda(m)=0$; (iii) $\dot v_t=\bar\gamma(m_t,v_t)$ is bounded.
\end{assumption}

\paragraph{Deterministic approach to the boundary.}
With $\lambda=0$ locally, $m_t$ solves $\dot m_t=\kappa(\bar m-m_t)$, so
\[
m(t;m_0)=\bar m-(\bar m-m_0)e^{-\kappa t},\qquad
\tau(m_0)=\frac{1}{\kappa}\log\!\Big(\frac{\bar m-m_0}{\bar m-\alpha}\Big).
\]
Hence for $m<\alpha$,
\begin{equation}\label{eq:W-deterministic}
W(m)=e^{-r\tau(m)}\,S(\alpha)
=\left(\frac{\bar m-\alpha}{\bar m-m}\right)^{\!r/\kappa} S(\alpha).
\end{equation}
The expression is independent of $v$ because Assumption~\ref{ass:local_silence_alpha} makes the continuation path for $m_t$ the same for all $v$ inside $U_\alpha$.

\paragraph{Smooth fit pins $\alpha$.}
Differentiate \eqref{eq:W-deterministic} and let $m\uparrow\alpha$. Using $\bar\mu(\alpha)=\kappa(\bar m-\alpha)$ we obtain
\[
W'(\alpha^-)=\frac{r}{\bar\mu(\alpha)}\,S(\alpha)=\frac{r}{\kappa(\bar m-\alpha)}\,S(\alpha).
\]
Smooth fit \eqref{eq:buyer-smoothfit} then implies
\begin{equation}\label{eq:alpha-general}
\bar\mu(\alpha)\,S'(\alpha)=r\,S(\alpha),
\end{equation}
which, under linear drift, becomes
\begin{equation}\label{eq:alpha-linear}
\kappa(\bar m-\alpha)\,S'(\alpha)=r\,S(\alpha).
\end{equation}

\paragraph{Closed form with linear surplus.}
If $S(m)=a m-p$ with $a>0$ and $p\ge0$, then $S'(\alpha)=a$ and \eqref{eq:alpha-linear} gives
\begin{equation}\label{eq:alpha-linear-linearS}
\alpha=\frac{\kappa\,\bar m+(r/a)\,p}{\kappa+r}.
\end{equation}
Higher $p$ raises $\alpha$ with weight $r/(\kappa+r)$; faster mean reversion $\kappa$ places more weight on $\bar m$.

\paragraph{General concave surplus.}
For strictly increasing and weakly concave $S$, \eqref{eq:alpha-linear} has a unique solution, and that solution satisfies $\alpha\le\bar m$ under a mild sign restriction.

To see uniqueness, define
\[
L(\alpha):=\kappa(\bar m-\alpha)\,S'(\alpha)
\quad\text{and}\quad
R(\alpha):=r\,S(\alpha).
\]
Because $S$ is strictly increasing, $S'(\alpha)>0$ for all $\alpha$, so $R(\alpha)$ is strictly increasing. Because $S$ is weakly concave, $S'(\alpha)$ is weakly decreasing in $\alpha$; multiplying by $(\bar m-\alpha)$, which is also decreasing in $\alpha$, implies that, on $\alpha\le\bar m$, $L(\alpha)$ is weakly decreasing. A decreasing $L$ and an increasing $R$ can cross at most once, so \eqref{eq:alpha-linear} admits at most one solution.

To locate that solution, note that $L(\bar m)=0$, so any $\alpha$ with $S(\alpha)>0$ must satisfy $\alpha<\bar m$ (otherwise $R(\alpha)=rS(\alpha)>0=L(\alpha)$ and there is no equality). Hence, if $S(\cdot)$ is nonnegative on $[\alpha,\bar m]$, which is natural when $S$ is interpreted as net benefit from adoption, the unique solution lies in $\alpha\le\bar m$.

\begin{remark}
If $\lambda(\cdot)$ is small but positive near $\alpha$, a first-order expansion of the jump term in \eqref{eq:PDMP-generator} gives
\[
\bar\mu(\alpha)\,S'(\alpha)=r\,S(\alpha)+O(\lambda(\alpha)),
\]
so the cutoff is robust to small deviations from exact local silence. We keep exact local silence because it underpins the selection result in Section~\ref{sec:micro_silence}.
\end{remark}

\subsection{Comparative statics and empirical links}
\label{subsec:adoption_empirics}

From \eqref{eq:alpha-general} we obtain sharp predictions. In the linear case \eqref{eq:alpha-linear-linearS},
\[
\frac{\partial \alpha}{\partial p}=\frac{r}{a(\kappa+r)}>0,
\qquad
\frac{\partial \alpha}{\partial \bar m}=\frac{\kappa}{\kappa+r}>0.
\]
Disclosure moves $\alpha$ through the local drift of beliefs (via $\bar\mu$) and through the presence or absence of jumps (local silence versus updates), echoing dynamic disclosure \citep{guttman2014aer,orlov2020jpe}. The adoption kink at $\alpha$ interacts with the release ladder in Sections~\ref{sec:verification} and \ref{sec:micro_silence} and maps to telemetry-style usage measures in empirical settings \citep{arora2010isr,li2017ccs}.

\section{Financing, Reversibility, and Distortion Bounds}
\label{sec:finance}

We now introduce perpetual debt with coupon $c_d>0$ and an equity–controlled default time $T^\ast$.
Let $A_t^{\FB}$ denote the all-equity (first-best) value at $t$, and let $(S_t,Y_t)$ be the equity
and debt values under leverage, given equity’s policy (impulses, publication intensity, and default).
The flow payoff under leverage is the same $\pi(z_t,m_t)$ from \eqref{eq:pi-with-adoption}; equity
services the coupon $c_d$ until default.

\subsection{Levered QVI: solvency, impulses, default}
\label{subsec:levered-QVI}

Let $S(z,m)$ be the stationary Markov equity value under leverage. Write $\mathcal{L}$ for the private
state generator from Section~\ref{sec:verification}, and $\lambda(\cdot)$ for the publication policy.
As in Section~\ref{sec:verification}, define the impulse operator
\[
(\mathcal{M}S)(z,m)
=
\max\{\,S(z_1^\ast,m)-K_1,\;S(z_2^\ast,m)-K_2\,\}.
\]
Leverage adds a third control: equity can default, after which equity receives $0$ and debtholders
seize control. With limited liability and seniority, the equity QVI is
\begin{equation}
\label{eq:QVI-equity}
\max\Big\{
rS
-
\mathcal{L}S
-
\big(\pi(z,m)-c_d\big)
+
k(\lambda(z)),
\;
S-\mathcal{M}S,
\;
-S
\Big\}=0.
\end{equation}
Default is optimal where the stopping term $-S=0$ binds, i.e.\ on $\{(z,m):S(z,m)=0\}$. On the
solvent set $\{S>0\}$, equity either (i) continues and satisfies the inaction HJB (the first term in
\eqref{eq:QVI-equity} equal to $0$, which is \eqref{eq:HJB-inaction} with $\pi$ replaced by
$\pi-c_d$), or (ii) triggers an impulse (value matching and high contact as in
Section~\ref{sec:verification}).

Debtholder value $Y$ is defined as the present value of coupons until default plus the continuation
value after default when control passes to debtholders. Appendix~\ref{app:financing} verifies this
formally and constructs $Y$ as the unique solution to its own linear stopping/control problem with
equity’s strategy treated as given.

\subsection{Switching at takeover and reversibility}
\label{subsec:takeover-map}

At default $T^\ast$, control passes to debtholders or to a buyer of the distressed assets. The acquirer
may reset the inherited state by paying a takeover switching cost. This reduced-form object is how
architectural reversibility at takeover enters the wedge.

\begin{definition}[Takeover switching cost map]
\label{def:takeover-map}
Associate to the two upgrade options a pair
$\boldsymbol{\phi}=(\phi_1,\phi_2)\in[0,\infty)^2$ such that, upon default, the acquirer can
implement the patch (respectively the pivot) on the inherited state by paying at most $\phi_1$
(respectively $\phi_2$). Write $\phi_{\max}:=\max\{\phi_1,\phi_2\}$. The map is \emph{tight} if there
is an acquirer policy that attains the first-best continuation after default while never paying more
than $\phi_i$ for option $i$.
\end{definition}

As intuitive interpretation, a small $\phi_i$ means action $i$ is highly reversible at takeover (for example,
modular architecture, documented rollback paths, separable mitigations). In the limit $\phi_i=0$,
the acquirer can realign the asset with the first-best post-default target at zero additional cost.

\subsection{Debt-insensitive patches and a least-irreversibility bound}
\label{subsec:bound}

Let $S_t+Y_t$ be the total market value of the levered firm given equity's strategy, and let
$A_t^{\FB}$ be the all-equity benchmark evaluated at the same primitive state $(z_t,m_t)$.

\begin{proposition}[Debt-insensitive patches and a least-irreversibility bound]
\label{prop:debt-bound}
Suppose the takeover map in Definition~\ref{def:takeover-map} is tight. Then for any levered policy
(impulses, publication intensity, and default time $T^\ast$),
\begin{equation}
\label{eq:wedge-bound}
A_t^{\FB}
-
\big(S_t+Y_t\big)
\;\le\;
\E_t\!\Big[
e^{-r(T^\ast-t)}\,\phi_{\max}
\Big].
\end{equation}

In particular, along any history on which (i) the process is managed purely by \emph{patches} (the
low-cost rung) and never requires a pivot before default, and (ii) the next-best acquirer at default
can costlessly implement that same patch, i.e.\ $\phi_1=0$, we obtain
\[
A_t^{\FB}
=
S_t+Y_t
\quad
\text{on that history.}
\]
Leverage does not reduce total surplus on that “patch block” history. If moreover
$\phi_2\downarrow 0$ (for example because increasing modularity makes the pivot equally reversible
at takeover), then $\phi_{\max}\to 0$, the right-hand side of \eqref{eq:wedge-bound} vanishes, and
leverage cannot distort the release ladder at all.
\end{proposition}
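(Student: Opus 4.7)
The plan is to decompose both values at the default time $T^\ast$, argue that equity's pre-default controls coincide with the first-best controls up to a state-independent coupon offset, and then reduce the entire wedge to a single post-default term that tightness of the takeover map bounds by $\phi_{\max}$.

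I would first apply the strong Markov property at $T^\ast$ to write
\[
A^{\FB}_t = \E_t\!\left[\int_t^{T^\ast} e^{-r(s-t)}\bigl(\pi(z_s,m_s)-k(\lambda_s)\bigr)\,\mathrm{d}s - \Sigma^{\FB}_{t,T^\ast} + e^{-r(T^\ast-t)}A^{\FB}_{T^\ast}\right],
\]
where $\Sigma^{\FB}_{t,T^\ast}$ collects discounted first-best impulse costs, and analogously
\[
S_t+Y_t = \E_t\!\left[\int_t^{T^\ast} e^{-r(s-t)}\bigl(\pi(z_s,m_s)-k(\lambda_s)\bigr)\,\mathrm{d}s - \Sigma^{L}_{t,T^\ast} + e^{-r(T^\ast-t)}V^{\mathrm{acq}}_{T^\ast}\right],
\]
using that pre-default the coupon $c_d$ is a pure transfer so equity's flow $\pi-c_d$ and debt's flow $c_d$ aggregate back to $\pi$, and that $V^{\mathrm{acq}}_{T^\ast}$ denotes the post-default continuation value realized by the acquirer.

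Second, and crucially, I would show that the two integrands and impulse streams coincide pathwise on $[t,T^\ast)$. On the solvent region $\{S>0\}$, the equity inaction ODE implied by~\eqref{eq:QVI-equity} is the first-best HJB with $\pi$ replaced by $\pi-c_d$. Because $c_d$ is state-independent, $S = A^{\FB}-c_d/r$ solves it; since a constant offset has zero first derivative, value matching together with the high-contact and target-optimality conditions \eqref{eq:sp-beta1}--\eqref{eq:FOC2} pin down the same triggers $(\beta_1,\beta_2)$ and targets $(z_1^\ast,z_2^\ast)$ as in Section~\ref{sec:verification}; the publication FOC for $\lambda(\cdot)$ is likewise unaffected because $k$ does not interact with $c_d$. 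Hence pre-default the levered sample paths and impulse cost streams match first-best, so the integrals and $\Sigma$ terms cancel across the two decompositions. Tightness of the takeover map (Definition~\ref{def:takeover-map}) gives $V^{\mathrm{acq}}_{T^\ast}\ge A^{\FB}_{T^\ast}-\phi_{\max}$, and subtracting yields
\[
A^{\FB}_t-(S_t+Y_t) = \E_t\!\left[e^{-r(T^\ast-t)}\bigl(A^{\FB}_{T^\ast}-V^{\mathrm{acq}}_{T^\ast}\bigr)\right] \le \E_t\!\left[e^{-r(T^\ast-t)}\phi_{\max}\right],
\]
which is~\eqref{eq:wedge-bound}. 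The patch-only claim then follows by specializing to histories along which only the patch option is ever exercised and only a patch realignment is needed at takeover: with $\phi_1=0$ the acquirer attains $V^{\mathrm{acq}}_{T^\ast}=A^{\FB}_{T^\ast}$ exactly, so the wedge vanishes. The $\phi_2\downarrow 0$ statement is immediate since $\phi_{\max}\to 0$ in the right-hand side.

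The step I expect to be the main obstacle is making the pre-default coincidence rigorous when the absorbing default barrier $\{S=0\}$ is near the inaction band. The decomposition $S=A^{\FB}-c_d/r$ holds strictly on the solvent continuation region, and one must verify that the first-best triggers $(\beta_1,\beta_2)$ lie in that region (so solvency is slack at the impulse boundaries) and that the high-contact equalities survive the addition of the default barrier. I would handle this by applying the QVI verification of Section~\ref{sec:verification} to $S$ with default treated as a third exercise region, using the same no-local-time lemma to justify smooth pasting at $(\beta_1,\beta_2)$; any residual distortion that could arise if solvency bound exactly at an impulse boundary is absorbed into the right-hand side of~\eqref{eq:wedge-bound} as an additional slack that vanishes in the tight-map limit.
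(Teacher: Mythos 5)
Your decomposition at $T^\ast$, the aggregation of the coupon flows back into $\pi$, and the takeover-envelope step ($V^{\mathrm{acq}}_{T^\ast}\ge A^{\FB}_{T^\ast}-\phi_{\max}$ from tightness) all line up with the paper's argument (cf.\ Lemma~\ref{lem:takeover-envelope-fin} and the representation \eqref{eq:Y-def}), and the specializations ($\phi_1=0$ on patch-only histories, $\phi_2\downarrow 0$) are fine once the main inequality is available. The genuine gap is your central Step 2: the claim that on the solvent region $S=A^{\FB}-c_d/r$, so that the levered triggers, targets, and publication policy coincide pathwise with first best before default and the pre-default integrals and impulse streams cancel. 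That constant-offset ansatz ignores the third term $-S$ in the equity QVI \eqref{eq:QVI-equity}: with limited liability, $S$ solves a problem with an additional default/stopping option and the boundary condition $S=0$ on an endogenous frontier, so $A^{\FB}-c_d/r$ is not the equity value except far from that frontier. Near the default boundary and near the pivot trigger, $Y$ is not flat and equity's value-matching and smooth-pasting conditions shift away from first best --- this is exactly the ``where leverage can bite'' region the paper emphasizes in Section~\ref{subsec:levered-implications}, and it is why the proposition's bound is economically nontrivial (pivot timing \emph{can} be distorted, just boundedly so). Your proposed repair --- absorbing any residual pre-default distortion into the right-hand side ``as slack that vanishes in the tight-map limit'' --- does not work: tightness of the takeover map only controls the \emph{post-default} realignment cost and says nothing about pre-default behavior.

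The paper avoids this step entirely. It never claims pathwise coincidence with first best; instead it uses that $A^{\FB}$ is the first-best value to show that $e^{-r(s-t)}A^{\FB}(z_s,m_s)+\int_t^s e^{-r(u-t)}\pi(z_u,m_u)\,\dd u$ is a supermartingale under \emph{any} admissible (in particular the levered) policy, which defines a nonnegative pre-default agency wedge $\mathcal{A}_t$ in \eqref{eq:agency-def} and yields the decomposition \eqref{eq:wedge-decomp}, namely $A_t^{\FB}-(S_t+Y_t)\le \mathcal{A}_t+\E_t[e^{-r(T^\ast-t)}\phi_{\max}]$. The clean bound \eqref{eq:wedge-bound} is then obtained precisely when $\mathcal{A}_t=0$, i.e.\ when equity's pre-default policy happens to coincide with first best (as on the safe patch block, where default is off path), rather than by asserting that coincidence globally. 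To fix your proof you would either need to reproduce this supermartingale/agency-wedge argument, or restrict the statement to histories on which the pre-default policy is verifiably first best --- the constant-offset argument alone cannot deliver that.
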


Our bound parallels \citet{manso2008ecta}, Prop.~1: replacing generic technology switches by a
tight takeover map yields
\[
A_t^{\FB}-(S_t+Y_t)\;\le\; \E_t\!\big[e^{-r(T^\ast-t)}\,\phi_{\max}\big],
\]
with $\phi_{\max}=\max\{\phi_1,\phi_2\}$ the least-reversibility takeover cost across the two rungs.

\begin{proof}[Proof (Appendix~\ref{app:financing})]
A complete proof is provided in Appendix~\ref{app:financing}, via the takeover envelope in
Lemma~\ref{lem:takeover-envelope-fin}.
\end{proof}

The takeaway is that leverage cannot create a first-order distortion where ex-post reversibility is high. Debt pressure
matters only through the least reversible reset option. In typical ladder applications this is the pivot,
not the incremental patch. As modularity improves (lower $\phi_2$), the levered equilibrium
approaches the first-best ladder.

\subsection{Implications for triggers and targets under leverage}
\label{subsec:levered-implications}

Let $(\beta_1,\beta_2;z_1^\ast,z_2^\ast)$ be the first-best ladder from Section~\ref{sec:verification},
where $\beta_1$ is the patch trigger and $z_1^\ast$ the post-patch target, and $\beta_2$ and $z_2^\ast$
are the pivot analogues.

The core question is: does leverage move $(\beta_1,z_1^\ast)$? Intuitively, patches should be
debt-neutral. We now state conditions under which that statement is literally
correct and show explicitly how the boundary conditions align.

\medskip
\noindent\textbf{Step 1. Isolating the safe patch block.}
Define the \emph{patch block} as the region of the state space and strategy path on which the firm:
\begin{enumerate}[label=(\alph*),leftmargin=1.5em,itemsep=2pt,topsep=2pt]
\item keeps $z_t$ inside $(\beta_1,\beta_2)$ by instantaneously patching to $z_1^\ast$ whenever
$z_t$ hits $\beta_1$ from above; and
\item never reaches the pivot trigger $\beta_2$ or a default boundary before time $t$.
\end{enumerate}
This is de facto a maintenance-mode history: the firm is repeatedly issuing cheap, reversible
tweaks (patches) to keep performance inside the lower rung and never gets so bad that it must
pivot, nor so bad that creditors force a restructuring.

Appendix~\ref{app:financing} shows that, under standard parameterizations, the equity default
boundary $S=0$ lies strictly below $\beta_1$. Hence, on such a patch-block history, default is
strictly off path in equilibrium: equity never chooses default while $z_t$ is being actively reflected
at $\beta_1$, and creditors do not expect to seize control before a pivot becomes relevant.

Formally, on these histories we have $T^\ast=\infty$ with probability one conditional on staying
in the patch block. Intuitively, if the firm never drifts into the (high-cost) pivot region, it also never
defaults.

\medskip
\noindent\textbf{Step 2. Debt is riskless, and locally flat, on the safe patch block.}
Fix such a history. Since default never arrives along that path, debtholders receive the coupon flow
$c_d$ forever and are never diluted or impaired by a takeover. From their point of view, conditional
on the firm remaining in the patch block, the cash flow is a riskless perpetuity and their continuation
value there must satisfy
\[
rY(z,m) = c_d
\quad\text{on that safe patch block.}
\]

Therefore, \emph{conditional on remaining in the safe patch block},
\[
Y(z,m)=\frac{c_d}{r}
\quad\text{and}\quad
Y_z(z,m)=0
\quad
\text{for all } z\in(\beta_1,\beta_2)
\text{ and corresponding } m.
\]
Crucially, $Y$ is locally constant under this conditioning. Intuitively, as long as equity keeps the
state in the “patch-only, solvent” region, debtholders simply collect the same coupon forever, and
the exact timing of those patches does not change their present value.\footnote{If the patch block
were ever close enough to default that coupon risk mattered, then $Y$ would inherit slope
$Y_z\neq 0$, and the neutrality result below would, and should, fail. Outside the safe patch block,
the global Markov value $Y$ is generally \emph{not} flat because it integrates over paths that reach
the pivot trigger or the default frontier. In our baseline calibration and in
Appendix~\ref{app:financing}, we keep $\beta_1$ strictly above any default boundary, so this safe
patch-block assumption holds on the range we match to telemetry.}

\medskip
\noindent\textbf{Step 3. Equity’s boundary conditions coincide with first best on the safe patch block.}
Total surplus under leverage is $S+Y$. By Proposition~\ref{prop:debt-bound} and $\phi_1=0$, we
have $S+Y=A^{\FB}$ along the safe patch block: there is no wedge between levered value and
first best on these histories.

Write $S=A^{\FB}-Y$ and substitute into equity’s value-matching and smooth-pasting conditions
at the patch trigger and target. In first best, the patch trigger and target $(\beta_1,z_1^\ast)$ are
pinned by
\begin{align*}
\text{(i) Value matching:}\quad
A^{\FB}(\beta_1)
&=
A^{\FB}(z_1^\ast)-K_1,\\[2pt]
\text{(ii) High contact / smooth pasting:}\quad
A^{\FB}_z(\beta_1)
&=
A^{\FB}_z(z_1^\ast)
=
0.
\end{align*}
Equity under leverage solves the same algebra but with $S$ in place of $A^{\FB}$. Using
$S=A^{\FB}-Y$,
\begin{align*}
\text{Value matching for $S$:}\quad
S(\beta_1)
&=
S(z_1^\ast)-K_1\\
\iff\quad
A^{\FB}(\beta_1)-Y(\beta_1)
&=
A^{\FB}(z_1^\ast)-Y(z_1^\ast)-K_1.
\end{align*}
On the safe patch block, $Y(\beta_1)=Y(z_1^\ast)=c_d/r$, so equity’s value-matching condition
collapses to the first-best one.

Likewise, high contact for $S$ requires
\[
S_z(\beta_1)
=
S_z(z_1^\ast)
=
0
\quad\iff\quad
A^{\FB}_z(\beta_1)-Y_z(\beta_1)
=
A^{\FB}_z(z_1^\ast)-Y_z(z_1^\ast)
=
0.
\]
On the safe patch block, $Y_z(\beta_1)=Y_z(z_1^\ast)=0$, so
\[
A^{\FB}_z(\beta_1)
=
A^{\FB}_z(z_1^\ast)
=
0,
\]
which is exactly the first-best smooth-pasting condition.

In words: once we make explicit that (a) default is off path on the safe patch block and (b)
$\phi_1=0$ means the acquirer can costlessly implement the same patch at takeover if default ever
did occur, debt is effectively riskless on that block. Debt value is a flat perpetuity $c_d/r$, so
equity’s boundary system for $(\beta_1,z_1^\ast)$ is the first-best boundary system.

\medskip
\noindent\textbf{Conclusion for the patch rung.}
Putting Steps 1–3 together:
\begin{itemize}[leftmargin=1.5em,itemsep=2pt,topsep=2pt]
\item \textbf{Patch block is debt-neutral.} On solvent histories that remain in the safe patch block
and never require a pivot before any default event, and when $\phi_1=0$, leverage does not distort
the patch rung. The equity-chosen patch trigger $\beta_1$ and target $z_1^\ast$ coincide with the
first-best ones.

\item \textbf{Where leverage can bite.} Outside that safe patch block—near $\beta_2$ where a
high-cost pivot looms, or near the endogenous default boundary where $S\downarrow0$—debt is no
longer riskless, so $Y$ is no longer flat. Equity’s boundary conditions can then tilt away from first
best because $Y(\cdot)$ and $Y_z(\cdot)$ enter value matching and smooth pasting. The distortion is
bounded by \eqref{eq:wedge-bound}, and its magnitude is governed by the least-reversible move,
$\phi_{\max}$.
\end{itemize}

\subsection{Tightness and comparative statics}
\label{subsec:tightness-CS}

\begin{remark}[Tightness]
Inequality \eqref{eq:wedge-bound} is sharp. Construct a case with: (i) a pivot needed strictly before
the default boundary in first best, (ii) equity sets $T^\ast$ just before that pivot because of $c_d$,
and (iii) at takeover the acquirer pays exactly $\phi_2$ to pivot and then follows the first-best
continuation. Then
\[
A_t^{\FB}-(S_t+Y_t)
=
\E_t\!\left[e^{-r(T^\ast-t)}\phi_2\right].
\]
\end{remark}

\begin{remark}[Comparative statics]
The wedge in \eqref{eq:wedge-bound} is weakly increasing in $\phi_{\max}$ and in default pressure
induced by $c_d$, and decreasing in $r$. Engineering choices that make large architectural resets
easy to unwind (lower $\phi_2$ via modular interfaces, separable mitigations, or auditable rollback)
shrink the wedge and push the levered equilibrium toward the first-best ladder. In particular, when
$\phi_1=0$ and $\beta_1$ lies safely above any default boundary, the maintenance or patch rung is
predicted to be debt-insensitive in both timing and target. The only remaining distortion margin is
the high-cost pivot rung, and its importance scales with $\phi_2$.
\end{remark}

\section{Empirical Predictions and Measurement Blueprint}
\label{sec:empirical}

We map the primitives of local silence (Section~\ref{sec:micro_silence}), the adoption cutoff $\alpha$ (Section~\ref{sec:adoption}), and financing with reversibility (Section~\ref{sec:finance}) to falsifiable signatures in firm-published telemetry. The section stays lightweight: we define the main objects and observables, state signatures S1--S5, and display the key estimating equations. Data construction, parsing, and estimation details are in Appendix~\ref{app:empirics}.

\subsection{Objects and observables}
\label{subsec:objects}

\paragraph{Unit of observation.}
The baseline panel is firm $i$ by calendar month $t$, together with an event-time index $\tau$ centered on focal releases or pivots. Where product- or model-line disclosures are separable, the unit refines to product/module $p$ and we work with $(i,p,t)$ or $(i,p,\tau)$.

\paragraph{Telemetry feeds.}
We restrict attention to \emph{firm-controlled} channels: evaluation and benchmark posts, model cards, release notes and version-bump announcements, incident/advisory writeups, and vendor-authored security or safety bulletins. Third-party news, analyst notes, leaks, and market prices are \emph{not} treated as disclosure signals; the theory speaks to the firm’s own publication clock.

\paragraph{Signals and event taxonomy.}
A document is counted as a public signal if it contains (i) a quantitative performance or safety claim, or (ii) a versioned change (new model, new policy, new mitigation). Identification uses high-recall keyword filters plus a light classifier, with manual adjudication around high-salience events (Appendix~\ref{app:empirics}).

Each dated disclosure event is tagged as \emph{patch}, \emph{pivot}, \emph{release}, or \emph{other}. We interpret:
\begin{itemize}[leftmargin=1.5em,itemsep=1pt,topsep=1pt]
\item \emph{patch}: hotfixes, fine-tunes, latency/context/pricing tweaks, incremental mitigations (\emph{reversible rung});
\item \emph{pivot}: architectural/base-model changes, modality-stack changes, large rewrites (\emph{high-cost rung});
\item \emph{release}: marketing/product language such as ``v4,'' ``next-gen model,'' or ``full rollout'' that is operationally indistinguishable from a pivot;
\item \emph{other}: communications that do not map cleanly to either rung.
\end{itemize}
In estimation we pool \emph{pivot} and \emph{release} into a single \emph{major reset} class; on the same timestamp, \emph{pivot/release} takes precedence over \emph{patch}.

\paragraph{Core outcomes.}

Let $N_{it}$ be the monthly count of firm-authored signals for firm $i$. Our proxy for the publication clock is
\[
\hat\lambda_{it}:=N_{it},
\]
with event-time analogue $\hat\lambda_{i\tau}$ counting signals in month $\tau$ relative to a focal major reset.

For each signal $s$ disclosed by firm $i$ in month $t$, we extract a metric value $x_s$ (e.g., pass@1, latency, mitigation severity, token cost) and map it to a benchmark family $F_s$ (``coding accuracy,'' ``RAG latency,'' ``safety mitigation''). We standardize within family using pooled mean $\bar{x}_F$ and s.d.\ $\sigma_F$ across all firms/months,
\[
z_s := \frac{x_s-\bar{x}_{F_s}}{\sigma_{F_s}},
\]
and define the within-month dispersion proxy
\begin{equation}
\label{eq:telemetry-variance}
\widehat{\Var}^{x}_{it}
:=
\begin{cases}
\Var\bigl(\{z_s:\, s\text{ disclosed by $i$ in month $t$}\}\bigr), & N_{it}\ge 2,\\[4pt]
0, & N_{it}=1,\\[4pt]
\text{missing}, & N_{it}=0.
\end{cases}
\end{equation}
Let $\widehat{\Var}^{x}_{i\tau}$ denote the event-time analogue. In the model, outside observers see noisy draws from a one-dimensional latent state arriving on a controlled Cox clock; a pre-reset ``dispersion dip'' in $\widehat{\Var}^{x}$ reflects a shutdown of the clock (silence), not a change in intrinsic noise.

As a cadence-only robustness measure that does not use metric values, Appendix~\ref{app:empirics} also constructs $\widehat{\Var}^{\mathrm{time}}_{it}$, the intra-month interquartile range of timestamps for firm $i$ in month $t$ (and $\widehat{\Var}^{\mathrm{time}}_{i\tau}$ in event time), and patch hazards based on inter-event durations in an event-time window around each major reset (pre- and post-).

\subsection{Signatures S1--S5}
\label{subsec:signatures}

We now summarize the main empirical signatures implied by the theory. Estimands and estimators are pre-specified in Appendix~\ref{app:empirics}.

\paragraph{S1. Pre-reset cadence and dispersion dip.}

\emph{Prediction.} In the quiet pre-reset window implied by local silence (Section~\ref{sec:micro_silence}), both $\hat\lambda_{i\tau}$ and $\widehat{\Var}^{x}_{i\tau}$ fall; at the reset they display discrete positive shifts. A cadence-only measure $\widehat{\Var}^{\mathrm{time}}_{i\tau}$ should mirror the dip.

\emph{Design.} Stacked event studies of $\hat\lambda_{i\tau}$ and $\widehat{\Var}^{x}_{i\tau}$ around major resets,
\begin{align}
\label{eq:event-lambda-spec}
\hat\lambda_{i\tau}
&=
\sum_{\ell\in\mathcal{L}}
\beta^{\lambda}_\ell\,\mathbf{1}\{\tau=\ell\}
+
\alpha_i
+
\gamma_{\text{cal}(\tau)}
+
\varepsilon_{i\tau},
\\
\label{eq:event-var-spec}
\widehat{\Var}^{x}_{i\tau}
&=
\sum_{\ell\in\mathcal{L}}
\beta^{\Var}_\ell\,\mathbf{1}\{\tau=\ell\}
+
\alpha_i
+
\gamma_{\text{cal}(\tau)}
+
\eta_{i\tau},
\end{align}
omitting $\ell=-1$. Here $\alpha_i$ are firm fixed effects and $\gamma_{\text{cal}(\tau)}$ are calendar-time effects (e.g., month or quarter-by-year). Appendix~\ref{app:empirics} provides cadence-only robustness using $\widehat{\Var}^{\mathrm{time}}_{i\tau}$. A direct falsification of S1 is flat or rising cadence/dispersion in the pre-reset window after controlling for firm and calendar effects.

\paragraph{S2. Two-plateau outcome distribution.}

\emph{Prediction.} Post-event performance, conditional on event class, is bimodal with mass near the equilibrium targets $(z_1^\ast,z_2^\ast)$: one plateau after reversible patches, one after major resets.

\emph{Design.} Mixture or multimodality tests on post-event metrics stratified by \emph{patch} versus \emph{pivot/release} (e.g., finite mixtures, Hartigan dip tests). Falsification: robust unimodality after audited classification.

\paragraph{S3. Debt-insensitive patch timing under high reversibility.}

\emph{Prediction.} Within high-reversibility blocks, leverage does not shift patch hazards: the patch rung is debt-neutral (Section~\ref{sec:finance}). Leverage effects may load on the high-cost pivot rung instead.

\emph{Design.} Hazard models for patch arrivals with leverage and a reversibility proxy $\mathrm{RevProxy}_i$,
\begin{equation}
\label{eq:debt-insensitivity-spec}
h_{it}
=
\exp\!\Bigl\{
\rho_0
+
\rho_1\,\text{Leverage}_i
+
\rho_2\,\mathrm{RevProxy}_i
+
\rho_3\,\text{Leverage}_i\times\mathrm{RevProxy}_i
+
W'_{it}\kappa
\Bigr\},
\end{equation}
where $h_{it}$ is the (discrete-time) patch hazard for firm $i$ in period $t$ and $W_{it}$ collects controls. The conditional null in the high-$\mathrm{RevProxy}$ block is
\[
\frac{\partial\log h_{it}}{\partial\text{Leverage}_i}
=
\rho_1+\rho_3\bar R
\approx 0
\quad\text{for large }\bar R.
\]

In the S3 test we focus on a ``high-reversibility'' subsample and summarize it by a scalar
$\bar R$, defined as the sample mean of $\mathrm{RevProxy}_i$ among observations in the high-$\mathrm{RevProxy}$
block (e.g.\ the top quartile of the $\mathrm{RevProxy}_i$ distribution). The derivative of
$\log h_{it}$ with respect to leverage at $\mathrm{RevProxy}_i=\bar R$ is then $\rho_1+\rho_3\bar R$, so
$H_0:\rho_1+\rho_3\bar R=0$ tests for zero leverage effect on patch timing at a representative
high-reversibility rung.

Falsification: significant leverage effects on patch timing even when reversibility is measured to be high.

\paragraph{S4. Patch cascade dynamics.}

\emph{Prediction.} The reversible rung fires repeatedly inside its band: patch intensity is elevated \emph{after} a major reset but not before, with no pre-reset bunching once events are aligned in event time.

\emph{Design.} Duration models for inter-patch times $\Delta_{ik}$ in an event-time window around each major reset (pre- and post-), with a post-reset indicator and covariates. A convenient representation is a Cox-type hazard,

\begin{equation}
\label{eq:hazard-spec}
h_{ik}(t)
=
h_0(t)\exp\Bigl\{
\rho\,\mathbf{1}\{\text{post-major-reset}\}
+
Z'_{ik}\xi
\Bigr\},
\end{equation}

where $h_0(t)$ is an unspecified baseline, $k$ indexes patches in the cascade, $\mathbf{1}\{\text{post-major-reset}\}$ is defined in event time (equal to $0$ for intervals with $\tau<0$ and $1$ for intervals with $\tau\ge 0$), and $Z_{ik}$ includes reversibility proxies and financing variables. Falsification: symmetric pre- and post-reset bunching in $\Delta_{ik}$ once major resets are aligned in event time.

\paragraph{S5. Adoption boundary and silence depth.}

\emph{Prediction.} The adoption cutoff $\alpha$ satisfies
\begin{equation}
\label{eq:alpha-general-again}
\bar\mu(\alpha)\,S'(\alpha)=r\,S(\alpha),
\end{equation}
as derived in Section~\ref{sec:adoption}. Deeper pre-release silence (a larger cadence dip in $\hat\lambda_{i\tau}$) produces a sharper observed uptake jump at $\alpha$ (fewer ``jump-overs''). Because this smooth-fit characterization relies on local silence at the adoption boundary (so that the publication clock is shut off and beliefs approach $\alpha$ deterministically without jumps), stronger pre-release silence makes realized adoption decisions track the theoretical cutoff more tightly and yields a cleaner discontinuity with fewer cross-boundary observations.

\emph{Design.} Discontinuity designs around inferred $\alpha$ using platform/API uptake, with measures of pre-release silence depth as moderators. Falsification: statistically similar adoption slopes irrespective of measured silence depth.

\subsection{Measurement choices, falsification, and status}
\label{subsec:status}

\paragraph{Proxies, instruments, and controls.}
Reversibility proxies aggregate rollback tooling and feature gates, modular-dependency shares, and separable safety mitigations into $\mathrm{RevProxy}_i$, which maps to takeover switching costs and $\varphi_{\max}$ in Section~\ref{sec:finance}. Financing variables (leverage, coupon burdens, maturity cliffs) come from balance sheets and funding rounds. Appendix~\ref{app:empirics} discusses candidate shifters of publication cost that move $\lambda_t$ (e.g., platform policies or audit mandates) and a standard set of controls (firm size, product breadth, benchmark-calendar dummies, platform fixed effects, and calendar time effects).

\paragraph{Threats and diagnostics.}
Key threats are misclassification, nonstationary benchmark calendars, and correlated firm-month shocks. Diagnostics include leads and placebo dates in S1, strict versus broad signal sets, competitor-month controls, leave-one-firm-out checks, and alternative count models and event windows. Appendix~\ref{app:empirics} makes explicit what would overturn each prediction S1--S5.

\section{Policy Design and Industry Implications}
\label{sec:policy_design}

The model isolates two levers with first-order welfare effects:
(i) \emph{disclosure tempo} (Section~\ref{sec:micro_silence}); and
(ii) \emph{reversibility and modularity} that lower takeover switching costs (Section~\ref{sec:finance}).
These levers operate through two core results:
the selection result (local silence eliminates interior mixing and delivers pure reset thresholds),
and the wedge bound in~\eqref{eq:wedge-bound}.

Throughout this section we take \emph{welfare} to mean the sum of (a) the firm's continuation value $V$ and (b) downstream buyer/platform surplus $W$ from adoption.
For a given stationary Markov environment define
\[
\mathcal{W} \equiv V + W,
\]
where $V$ is the value function that solves the firm's QVI (Section~\ref{sec:verification}) and $W$ is the buyer value from the optimal stopping problem in Section~\ref{sec:adoption}.
This is the natural total-surplus object for our setting.
Importantly, $W$ is pinned by the adoption cutoff $\alpha$ and by the deterministic drift of public beliefs $m_t$ as they approach $\alpha$ (Proposition~\ref{prop:alpha_threshold}).

\subsection{Disclosure tempo: selecting without chilling}

Local silence (a posted, publicly observed window with $\lambda_t=0$) removes interior mixing and knife-edge cycling while leaving the adoption cutoff intact (Sections~\ref{sec:micro_silence} and~\ref{sec:adoption}).
A policy instrument can implement the same logic without broad opacity:

\begin{itemize}[leftmargin=1.25em,itemsep=3pt]
\item \textbf{Quiet-period safe harbor with guardrails.}
Permit firm-specific, pre-announced silence windows that are:
(a) short and bounded in both length and frequency,
(b) publicly posted as a rule for $\lambda(\cdot)$,
and (c) asynchronously staggered across firms.
The goal is to permit local silence near knife-edges while keeping the aggregate flow of public information nondegenerate.

\item \textbf{Clock transparency, not content control.}
Regulate the \emph{frequency protocol}, not the content of any specific disclosure.
The object of interest is the martingale component in $m_t$, not the wording of individual announcements.
\end{itemize}

\noindent\emph{Sufficient-statistic rule.}
Consider an intervention with two testable requirements:

\smallskip
\noindent (i) The firm must pre-announce local silence windows (i.e.\ $\lambda_t=0$ in a small neighborhood around each reset trigger), so that, by Lemma~\ref{lem:equivalence} and Proposition~\ref{prop:selection}, interior mixing is ruled out and the firm follows pure reset thresholds instead of randomizing or cycling.

\smallskip
\noindent (ii) The intervention caps the total share of calendar time spent with $\lambda_t=0$ by a small bound, so most of the horizon still features $\lambda_t>0$ and public signals continue to arrive at a regular cadence.

\smallskip
Under (i), the firm's continuation value $V$ weakly \emph{increases}.
Theorem~\ref{thm:silence} shows that local silence eliminates the first-order mixing loss at each trigger and replaces knife-edge randomization with a unique reset at a unique trigger and target, without changing the inaction ODE between interventions (see proof of Theorem~\ref{thm:silence}).
Hence, the firm no longer wastes value dithering at the boundary.

Under (ii), the buyer's adoption cutoff and surplus $W$ are essentially unchanged.
Proposition~\ref{prop:alpha_threshold} shows that as long as the local drift of $m_t$ and the deterministic approach to $\alpha$ are preserved, the buyer still solves a one-dimensional stopping problem with a unique scalar cutoff $\alpha$, and the implied value function is determined by the same smooth-fit logic at that cutoff.
Because the added silence windows occupy at most an $\varepsilon$-fraction of calendar time, the induced change in the jump intensity $\lambda_t$ perturbs the buyer's HJB operator only by $O(\varepsilon)$,\footnote{Formally, letting $W^\varepsilon$ denote the buyer's value under a disclosure protocol in which $\lambda_t=0$ on a set of times of measure at most $\varepsilon$, and $W^0$ the baseline value, standard stability results for viscosity solutions imply $\sup_m |W^\varepsilon(m)-W^0(m)|=O(\varepsilon)$ under our regularity assumptions on $\lambda(\cdot)$ and on the drift of $m_t$.}
so the resulting value function $W$ moves only by $O(\varepsilon)$ while the cutoff $\alpha$ remains exactly fixed.
Short, bounded, publicly declared quiet windows suppress high-frequency jump risk around knife-edges but do not starve the market of information on average, so the buyer continues to observe a belief process with the same drift and the same stopping boundary, and our first-order welfare comparison for $W$ is unaffected.

Combining (i) and (ii), total surplus $\mathcal{W}=V+W$ weakly rises relative to an environment with no oversight.
$V$ rises because silence near triggers kills interior mixing.
$W$ does not fall because the adoption margin and the induced post-adoption payoffs are preserved.
Both parts of (i)--(ii) are directly testable in telemetry (Section~\ref{sec:empirical}):
(i) appears as short, localized pre-release dips in cadence and dispersion of firm-published signals,
and (ii) appears as a bound on the fraction of calendar time in which $\hat\lambda_{it}=0$.

\subsection{Modularity and reversibility: shrinking the wedge}

The financing wedge is bounded by expected least irreversibility at takeover,
$\E\!\big[e^{-r(T^\ast-t)}\varphi_{\max}\big]$ (cf.\ \citet{manso2008ecta}, Prop.~1),
where $\varphi_{\max}=\max\{\varphi_1,\varphi_2\}$ collects takeover switching costs for patch and pivot (Proposition~\ref{prop:debt-bound}).
Policy can act directly on $\varphi_{\max}$:

\begin{itemize}[leftmargin=1.25em,itemsep=3pt]
\item \textbf{Reversibility certification.}
Certify auditable rollback and separable deployment pipelines (versioned artifacts, feature gates, modular mitigations, canary rollout paths).
A lower certified $\varphi_2$ tightens the takeover bound and brings pivot timing under leverage closer to first best.
Patches are already debt-insensitive when $\varphi_1\approx 0$.

\item \textbf{Covenant design aligned with reversibility.}
Encourage debt covenants that grant pivot grace periods when certified reversibility is high.
If the asset can be cleanly realigned at default, there is less reason for creditors to force an early default $T^\ast$ just to seize and retool the asset.
This implements the same wedge logic by shifting $T^\ast$ only where $\varphi_{\max}$ is small.
\end{itemize}

\noindent\emph{Sufficient-statistic rule.}
Any intervention that strictly lowers $\varphi_{\max}$ (holding primitives fixed) weakly increases total surplus $\mathcal{W}$.
By Proposition~\ref{prop:debt-bound}, a lower $\varphi_{\max}$ tightens the bound on $A^{\FB}_t-(S_t+Y_t)$:
leverage destroys less value relative to first best.
Since $S_t+Y_t$ is the levered firm's total market value and $A^{\FB}_t$ is the all-equity benchmark, shrinking the bound moves the levered ladder's reset thresholds closer to first best, which weakly raises $S_t+Y_t$ (the equity slice $S_t$ cannot be worse off, since it could replicate its old ladder while debt gains).

On the buyer side, the comparative statics are also one sided.
Section~\ref{sec:adoption} takes downstream surplus at adoption to be $S(m_\tau)$, where $\tau$ is the stopping time at which the public belief $m_t$ first hits the cutoff $\alpha$ characterized in Proposition~\ref{prop:alpha_threshold}.
Under the local-silence assumption at the adoption margin (Assumption~5), $m_t$ has deterministic drift as it approaches $\alpha$ and the adoption region is $\{m\ge\alpha\}$, so $m_\tau=\alpha$ at the stopping time.
Buyer surplus can therefore be written as
\[
W(m) \;=\; \E\!\big[e^{-r\tau(m)}S(\alpha)\big],
\]
with $S(\cdot)$ strictly increasing in $m$ by Assumption~4.
Holding primitives and the induced cutoff $\alpha$ fixed, $W(m)$ is weakly decreasing in the adoption delay $\tau(m)$: an earlier (or equal) adoption date weakly raises $W$.

In our ladder environment, the patch rung is already debt-neutral ($\varphi_1=0$) and Proposition~\ref{prop:debt-bound} implies that leverage can only distort the high-cost pivot rung by \emph{postponing}, never accelerating, the costly pivot relative to the all-equity ladder.
Under local silence at the adoption margin, the belief mean $m_t$ drifts deterministically toward $\alpha$ and does not pick up martingale noise there (Section~\ref{sec:micro_silence}).
Eliminating the leverage distortion by lowering $\varphi_{\max}$ therefore either leaves the adoption time $\tau(m)$ unchanged or brings it forward: the pivot happens no later than before, the induced belief path approaches $\alpha$ at least as fast, and the buyer faces the same drift environment near the boundary.
Given that $S(\cdot)$ is increasing and $W(m)=\E[e^{-r\tau(m)}S(\alpha)]$, bringing reset timing closer to first best cannot reduce (and generically raises) realized downstream surplus.
In this sense, shrinking $\varphi_{\max}$ weakly increases both the firm's total market value $S_t+Y_t$ and downstream buyer surplus $W$.

\subsection{Implementation cautions and falsifiers}

\begin{itemize}[leftmargin=1.25em,itemsep=3pt]
\item \textbf{Anti-coordination risk.}
Industry-wide synchronized silence is welfare-reducing.
The safe harbor must require asynchronous timing and a cap on aggregate silent mass.

\item \textbf{Overbroad quiet periods.}
Long or opaque windows undermine identification at the adoption boundary (Section~\ref{sec:adoption}) and are detectable ex post by the absence (not presence) of the short, local cadence and variance dips highlighted in Section~\ref{sec:empirical}.

\item \textbf{Certification without teeth.}
A reversibility label is only useful if it encodes operational tests (time to rollback, dependency isolation, auditability) that predict a lower $\varphi_{\max}$.
If certification does not move $\varphi_{\max}$, the bound in~\eqref{eq:wedge-bound} does not move.
\end{itemize}

\noindent\emph{Bottom line.}
Design governs the clock; engineering governs reversibility.
A disclosure-tempo rule with guardrails (short, posted, bounded quiet windows plus real-time clock transparency) and a reversibility certification regime that lowers $\varphi_{\max}$ jointly implement the selection and wedge results in a minimal way.
By construction both levers \emph{weakly raise} $\mathcal{W}=V+W$ relative to natural unregulated benchmarks:
(i) a world with uncontrolled interior mixing around triggers; and
(ii) a world with leverage layered on top of high, poorly certified takeover switching costs.

\section{Conclusion}
\label{sec:conclusion}

We develop a compact theory of AI release cadence built on three primitives: reputational learning from disclosed performance and safety signals; a two-rung ladder of real options (a cheap, reversible \emph{patch} and a costlier, less reversible \emph{pivot}); and firm control over a publicly observed publication clock that can be locally turned off (``local silence''). Their interaction delivers a simple but tightly disciplined picture of release behavior: a two-rung reset ladder with no interior mixing, and a financing wedge that is pinned entirely by irreversibility.

Two structural results do the main work. First, a predictable publication-frequency rule that posts short, observable clock-off windows \emph{selects} pure reset equilibria. Once the firm can visibly set $\lambda_t = 0$ in a narrow band, the martingale part of public beliefs shuts down on that band, beliefs drift deterministically, and the firm can no longer support knife-edge randomization or cycling. Equilibrium collapses to a clean ladder: two triggers and two jump targets, with no interior mixing. The resulting value function is characterized by a transparent boundary-value system with value matching and high contact (smooth pasting) at the triggers and target optimality at the targets, and we verify existence and uniqueness (within stationary Markov strategies) via a standard QVI/BVP argument in Section~\ref{sec:verification}. The technical lever is the no-local-time property inside posted silence windows, formalized in Appendix~\ref{app:lemmas}.

Second, financing frictions matter only where irreversibility lives. Introducing leverage with an equity-controlled default time and senior debtholders generates a tight bound \emph{in expected present value}: the gap between first-best all-equity value and the levered total (equity $+$ debt) is bounded by the \emph{expected discounted} takeover switching cost of the least reversible rung (Section~\ref{sec:finance}). As long as the low-cost patch is easily reversible at takeover, debt is effectively neutral on that ``patch block'': patch timing (trigger and target) is debt-insensitive, and total surplus along that block coincides with first best. Any distortion must appear at the pivot rung, and even there it is tightly bounded by the expected takeover cost of forcing that pivot. Engineering choices that lower that takeover cost (i.e.\ reduce $\varphi_{\max}$) directly compress the financing wedge.

From an economic standpoint, the theory rationalizes the cadence seen in AI labs and platform model vendors: long-ish quiet spells followed by visible jumps; frequent incremental patches that keep systems in ``maintenance mode'' and look operational rather than existential; and rarer pivots that feel architectural, financing-relevant, and politically contentious. It also delivers sharp comparative statics. Patches should be common and largely debt-insensitive. Pivots should be rare, and when leverage does bend timing it should do so only through the hardest-to-reverse move. Disclosure tempo and modularity, not raw diffusion noise or loosely defined ``hype cycles'', are the discipline levers.

Methodologically, the paper links three literatures. A posted publication-frequency protocol, i.e., a predictable Cox clock with occasional $\lambda_t = 0$ windows, implements a disclosure instrument that turns off the \emph{martingale component} of public beliefs on a vanishing set and thereby selects clean, non-mixing equilibria. That selection logic connects dynamic disclosure and information design to classic S--s impulse control with costly reversibility, but without mixed actions or knife-edge cycling. On the verification side, we adapt the S--s boundary-value/QVI machinery to a two-reset ladder under posted silence, using the no-local-time property at the boundaries to justify smooth pasting and to establish uniqueness of the endogenous triggers and targets, while target optimality at the jump destinations follows from the usual first-order condition that maximizes the post-impulse continuation value. On the financing side, we show how leverage wedges can be mapped to takeover switching costs, yielding a least-irreversibility bound in the spirit of agency-with-irreversibility results, but tailored to a multi-rung release ladder.

The framework produces concrete, testable signatures in firm-published telemetry (Section~\ref{sec:empirical}). We emphasize three: (i) a \emph{pre-release cadence dip} in the firm’s own disclosures (publication intensity falls and intra-month dispersion in disclosed metrics collapses to zero or missing) just before a major reset, consistent with a local clock-off window; (ii) \emph{two post-release plateaus} in disclosed performance and safety metrics, with mass near the two endogenous targets, consistent with a patch rung and a pivot rung rather than a smooth ramp; and (iii) \emph{debt-insensitive patch timing} in high-reversibility regimes, alongside tighter leverage effects around pivots, consistent with the takeover-bound logic. By design, these tests rely on firm-authored signals (evaluation cards, release notes, mitigation advisories) rather than option-implied volatility or broad market chatter: the object of the theory is how the firm gates \emph{its own} outward signal flow, not how markets price event risk.

The empirical component is intentionally scoped as a \emph{measurement and falsification blueprint}. We state sharp predictions, define observables tied to the sufficient statistics of the model (clock-off windows and reversibility), and pre-specify estimators and robustness protocols (Section~\ref{sec:empirical}; Appendix~\ref{app:empirics}). Executing the data build and estimation lies outside the present manuscript, but the designs are self-contained and directly refutable without further modeling choices. Framing the empirics this way makes the theory falsifiable in principle and portable across settings, including beyond AI products.

Several extensions appear first order and tractable within the same QVI/BVP architecture: (i) multi-firm disclosure games with asynchronous quiet windows, spillovers, and imitation pressure; (ii) partially observed clocks, where outsiders infer $\lambda_t$ rather than observe it directly; (iii) proportional/move costs and additional rungs beyond patch/pivot; and (iv) richer adoption blocks with pricing, heterogeneous buyers, and strategic holdout. Each preserves the paper’s sufficient-statistic logic: govern the clock locally to eliminate mixing; engineer reversibility to compress the financing wedge.

In short, disclosure tempo is the \emph{selection lever}, and modularity is the \emph{wedge lever}. A regulator or platform steward that (a) permits short, posted, bounded quiet windows with real-time clock transparency and (b) certifies or subsidizes modular rollback and separability is, in our model, acting directly on the two margins that determine welfare.


\appendix

\section{Proofs}
\label{app:proofs}

\subsection*{Proof of \Cref{thm:silence} (silence eliminates interior mixing)}

Fix $\delta>0$ and, for each trigger $\beta_i$ ($i=1,2$), post a local \emph{silence window}
\[
I_\delta(\beta_i)\;:=\;\{z:\,|z-\beta_i|\le \delta\}
\]
by setting $\lambda(z)=0$ on $\bigcup_i I_\delta(\beta_i)$ and $\lambda(z)=\bar\lambda$ elsewhere, as in Assumption~\ref{ass:silence_window_app}. We work throughout with stationary Markov strategies under this posted clock rule.

\medskip

\noindent\textit{Coordinate remark (state-space hygiene).}
In the main text we sometimes describe ``local silence’’ as a window in the public belief $m$, i.e.\ a neighborhood in which the Cox clock is shut off and the martingale part of $m_t$ vanishes.
Here, and in Assumption~\ref{ass:silence_window_app}, we implement silence as a window in the \emph{private} state $z$: we set $\lambda(z)=0$ whenever $z\in I_\delta(\beta_i)$.
Because the realized intensity $\lambda_t=\lambda(z_t)$ is predictable in $\mathcal{F}^P_t$, the public \emph{observes} when $\lambda_t=0$.

\emph{Terminology.}
Statements such as ``on $\bigcup_i I_\delta(\beta_i)$ the belief $m_t$ has finite variation’’ always mean \emph{at those calendar times $t$ for which the realized path satisfies $z_t\in \bigcup_i I_\delta(\beta_i)$}.
On that set of times the Cox clock is off, so no new public signal arrives and the \emph{martingale part} of $m_t$ is off.
Thus a $z$--window with $\lambda(z)=0$ implements the same ``local silence’’ in $m_t$ that we use in the selection argument.

\medskip

By \Cref{lem:equivalence} (Appendix~\ref{app:silence}), whenever $z_t\in \bigcup_i I_\delta(\beta_i)$ the public belief $m_t$ has \emph{finite variation} on that time set: in the Doob--Meyer decomposition of $m_t$, the compensated jump martingale has zero quadratic variation when $\lambda_t=0$.
By the no--local--time lemma for beliefs (\Cref{lem:no-local-time-belief}, Appendix~\ref{app:lemmas}), $m_t$ accrues no local time at the endpoints of the posted windows while the windows are in force.
These statements pertain to $m_t$ only; they do not constrain the private diffusion $z_t$, which continues to carry its Brownian martingale component.

\medskip

\noindent\textit{Why properties of $m_t$ (not $z_t$) are decisive.}
The curvature / isolated--crossing assumption in $z$ (Assumptions~\ref{ass:selection} and~\ref{ass:selection-regularity}) rules out mixing on any interval of positive $z$--measure: off the trigger, one action strictly dominates.
The only remaining way to sustain mixed strategies is a knife-edge construction that relies on \emph{public} stickiness right at the boundary: residence on a null set supported by a martingale or local time of the \emph{belief} process.
Local silence removes the belief martingale and kills local time for $m_t$ at the window edges, so this knife-edge device is unavailable.

On diffusive inaction blocks away from the silence windows, the firm's continuation value $V$ solves the linear ODE
\begin{equation}\label{eq:HJB-inaction-appA}
r\,V(z)\;=\;\pi(z)\;+\;(\mathcal{L}V)(z)\;-\;k(\lambda(z)),
\end{equation}
which is \eqref{eq:HJB-inaction} in the main text, with $\mathcal{L}$ the private-state generator.
At each trigger $\beta_i$ and corresponding target $z_i^*$, value matching and smooth pasting hold, and the target is pinned by the standard first-order condition for the reset (the target FOCs \eqref{eq:FOC1}--\eqref{eq:FOC2} in the main text).
Existence and uniqueness on each inaction block follow from \Cref{prop:BVP-unique}.

\medskip

To rule out interior mixing, suppose for a contradiction that the equilibrium mixes continuation and an impulse on a set of positive Lebesgue measure in a neighborhood of some trigger $\beta_i$.
Mixing on a nontrivial interval requires knife-edge indifference on that \emph{entire} interval.
In continuous-time impulse problems, such knife-edge indifference is usually supported by either
\begin{enumerate}[label=(\roman*)]
\item a nonzero martingale component in the sufficient statistic (to ``skim'' value on both sides), or
\item accumulation of local time that keeps the process on the knife-edge boundary.
\end{enumerate}

Under the posted silence windows, neither channel is available in a neighborhood of $\beta_i$:
\begin{itemize}
\item By \Cref{lem:equivalence}, once $\lambda(z)=0$ on $I_\delta(\beta_i)$ the belief process $m_t$ is locally of finite variation there, i.e.\ it has no jump martingale.
\item By \Cref{lem:no-local-time-belief} (Appendix~\ref{app:lemmas}), $m_t$ does not accrue local time at the boundary of the silence window.
\end{itemize}
Taken together, these two observations mean that the equalities that would normally justify an open region of randomization (sustained by either a belief martingale or local time at the boundary) cannot hold on any interval.

Now use the curvature and isolated-trigger condition.
Fix $i$ and define the \emph{gain from intervening} at the $i$th rung as
\[
G(z)\;:=\;\big[V(z_i^*)-K_i\big]\;-\;V(z),
\]
i.e.\ the gain from triggering the $i$th reset (pay $K_i$ and jump to $z_i^*$) rather than waiting.
Value matching at $\beta_i$ implies $G(\beta_i)=0$.
Because the post-reset branch $z\mapsto V(z_i^*)-K_i$ is constant in $z$, smooth pasting at $\beta_i$ gives $V'(\beta_i)=0$, so $G'(\beta_i)=-V'(\beta_i)=0$.
The relevant nondegeneracy is therefore \emph{second order}: by Assumption~\ref{ass:selection} in the main text (and its expanded version Assumption~\ref{ass:selection-regularity} in Appendix~\ref{app:silence}), the indifference at $\beta_i$ is isolated and has nonzero curvature, i.e.\ $V''(\beta_i)\neq 0$.
Equivalently, $G''(\beta_i)=-V''(\beta_i)\neq 0$.
Thus there exists $\delta'>0$ such that on the punctured neighborhood $(\beta_i-\delta',\beta_i)\cup(\beta_i,\beta_i+\delta')$, $G(z)$ has a strict sign: either $G(z)>0$ everywhere there, or $G(z)<0$ everywhere there.

Hence, in any sufficiently small neighborhood around $\beta_i$ other than the single point $\beta_i$ itself, one of the two actions, ``reset now'' or ``wait,'' strictly dominates the other pointwise.
Therefore on any subinterval of positive measure in that neighborhood, mixing is strictly dominated by the pure best reply.
This rules out mixing on sets of positive measure.

The only potential residual mixing would be on knife-edge sets of Lebesgue measure zero.
However, such sets are also economically irrelevant here.
Although the private state $z_t$ continues to diffuse (it retains its Brownian martingale), the \emph{policy} resets instantaneously upon first hitting $\beta_i$; hence the set $\{t:\,z_t=\beta_i\}$ has Lebesgue measure zero.
For the public belief $m_t$, the posted silence windows and \Cref{lem:no-local-time-belief} imply that $m_t$ cannot accumulate local time at the window edges either.
Thus $m_t$ cannot linger on any knife-edge set of positive expected time.
There is no time mass on which coin-flip randomization could matter.

\medskip

Therefore any region where the firm could be exactly indifferent and justify mixing must have Lebesgue measure zero and carries no economically relevant weight.
With the martingale and local-time channels shut down for $m_t$, and curvature in $z$ ruling out an interval of indifference, the unique best reply at each trigger is pure: reset at $\beta_i$.

By symmetry, the same argument at $\beta_2$ yields purity there as well, so the equilibrium uses pure thresholds at both triggers, with targets $z_1^\ast,z_2^\ast$ pinned by the standard FOCs.

\medskip

Finally, posting the silence windows is weakly profitable for small clock costs.
Let $\chi:=k(\bar\lambda)-k(0)$.
The expected residence time in $\bigcup_i I_\delta(\beta_i)$ per ladder cycle is $O(\delta)$ by the cycle-time bound in Appendix~\ref{app:lemmas}, so the incremental expected clock cost is $O(\delta\,\chi)$.
The mixing loss avoided by posting the windows is first order in $\delta$ by the isolated-crossing and nondegenerate-curvature condition in Assumption~\ref{ass:selection}.
Near $\beta_i$ the gain from intervening is
\[
G(z)
=
\big[V(z_i^*)-K_i\big]-V(z)
\sim c\,(z-\beta_i)^2
\quad\text{for some }c>0,
\]
so in any mixed-strategy eq. the intervention hazard $h(z)$ must diverge at rate $h(z)\propto 1/(z-\beta_i)^2$ in order to keep the continuation payoff flat and maintain indifference.
The instantaneous flow loss from mixing is $h(z)\,G(z)$, which is therefore of order one on the window, and integrating this over an $O(\delta)$ expected residence time yields an $O(\delta)$ expected loss.
Thus there exist $\bar\chi>0$ and $\bar\delta>0$ such that for any $\chi\le \bar\chi$ one can choose $\delta\le \bar\delta$ to make the selection gain dominate the clock cost.
This yields a stationary Markov equilibrium with two pure reset blocks and no interior mixing.
Identification and uniqueness on each block are inherited from \Cref{prop:BVP-unique}. \qed

\bigskip

\subsection*{Proof of \Cref{prop:alpha_threshold} (adoption is a threshold rule)}

This subsection both proves \Cref{prop:alpha_threshold} and addresses the concern that the adoption boundary might depend on the posterior variance $v$ as well as on the posterior mean $m$.

The buyer's state is two dimensional.
Under Gaussian noise the public posterior is summarized by the mean $m_t=\E[z_t\mid\mathcal{F}^P_t]$ and the variance $v_t=\Var(z_t\mid\mathcal{F}^P_t)$.
A priori, the optimal stopping rule could therefore be a surface
\[
\{(m,v): m\ge \alpha(v)\}
\]
rather than a scalar cutoff in $m$ only.
We now make explicit the equilibrium regularity that collapses the problem locally to a one dimensional stopping problem in $m$ alone, thereby justifying that the stopping set is $\{m\ge \alpha\}$ with a single scalar $\alpha$.

\begin{assumption}[Local silence at the adoption margin]\label{ass:local_silence_alpha_app}
In any stationary Markov equilibrium we consider, there is a small adoption window $U_\alpha=(\alpha-\delta,\alpha+\delta)$ in belief space such that the firm publicly sets $\lambda_t=0$ whenever $m_t\in U_\alpha$. While $m_t\in U_\alpha$ there are no new publications, so $(m_t,v_t)$ follows the deterministic ODE flow
\[
\dot m_t = \bar\mu(m_t),\qquad \dot v_t = \bar\gamma(m_t,v_t),
\]
with no jumps. In particular, on $U_\alpha$ the future path of $m_t$ depends only on $m_t$ itself (and not on $v_t$), and $m_t$ is continuous with bounded variation.

Moreover, the window $U_\alpha$ is chosen so that the deterministic drift carries beliefs that enter it from below up to the adoption point before they can exit through the left edge: for every $m\in(\alpha-\delta,\alpha)$ the solution of $\dot m_t=\bar\mu(m_t)$ with $m_0=m$ remains in $U_\alpha$ and hits $\alpha$ in finite time before reaching $\alpha-\delta$.
\end{assumption}

Assumption~\ref{ass:local_silence_alpha_app} has the same content as Assumption~\ref{ass:local_silence_alpha} in Section~\ref{sec:adoption}. The main text also provides an economic rationale for this restriction: under the payoff aggregator in Assumption~\ref{ass:aggregator}, once beliefs are close to $\alpha$ the firm weakly prefers to shut off disclosure in a small neighborhood, because additional publications create downside jump risk for the time-to-adoption without improving the local deterministic drift to the boundary (see the remark following Assumption~\ref{ass:local_silence_alpha}). Technically, Assumption~\ref{ass:local_silence_alpha_app} implies that, near $\alpha$, continuation values depend only on $m$, so all buyers who are on the margin of adoption face the same continuation problem regardless of $v$.

We now prove the proposition.

\medskip
\noindent
\textbf{Step 1. Snell envelope and upper-set property for each fixed $v$.}
Let $W$ be the buyer's value before adoption (the option value of waiting), and let $S$ be the surplus from adopting immediately.
By construction, $S$ is a function of the current belief mean $m$ only.
Let $\mathcal{A}_\lambda$ be the generator of $(m_t,v_t)$ under the publicly observed intensity policy $\lambda(\cdot)$, as in \eqref{eq:PDMP-generator}.

Fix any variance level $v\ge 0$.
Consider the discounted payoff process $\{e^{-rt}S(m_t)\}$ starting from $(m,v)$, and its Snell envelope.
Standard optimal stopping arguments for strong Markov processes with jumps imply that $e^{-rt}W(m_t,v_t)$ is a supermartingale and that the stopping set
\[
\Gamma(v):=\{m:\,W(m,v)=S(m)\}
\]
is closed in $m$.
Moreover, $S(\cdot)$ is weakly increasing in $m$, so $\Gamma(v)$ is an upper set in $m$: if $m\in \Gamma(v)$ and $\tilde m>m$, then $\tilde m\in\Gamma(v)$.
Hence, for each fixed $v$, there exists a (possibly $v$--dependent) cutoff $\alpha(v)\in[-\infty,\infty]$ such that
\[
\Gamma(v)=\{m\ge \alpha(v)\}.
\]
Up to this point we have not ruled out $v$ dependence: in general we could indeed have $\alpha(v)$.

\medskip
\noindent

\textbf{Step 2. Local silence collapses $\alpha(v)$ to a single $\alpha$.}
Now impose Assumption~\ref{ass:local_silence_alpha_app}.
On the adoption window $U_\alpha=(\alpha-\delta,\alpha+\delta)$ we have $\lambda_t=0$, so $(m_t,v_t)$ evolves with deterministic drift and no jumps.
Moreover, by the last sentence of Assumption~\ref{ass:local_silence_alpha_app}, for every $m\in(\alpha-\delta,\alpha)$ the solution of $\dot m_t=\bar\mu(m_t)$ with $m_0=m$ stays inside $U_\alpha$ and hits $\alpha$ in finite time.
Let $\tau_\alpha(m)$ denote this deterministic hitting time.

Consequently, for any two initial states $(m,v)$ and $(m,\tilde v)$ with the same $m\in U_\alpha$, the entire path
\[
\{m_s: 0\le s\le \tau_\alpha(m)\}
\]
is identical under the two initials, because $\dot m_s=\bar\mu(m_s)$ depends only on $m_s$.
Any strategy that is optimal from $m\in U_\alpha$ either stops immediately or waits until (at most) the first time $m_t$ reaches the boundary; such strategies depend only on this deterministic path up to $\tau_\alpha(m)$.
Hence the discounted continuation payoff from waiting is identical across $(m,v)$ and $(m,\tilde v)$.

Since the immediate-adopt payoff $S(m)$ also depends only on $m$, both the stopping payoff and the continuation payoff coincide across all $v$, which forces
\[
W(m,v)=\widehat{W}(m)
\qquad\text{for all }m\in U_\alpha\text{ and all }v\ge 0,
\]
for some one-dimensional function $\widehat{W}$.

Now fix any $m\in U_\alpha$.
Suppose that for some $v_1$ we are exactly indifferent between stopping and waiting:
\[
W(m,v_1)=S(m).
\]
By the previous argument, this implies
\[
W(m,v)=\widehat{W}(m)=S(m)
\qquad\text{for every }v.
\]
If it is optimal to adopt at mean $m$ for one variance $v_1$ inside the silence band, then it is optimal to adopt at that same $m$ for all $v$ in that band.

This implies that, within the adoption window $U_\alpha$ where the stopping decision is actually made, the various $\alpha(v)$ must agree.
There is a single number $\alpha$ such that for every $v$,
\begin{equation}
\Gamma(v)\cap U_\alpha \;=\; \{\,m\in U_\alpha:\, m\ge \alpha\,\}.
\tag{$\star$}
\end{equation}

\emph{Set-theoretic clarification.}
Because each $\Gamma(v)$ is an upper set in $m$, we also have
\[
\Gamma(v)\cap U_\alpha \;=\; \{\,m\in U_\alpha:\, m\ge \alpha(v)\,\}.
\]
If some $\alpha(v)\neq \alpha$, this intersection would differ from \((\star)\) on $U_\alpha$, a contradiction.
Hence $\alpha(v)=\alpha$ for all $v$.
This argument depends only on the geometry of the sets and not on path properties outside $U_\alpha$; it continues to apply even if sample paths can jump over $U_\alpha$ when the clock is on.

Because all $\Gamma(v)$ share the same first point of entry $\alpha$ in $U_\alpha$, the global stopping set
\[
\Gamma:=\{(m,v):\,W(m,v)=S(m)\}
\]
must take the one-dimensional threshold form
\[
\Gamma = \{(m,v):\, m\ge \alpha\}.
\]
This is exactly the structure used in the main text.

\medskip
\noindent
\textbf{Step 3. Characterization and fit.}
On the continuation region $\{m<\alpha\}$ we must have strict preference for waiting, that is $W(m,v)>S(m)$, so dynamic programming yields the linear HJB / variational inequality
\[
r\,W(m,v)=\mathcal{A}_\lambda W(m,v)
\quad\text{for }m<\alpha,
\]
with $W$ continuous and bounded on compacts.
Along the boundary $m=\alpha$, we have continuous fit
\[
W(\alpha,v)=S(\alpha)
\quad\text{for all $v$},
\]
because $\Gamma$ is closed and $W$ is the Snell envelope.

Finally, Assumption~\ref{ass:local_silence_alpha_app} implies that on $U_\alpha$ (and therefore at $m=\alpha$) the belief mean $m_t$ is locally continuous with bounded variation and no jumps.
In that case one can apply standard smooth-fit logic for one dimensional, bounded-variation stopping problems directly to $\widehat{W}(m)$.
While $m_t\in U_\alpha$ we have $\dot m_t=\bar\mu(m_t)$ with no martingale part, so
\[
r\,\widehat{W}(m)=\bar\mu(m)\,\widehat{W}'(m)
\quad\text{for }m<\alpha\text{ in }U_\alpha.
\]
The usual value-matching and supercontact argument then implies smooth fit at the boundary:
\[
\widehat{W}'(\alpha)=S'(\alpha),
\]
which is the smooth-fit condition reported as \eqref{eq:buyer-smoothfit} in the main text.

\medskip
\noindent
\textbf{Step 4. Uniqueness.}
Uniqueness of the scalar cutoff $\alpha$ and of the value function $W$ follows from two properties that are now explicit:
(i) $S(m)$ is strictly increasing in $m$, so the stopping region must be an upper set, and
(ii) inside the adoption window $U_\alpha$, the continuation problem is one dimensional in $m$, so there is a single cutoff $\alpha$ that works for all $v$.
These two facts pin the unique $\alpha$ and the unique $W$ solving the continuation HJB below $\alpha$ together with continuous and smooth fit at $\alpha$.

\medskip
\noindent
\textbf{Summary.}
Without local silence one would in general obtain a two dimensional stopping surface $m=\alpha(v)$.
Assumption~\ref{ass:local_silence_alpha_app} (which is imposed in equilibrium) makes the marginal adoption problem locally one dimensional in $m$ and implies that all $\alpha(v)$ coincide.
This yields a scalar belief-mean threshold $\alpha$, that is
\[
\Gamma=\{(m,v):m\ge \alpha\},
\]
and delivers the boundary conditions in \Cref{prop:alpha_threshold}. \qed

\section{Verification Blocks and Boundary Conditions}
\label{app:boundary}

This appendix collects the boundary-value ``blocks'' used throughout the paper. Full proofs and the full verification argument appear in Section~\ref{sec:verification}. Throughout, the privately observed technical/reputational state $z_t$ evolves as a diffusion with generator
\[
(\mathcal{L}f)(z)
\;=\;
\mu_\theta(z)f'(z)
\;+\;
\tfrac12\,\phi^2(z)f''(z),
\]
and the instantaneous net flow on inaction is $\pi(z)$ (or $\pi(z,m)$ when adoption feedback matters) minus $k(\lambda(z))$.

\paragraph{Inaction block.}
On any inaction interval $(a,b)$ for $z$, the firm's value $V$ solves the stationary HJB
\begin{equation}
\label{eq:hjb-appendix}
r\,V(z)
\;=\;
\pi(z)
\;+\;
(\mathcal{L}V)(z)
\;-\;
k(\lambda(z)),
\qquad z\in(a,b).
\end{equation}
Here $\lambda(\cdot)$ is the posted publication-frequency policy from Section~\ref{sec:micro_silence}.

\paragraph{Single reset (trigger $\to$ target).}
Suppose that when $z$ first hits $\beta$ from within $(a,b)$ the firm pays a fixed cost $K$ and instantaneously resets $z$ to a target $z^\ast \in (a,b)$, after which diffusion resumes. The standard impulse conditions at the trigger are
\begin{align}
V(\beta^-)&=V(z^\ast)-K,\\
V'(\beta^-)&=V'(z^\ast).
\end{align}
Here $\beta^-$ denotes the limit from \emph{inside} $(a,b)$. The first line is value matching: just before paying $K$ at $\beta$, the firm is indifferent to jumping to $z^\ast$ net of cost. The second line imposes ``high contact'' / no kink in $V$: the marginal value of nudging $z$ just before the reset must equal the marginal value at the target $z^\ast$. In the canonical optimal-impulse formulation where $z^\ast$ is itself chosen optimally, target optimality also delivers $V'(z^\ast)=0$, so together with $V'(\beta^-)=V'(z^\ast)$ this collapses to the familiar smooth-pasting pair $V'(\beta^-)=0=V'(z^\ast)$.

In the two-reset ladder of Section~\ref{sec:verification}, these become the familiar building blocks:
(i) smooth pasting at each trigger, $V'(\beta_i)=0$;
(ii) target optimality, $V'(z_i^\ast)=0$;
together with value matching $V(\beta_i)=V(z_i^\ast)-K_i$.
Those are exactly the conditions \eqref{eq:VM1}--\eqref{eq:FOC2} in the main text.

\paragraph{Adoption boundary.}
We use two nearby boundary blocks depending on how adoption is modeled.

\smallskip
\emph{Case (i): adoption is absorbing and post-adoption flow is $z$-independent.}
In the benchmark where, once the market adopts, the firm receives a perpetual flow $v$ that no longer depends on $z$, adoption is triggered when the \emph{public belief} $m_t$ crosses a cutoff $\alpha$, but the continuation value after adoption is the same for every realization of $z$.
In particular, immediately after adoption the value is $v/r$ for all $z$.
Because the post-adoption payoff is $z$-independent, we do not need to track the exact $z$ at the adoption instant; it is without loss to treat adoption as an absorbing boundary in $z$-space at a belief-linked representative point $z=\alpha$.
The natural boundary conditions in this benchmark are
\begin{equation}
\label{eq:adoption-bc}
V(\alpha)=\frac{v}{r},
\qquad
V'(\alpha)=0.
\end{equation}
The first line is value matching into the absorbing branch.
The second line is smooth pasting / high contact: once adopted, marginally nudging $z$ has no incremental value, so $V'(\alpha)=0$.
The ODE \eqref{eq:hjb-appendix} on $(a,\alpha)$ plus \eqref{eq:adoption-bc} pins down the two integration constants for $V$ on that interval.
This is a convenient closed-form benchmark; it is \emph{not} the general boundary block when post-adoption payoffs remain $z$-dependent.

\smallskip
\emph{Case (ii): adoption is endogenous and payoffs continue to depend on $z$.}
Section~\ref{sec:adoption} endogenizes adoption: a downstream buyer/platform adopts when the \emph{public belief} $m_t$ about $z_t$ crosses a unique cutoff $\alpha$, determined by buyer smooth fit (Proposition~\ref{prop:alpha_threshold}).
Two features matter for the firm's boundary system:

\begin{itemize}[leftmargin=1.25em,itemsep=2pt,topsep=2pt]
\item After adoption, the flow payoff can jump to a new regime $\pi^{\mathrm{post}}(z,m)$ that still depends on $z$, so the post-adoption continuation value $V^{\mathrm{post}}(z)$ need \emph{not} be constant in $z$.\footnote{In Section~\ref{sec:adoption} we allow $\pi(z,m)=\pi_0(z)+\eta\,\Lambda(\varpi(m;\alpha))$, so once $m$ crosses $\alpha$ the flow permanently steps up via the $\Lambda(\cdot)$ term, but the firm still manages $z$ going forward.}
\item Adoption itself does \emph{not} reset $z$: the private state $z_t$ diffuses continuously through the instant $m_t$ hits $\alpha$.
\end{itemize}

These two facts imply that the correct interface condition at adoption is continuity of both level and slope in $z$ across the regime switch.
Let $V^{\mathrm{pre}}(z)$ be the firm's value just before adoption (when $m<\alpha$) and $V^{\mathrm{post}}(z)$ be the value just after adoption (when $m\ge \alpha$).
At the instant beliefs hit $\alpha$ we require
\begin{equation}
\label{eq:adoption-vmatch}
V^{\mathrm{pre}}(z)=V^{\mathrm{post}}(z)
\quad\text{for all relevant }z,
\end{equation}
and
\begin{equation}
\label{eq:adoption-smoothV}
\frac{\mathrm{d}}{\mathrm{d}z}V^{\mathrm{pre}}(z)
=
\frac{\mathrm{d}}{\mathrm{d}z}V^{\mathrm{post}}(z)
\quad\text{for those same }z.
\end{equation}
Equation~\eqref{eq:adoption-vmatch} is value matching across the revenue regime switch.
Equation~\eqref{eq:adoption-smoothV} is slope matching / no kink in $z$: since $z$ itself does not jump at adoption, the marginal value $V'(z)$ cannot jump either.

Together, \eqref{eq:adoption-vmatch}--\eqref{eq:adoption-smoothV} \emph{replace} \eqref{eq:adoption-bc} when adoption is endogenous and post-adoption payoffs continue to depend on $z$.
They deliver the two boundary conditions needed to close \eqref{eq:hjb-appendix} on each side of the adoption surface.
Note that the \emph{buyer} smooth-fit condition in Proposition~\ref{prop:alpha_threshold},
\[
\bar\mu(\alpha)\,S'(\alpha)=r\,S(\alpha),
\]
pins \emph{where} in public-belief space the regime switch occurs (which $\alpha$ is actually used), but it is \emph{not} itself a boundary condition for $V(z)$.
The firm's boundary conditions are \eqref{eq:adoption-vmatch}--\eqref{eq:adoption-smoothV}.

\paragraph{Two-reset ladder.}
In the main text, the firm has two impulse actions (patch and pivot), with fixed costs $K_1<K_2$, post-impulse targets $z_1^\ast<z_2^\ast$, and triggers $\beta_1<\beta_2$. The inaction band is $[\beta_1,\beta_2]$, and $z_t$ is instantaneously reset back into that band whenever it hits either boundary. The full ladder is pinned by:
\begin{itemize}[leftmargin=1.25em,itemsep=2pt,topsep=2pt]
\item the interior ODE \eqref{eq:hjb-appendix} on $(\beta_1,\beta_2)$,
\item the boundary system \eqref{eq:VM1}--\eqref{eq:FOC2} from Section~\ref{sec:verification} (value matching at $\beta_i$, high contact $V'(\beta_i)=0$, and target optimality $V'(z_i^\ast)=0$),
\item and standard bounded-growth conditions.
\end{itemize}
Under the no-local-time lemma (Appendix~\ref{app:lemmas}) and Assumption~\ref{ass:selection}, this system uniquely determines $(\beta_1,\beta_2;z_1^\ast,z_2^\ast;V)$; see Proposition~\ref{prop:BVP-unique} and Theorem~\ref{thm:verification}.

\paragraph{Constant-coefficient closed form.}
For calibration and intuition, consider the special case in which $\mu_\theta(z)\equiv\mu$, $\phi(z)\equiv\sigma$, $\pi(z)\equiv\bar\pi$ on $(a,b)$, and $\lambda(z)\equiv\bar\lambda$ there. Then \eqref{eq:hjb-appendix} has the standard exponential solution
\[
V(z)
\;=\;
A\,e^{\gamma_+ z}
\;+\;
B\,e^{\gamma_- z}
\;+\;
\frac{\bar\pi-k(\bar\lambda)}{r},
\qquad
\gamma_\pm
\;=\;
\frac{-\mu\pm\sqrt{\mu^2+2r\sigma^2}}{\sigma^2},
\]
with $(A,B)$ pinned by the relevant boundary block: the reset conditions above (single-impulse), or the ladder system \eqref{eq:VM1}--\eqref{eq:FOC2}, plus any adoption boundary conditions.

\paragraph{`Skew' blocks (not active under silence).}
Without posted local silence, the public-belief process can accumulate local time at interior points, and the associated value function $V$ can develop kinks: one-sided derivatives $V'_-$ and $V'_+$ at an interior $\hat z$ must satisfy a ``kink balance'' (skew) condition. Under the Cox-clock microfoundation in Section~\ref{sec:micro_silence}, the firm can post observable windows with $\lambda_t=0$ near each trigger. Lemma~\ref{lem:no-local-time-belief} shows that inside those windows the martingale part of public beliefs shuts off and no local time is accumulated at the band endpoints. In equilibrium under local silence, the kink/skew equalities are therefore inoperative: the ladder is characterized by smooth pasting and target optimality, not by interior mixing or slope discontinuities.

\section{Microfoundations and Selection: Proofs}
\label{app:silence}

This appendix provides the main proofs used in Sections~\ref{sec:micro_silence} and~\ref{sec:verification}. Throughout we maintain the regularity assumptions stated there, and the publication policy $\lambda(\cdot)$ is publicly known (either because the rule is credibly committed to ex ante or because the realized intensity is observable in real time). The public filtration $\mathcal F^P_t$ is generated by disclosure times and signals, and when we refer to the compensator of the Cox process we implicitly work with the smallest enlargement of $\mathcal F^P_t$ that makes it predictable, so that in particular the realized intensity process $\lambda_t=\lambda(z_t)$ is $\mathcal F^P_t$–predictable (and hence $\mathcal F^P_t$–measurable for each $t$). Any incremental information that could be extracted from observing the realized intensity on a vanishing silence window will turn out to be of order $\delta$ and is therefore ignored in the small-window limit discussed below.

\subsection{Setup and notation}

Let $(\Omega,\mathcal F,\{\mathcal F_t\},\mathbb P)$ carry a Brownian motion $W$ driving the privately observed state $z_t$,
\[
\mathrm{d}z_t
=
\mu_\theta(z_t)\,\mathrm{d}t
+
\phi(z_t)\,\mathrm{d}W_t,
\]
and a Cox process $N_t$ with predictable intensity $\lambda_t = \lambda(z_t)$ delivering publication times
\[
T_n
=
\inf\{t : N_t \ge n\}.
\]
At time $T_n$ the public observes a noisy signal
\[
y_n
=
z_{T_n}
+
\varepsilon_n,
\qquad
\varepsilon_n \sim \mathcal N(0,\sigma_\varepsilon^2)
\ \text{i.i.d., independent of $(W,N)$}.
\]
Let $\mathcal F^P_t = \sigma(\{(T_k,y_k) : T_k \le t\})$ be the public filtration.

Under Gaussian noise and Assumption~\ref{ass:filter} in Section~\ref{sec:environment} (common drift and diffusion across types), the public posterior for $z_t$ is summarized by its conditional mean and variance,
\[
m_t := \E[z_t \mid \mathcal F^P_t],
\qquad
v_t := \Var(z_t \mid \mathcal F^P_t).
\]

Under the linear--Gaussian (finite-dimensional Kalman filter) benchmark used in the main text, with Gaussian noise and Assumption~\ref{ass:filter} in Section~\ref{sec:environment} (common drift and diffusion across types), the public posterior for $z_t$ is summarized by its conditional mean and variance,
\[
m_t := \E[z_t \mid \mathcal F^P_t],
\qquad
v_t := \Var(z_t \mid \mathcal F^P_t).
\]
In this benchmark the pair $(m_t,v_t)$ is a piecewise-deterministic Markov process (PDMP): between publications,
\[
(\dot m_t,\dot v_t)
=
(\bar\mu(m_t),\bar\gamma(m_t,v_t));
\]
at each disclosure time $T_n$, Bayesian updating (Kalman filtering) maps $(m_{T_n^-},v_{T_n^-},y_n)$ to $(m_{T_n},v_{T_n})$.

Two features are all we will use:
\begin{enumerate}[label=(\alph*),leftmargin=1.5em,itemsep=2pt]
  \item Between disclosures, $m_t$ has finite variation and follows a deterministic ODE $\dot m_t = \bar\mu(m_t)$ with no Brownian martingale part.
  \item All randomness in $m_t$ comes from publication jumps; the quadratic variation of its martingale part is proportional to the Cox intensity $\lambda_t$.
\end{enumerate}

\subsection{Equivalence: local silence kills the public-belief martingale}

We now formalize the ``local silence'' device used to suppress belief volatility around a trigger.

\begin{assumption}[Local silence policy]
\label{ass:silence_window_app}
Fix $\hat z\in\R$ and $\delta>0$. Define a stationary publication policy
\[
\lambda_\delta(z)=
\begin{cases}
0, & |z-\hat z|\le \delta,\\[2pt]
\bar\lambda, & |z-\hat z|>\delta,
\end{cases}
\qquad
k(0)=0,
\quad
0<\bar\lambda<\infty,
\]

and suppose the policy $z\mapsto\lambda_\delta(z)$ is publicly known, so that the associated compensator of $N$ is predictable with respect to the public filtration.

\end{assumption}

Let $\mathcal A_{\lambda_\delta}$ denote the generator of $m_t$ under $\lambda_\delta$, acting on $C_b^1$ test functions $f$:
\begin{equation}
\label{eq:PDMP-generator-appendix}
(\mathcal A_{\lambda_\delta} f)(m)
=
\bar\mu(m)\, f'(m)
+
\Lambda_\delta(m)\,
\E\big[
f(\mathcal U(m,\varepsilon)) - f(m)
\big],
\end{equation}
where $\Lambda_\delta(m):=\E[\lambda_\delta(z_t)\mid m_t=m]$ is the \emph{publicly inferred} jump intensity induced by $\lambda_\delta$, $\varepsilon\sim\mathcal N(0,\sigma_\varepsilon^2)$, and $\mathcal U$ is the Bayesian update map for the posterior mean after observing $y=z+\varepsilon$.

\begin{lemma}[Equivalence: Cox clock implies variance suppression]
\label{lem:equivalence}
Under Assumption~\ref{ass:silence_window_app}, the public posterior mean $m_t$ admits the Doob--Meyer decomposition
\[
m_t
=
m_0 + A_t + M_t,
\]
where $A$ has finite variation and $M$ is a purely discontinuous local martingale with quadratic variation
\[
\langle M\rangle_t
=
\int_0^t
\Lambda_\delta(m_s)\,\Xi(m_s,v_s)\,\mathrm{d}s
\]
for some bounded predictable process $\Xi$.

Moreover, whenever $|z_s-\hat z|\le\delta$ we have $\lambda_\delta(z_s)=0$ \emph{pathwise} by construction, hence
\[
\Lambda_\delta(m_s)
=
\E[\lambda_\delta(z_s)\mid\mathcal F^P_s]
=
0
\quad
\text{on that set.}
\]
Therefore $M$ has no jumps (and hence no quadratic variation) whenever $|z_s-\hat z|\le\delta$.

Consequently, for any bounded stopping time $\tau$ that does not hit a reset and any $f\in C_b^1$,
\[
\E\!\left[
f(m_{t\wedge\tau}) - f(m_0)
-
\int_0^{t\wedge\tau}
(\mathcal A_{\lambda_\delta} f)(m_s)\,\mathrm{d}s
\right]
=
0,
\]
and the martingale part vanishes whenever $|z_s-\hat z|\le\delta$.
\end{lemma}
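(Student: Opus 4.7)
The plan is to derive the Doob--Meyer decomposition directly from the PDMP representation of Section~\ref{subsec:beliefs}, compute the quadratic variation via standard compensated marked-point-process theory, and then read off the silence-window conclusion from the pathwise vanishing of $\lambda_\delta(z_s)$. Between resets, the PDMP structure yields $m_t = m_0 + \int_0^t \bar\mu(m_s)\,\mathrm{d}s + \sum_{T_n\le t}[\mathcal{U}(m_{T_n^-},y_n) - m_{T_n^-}]$. I would set $A_t := \int_0^t \bar\mu(m_s)\,\mathrm{d}s$, which is locally of finite variation, and let $M_t$ denote the remaining jump sum, the candidate purely discontinuous local martingale.

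To verify that $M$ is an $\mathcal{F}^P_t$-martingale, I would rewrite the jump sum as a stochastic integral against the marked-point-process measure $N(\mathrm{d}s,\mathrm{d}y)$ whose predictable compensator in the enlarged public filtration is $\lambda_s\,p(y\mid m_{s^-},v_{s^-})\,\mathrm{d}s\,\mathrm{d}y$, with $p$ the Gaussian predictive density implied by the Kalman benchmark. The Bayesian tower property gives $\mathbb{E}[\mathcal{U}(m_{s^-},y)-m_{s^-}\mid \mathcal{F}^P_{s^-}]=0$ under the predictive law of $y$, so the predictable compensator of the jump sum vanishes and $M$ is a purely discontinuous local martingale; integrability follows because the Kalman gain is bounded on the $(m,v)$-range consistent with the inaction band. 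Standard theory for such compensated integrals then delivers $\langle M\rangle_t = \int_0^t \lambda_s\,\mathbb{E}[(\mathcal{U}(m_{s^-},y)-m_{s^-})^2\mid \mathcal{F}^P_{s^-}]\,\mathrm{d}s$; in the linear-Gaussian case the inner conditional second moment equals $\Xi(m_s,v_s) = v_s^2/(v_s+\sigma_\varepsilon^2)$, a predictable and bounded process, which gives the claimed representation once $\Lambda_\delta(m_s)$ is identified with the publicly observed realized intensity $\lambda_s$.

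The silence conclusion is then pathwise rather than probabilistic: by construction of $\lambda_\delta$ in Assumption~\ref{ass:silence_window_app}, on the random set $\{s:|z_s-\hat z|\le\delta\}$ one has $\lambda_s=0$ identically, so $N$ accrues no jumps there, $M$ is locally constant, and the integrand of $\langle M\rangle$ vanishes. For the final Dynkin identity I would invoke the standard PDMP martingale problem: apply It\^o's formula for PDMPs to $f(m_t)$ between publications, optional-sample at $t\wedge\tau$ (valid because $\tau$ avoids resets, $f\in C^1_b$, and $\bar\mu$ is bounded), and collect the drift and compensated-jump terms into $(\mathcal{A}_{\lambda_\delta}f)(m_s)$ as in \eqref{eq:PDMP-generator-appendix}.

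The delicate step is filtration hygiene. The identity $\Lambda_\delta(m_s)=0$ on the silence set presumes the enlarged public filtration that directly observes the realized intensity $\lambda_t$, as flagged in the preamble of this appendix, rather than the filtration generated by past publications alone; if one worked only with the latter, the compensator density would be the predictive projection $\mathbb{E}[\lambda_\delta(z_s)\mid\mathcal{F}^{\text{pub}}_s]$, which need not vanish pathwise on $\{|z_s-\hat z|\le\delta\}$. I would therefore verify explicitly that the enlargement by the realized $\lambda_t$-process preserves predictability of the Cox compensator and does not alter the Gaussian predictive kernel $p$, so that the tower-property calculation in Step~2 and the quadratic-variation formula in Step~3 remain valid. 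Once this bookkeeping is settled, the remainder is a mechanical assembly of standard marked-point-process and PDMP tools.
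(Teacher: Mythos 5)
Your proposal is correct and follows essentially the same route as the paper's proof: Doob--Meyer via the compensated jump martingale of the belief PDMP, boundedness of $\Xi$ from the Kalman gain, the pathwise identity $\Lambda_\delta(m_s)=\lambda_\delta(z_s)=0$ on the silence set under the appendix's standing convention that the realized intensity is $\mathcal F^P_t$-predictable, and Dynkin's formula with optional sampling for the final identity. You simply make explicit what the paper delegates to ``standard arguments'' (the tower-property martingale check, the explicit $\Xi(m,v)=v^2/(v+\sigma_\varepsilon^2)$, and the filtration-enlargement bookkeeping that the paper handles via its stated convention and Remark~\ref{rem:Lambda_measurability}), so no gap remains.
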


\begin{proof}
Standard Dynkin/It\^o arguments for PDMPs with predictable jump intensity yield the stated Doob--Meyer decomposition: $A$ collects the finite-variation drift and $M$ is the compensated jump martingale associated with the Cox jump times. Under Gaussian noise the Kalman gain is bounded, which implies $\Xi$ is bounded and predictable.

The only subtlety is showing $\Lambda_\delta(m_s)=0$ on
$\{|z_s-\hat z|\le\delta\}$. By Assumption~\ref{ass:silence_window_app},
$\lambda_\delta(z_s)=0$ \emph{pathwise} whenever $|z_s-\hat z|\le\delta$. Moreover,
$\lambda_\delta(z_s)$ is $\mathcal F^P_s$–measurable under our standing convention
that the realized intensity process is predictable with respect to the public filtration,
so
\[
\Lambda_\delta(m_s)
=
\E[\lambda_\delta(z_s)\mid\mathcal F^P_s]
=
\lambda_\delta(z_s)
=
0
\quad
\text{on } \{|z_s-\hat z|\le\delta\}.
\]
Thus the compensator of the jump part is zero on that set, and $M$ accumulates neither jumps nor quadratic variation there.
The expectation identity is just Dynkin's formula applied to~\eqref{eq:PDMP-generator-appendix} up to $t\wedge\tau$, combined with optional sampling for $M$.
\end{proof}

Lemma~\ref{lem:equivalence} formalizes the ``variance-suppression'' reduced form: on a posted local silence window the martingale part of $m_t$ is literally switched off; $m_t$ has finite variation there and follows its deterministic drift.

\begin{remark}[On $\Lambda_\delta$ and the information content of $\lambda_t$]
\label{rem:Lambda_measurability}
The quantity
\[
\Lambda_\delta(m_s)
=
\E[\lambda_\delta(z_s)\mid\mathcal F^P_s]
\]
is defined relative to the public filtration
$\mathcal F^P_s=\sigma\{(T_k,y_k):T_k\le s\}$ generated by disclosure times and signals. For the variance-suppression result in Lemma~\ref{lem:equivalence}, we only use the pathwise property that $\lambda_\delta(z_s)=0$ whenever $|z_s-\hat z|\le\delta$, which implies $\Lambda_\delta(m_s)=0$ on that set, regardless of whether the realized intensity path $(\lambda_t)$ is itself directly observed.

If one \emph{does} enlarge the public filtration to include the realized intensity, then under the posted rule $\lambda_\delta(z)=0$ on $|z-\hat z|\le\delta$ the event $\{\lambda_s=0\}$ becomes a coarse signal that the private state lies in the silence window. The exact filter would then add a small correction term to the between-jump evolution of $m_t$ to account for this extra signal. Because the window has length $2\delta$ and the primitives are smooth, the probability mass of $\{|z_s-\hat z|\le\delta\}$, and hence the induced correction to the drift of $m_t$, are of order $O(\delta)$. In the small-window limit $\delta\downarrow0$ used in Section~\ref{sec:micro_silence} and Corollary~\ref{cor:viscosity}, these terms vanish.

For this reason, throughout the appendix we work with the reduced form in which, between disclosures, $m_t$ follows the deterministic ODE $\dot m_t=\bar\mu(m_t)$ and the martingale part of $m_t$ is suppressed on the silence window in the limit. This approximation is consistent with the full filtering problem up to an $O(\delta)$ error that disappears as the window radius shrinks to zero.
\end{remark}

\begin{corollary}[Viscosity stability of the reduced form]
\label{cor:viscosity}
Let $V^\delta$ solve the stationary impulse-control QVI with running payoff $\pi(z)-k(\lambda_\delta(z))$. Suppose a comparison principle holds for the $\delta\downarrow0$ limit, and suppose the no-local-time lemma in Appendix~\ref{app:lemmas} applies (so that trajectories do not accumulate at the boundaries under local silence). Then $V^\delta\to V^0$ locally uniformly as $\delta\downarrow0$, and $V^0$ is the unique viscosity solution of the reduced-form QVI in which the public-belief martingale is suppressed on the vanishing silence window.
\end{corollary}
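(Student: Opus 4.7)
The plan is to apply the Barles--Perthame half-relaxed-limit method for viscosity solutions of quasi-variational inequalities, using the assumed comparison principle to upgrade subsequential convergence to locally uniform convergence. The argument proceeds in four short steps.

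First, I would establish that the family $\{V^\delta\}_{\delta>0}$ is locally uniformly bounded and equicontinuous on compacts. Boundedness follows because $\pi$ is bounded by assumption, $k(\lambda_\delta(\cdot))$ is bounded above by $k(\bar\lambda)<\infty$ uniformly in $\delta$, and the impulse operator $\mathcal{M}$ delivers a uniform upper bound through the fixed costs $K_1,K_2$ (any impulse policy caps value at roughly $(\|\pi\|_\infty+k(\bar\lambda))/r + \max_i(V(z_i^\ast)-K_i)$). A dynamic programming argument then gives a $\delta$-independent bound $|V^\delta|\le C$. Equicontinuity on compacts can be obtained by the standard strong-Markov coupling argument, comparing $V^\delta(z)$ and $V^\delta(z')$ via the hitting time of $z'$ starting from $z$ and the Lipschitz modulus of $\pi$, exploiting that the diffusion coefficients are uniform in $\delta$.

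Second, I would introduce the Barles--Perthame half-relaxed limits
\[
\overline V(z):=\limsup_{\delta\downarrow 0,\,z'\to z} V^\delta(z'),\qquad \underline V(z):=\liminf_{\delta\downarrow 0,\,z'\to z} V^\delta(z'),
\]
and verify that $\overline V$ is a viscosity subsolution and $\underline V$ is a viscosity supersolution of the reduced-form QVI. The QVI operator depends on $\delta$ only through the running cost term $-k(\lambda_\delta(z))$; since $\lambda_\delta=0$ only on the vanishing window $\{|z-\hat z|\le\delta\}$ and equals $\bar\lambda$ elsewhere, the lower and upper semicontinuous envelopes of the $\delta$-indexed operator coincide, in the limit, with the reduced-form operator in which the public-belief martingale is suppressed at $\hat z$. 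The impulse part $V-\mathcal{M}V$ is $\delta$-independent and passes through directly. At this stage I would invoke Lemma~\ref{lem:no-local-time-belief} to guarantee that the free-boundary conditions (smooth pasting at triggers, target optimality at targets) survive the limit without generating anomalous boundary contributions from residence at the degenerating silence band.

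Third, applying the assumed comparison principle for the limit QVI to $\overline V$ and $\underline V$ yields $\overline V\le\underline V$. Since $\underline V\le\overline V$ by construction, $\overline V=\underline V=:V^0$ is the unique continuous viscosity solution of the reduced-form QVI. Barles--Perthame then automatically upgrades half-relaxed convergence to locally uniform convergence $V^\delta\to V^0$, which is the claim.

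The hard part will be verifying the sub/supersolution property for the half-relaxed limits when the test function touches $\overline V$ or $\underline V$ at (or arbitrarily close to) the vanishing silence point $\hat z$. One must show that the $\delta$-indexed operators converge in the correct lower/upper semicontinuous sense and that no anomalous impulse mass or local-time contribution accumulates on the shrinking band. This is precisely where the no-local-time lemma does the essential work: by ruling out paths that would linger at the degenerating boundary, it ensures that the free-boundary structure of $V^\delta$ collapses regularly onto that of $V^0$ and that the limit operator picks up no boundary defect. Everything else is standard Barles--Perthame stability.
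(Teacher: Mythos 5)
Your proposal is correct and follows essentially the same route as the paper: the paper's proof is a compressed appeal to viscosity-solution stability under convergence of the Hamiltonians plus the assumed comparison principle, with the no-local-time lemma handling boundary behavior, and your Barles--Perthame half-relaxed-limit argument is exactly the standard machinery behind that stability claim (and is, if anything, more careful about the discontinuity of $k(\lambda_\delta(\cdot))$ on the vanishing window). No gap to flag.
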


\begin{proof}
As $\delta\downarrow0$, $k(\lambda_\delta)\to k(0)=0$ on the silence window while the inaction HJB elsewhere is unchanged. The associated Hamiltonians converge locally uniformly. Stability of viscosity solutions under locally uniform Hamiltonian convergence, combined with comparison, implies $V^\delta\to V^0$ locally uniformly and pins down $V^0$ uniquely. The no-local-time lemma from Appendix~\ref{app:lemmas} ensures that boundary behavior in the dynamic programming principle is well posed in the limit.
\end{proof}

\subsection{Selection: local silence eliminates mixing}

We now formalize the selection logic in Section~\ref{sec:micro_silence}. By posting silence windows around each trigger, the firm removes public-belief randomness in a neighborhood of the intervention boundaries. This collapses any open region of mixed strategies to a null set.

\begin{assumption}[Regularity for selection and nondegenerate curvature]
\label{ass:selection-regularity}
Consider a candidate two-reset ladder with inaction band $[\beta_1,\beta_2]$, targets $(z_1^*,z_2^*)$, and fixed costs $(K_1,K_2)$. Assume:
\begin{enumerate}[label=(\roman*),leftmargin=1.5em,itemsep=2pt,topsep=2pt]
\item $V\in C^2(\beta_1,\beta_2)$.
\item At each trigger $\beta_i$ ($i=1,2$), the function $z\mapsto V(z_i^*)-K_i$ is locally \emph{indifferent} to $V(z)$ only at $z=\beta_i$, and that indifference is \emph{nondegenerate}:
\begin{enumerate}[label=(\alph*),leftmargin=1.25em,itemsep=1pt,topsep=2pt]
\item high contact holds: $V'(\beta_i)=0$, so the derivative of $V(z)$ matches the (constant) reset branch $V(z_i^*)-K_i$ at $\beta_i$;
\item $V''(\beta_i)\neq 0$, so $\beta_i$ is an \emph{isolated} solution to $V(z)=V(z_i^*)-K_i$ in a neighborhood of $\beta_i$.
\end{enumerate}
Equivalently: the unique knife-edge point of indifference is $\beta_i$, and it exhibits nonzero curvature (no flat tangency).
\item (No stickiness at the boundary.) Under the posted silence windows, the no-local-time lemma in Appendix~\ref{app:lemmas} implies that the public-belief process $m_t$ accrues no local time at either $\beta_i$. For the private state $z_t$, the equilibrium impulse control is applied at the \emph{first hitting time} of $\beta_i$ and induces an instantaneous jump to $z_i^*$. Hence the set $\{t:\, z_t=\beta_i\}$ has Lebesgue measure zero almost surely, and $z_t$ accrues no local time at $\beta_i$ either. In particular, neither state can linger (“stick’’) at the knife-edge boundary.
\end{enumerate}
\end{assumption}

Assumption~\ref{ass:selection-regularity} is the appendix analogue of Assumption~\ref{ass:selection} in the main text. Item~(ii) emphasizes that ``transversality'' here is second-order: slopes match by high contact, so the relevant nondegeneracy is curvature. $V''(\beta_i)\neq 0$ rules out an \emph{interval} of indifference.

\begin{proposition}[Silence selects pure resets]
\label{prop:selection}
Fix $\varepsilon>0$ and post $\lambda=0$ on $[\beta_i-\varepsilon,\beta_i+\varepsilon]$ for $i=1,2$, with $\lambda=\bar\lambda$ elsewhere. Under Assumption~\ref{ass:selection-regularity}, any stationary Markov equilibrium with this policy is pure: the firm resets exactly at $\beta_1$ to $z_1^*$ and at $\beta_2$ to $z_2^*$ almost surely. In particular, there is no equilibrium in which the firm mixes between intervening and waiting on a set of positive Lebesgue measure near either trigger.
\end{proposition}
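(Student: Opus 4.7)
The plan is to argue by contradiction, reducing the selection claim to the second-order curvature condition at each trigger, with Lemmas~\ref{lem:equivalence} and~\ref{lem:no-local-time-belief} closing off the two usual continuous-time devices---martingale-supported and local-time-supported randomization---that might otherwise sustain mixing on knife-edge sets.

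First I would suppose, for contradiction, that under the posted policy $\lambda_\varepsilon$ the firm mixes between waiting and executing impulse $i$ on a Borel set $E$ of positive Lebesgue measure in a small neighborhood of $\beta_i$. By the envelope logic of the QVI in Section~\ref{sec:verification}, indifference on the support of a positive-measure mixing distribution requires continuation and reset values to coincide pointwise on $E$: $V(z)=V(z_i^*)-K_i$ for all $z\in E$. Defining the intervention gain $G_i(z) := [V(z_i^*)-K_i]-V(z)$, this forces $G_i\equiv 0$ on $E$.

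Next, I would use the boundary data at $\beta_i$ to Taylor-expand $G_i$. Value matching gives $G_i(\beta_i)=0$, high contact (Assumption~\ref{ass:selection-regularity}(ii)(a)) gives $G_i'(\beta_i)=0$, and nondegenerate curvature (Assumption~\ref{ass:selection-regularity}(ii)(b)) gives $G_i''(\beta_i)=-V''(\beta_i)\neq 0$. A second-order Taylor expansion yields
\[
G_i(z) \;=\; -\tfrac{1}{2}\,V''(\beta_i)\,(z-\beta_i)^2 \;+\; o\!\big((z-\beta_i)^2\big),
\]
so $G_i$ has a single strict sign on a punctured neighborhood of $\beta_i$, incompatible with $G_i\equiv 0$ on a set of positive measure. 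Mixing confined to knife-edge sets of Lebesgue measure zero is disposed of by Assumption~\ref{ass:selection-regularity}(iii): the private state $z_t$ spends zero Lebesgue time at $\beta_i$ because impulses fire at first hitting times, and the public belief $m_t$ has bounded variation with no accumulated local time at the window boundary by Lemmas~\ref{lem:equivalence} and~\ref{lem:no-local-time-belief}. Hence any residual knife-edge mixing carries zero discounted payoff weight, and the argument applies symmetrically at $i=1,2$.

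The main obstacle I expect is the first step, namely reducing ``equilibrium mixing on $E$'' to the pointwise indifference $G_i\equiv 0$ on $E$. In continuous time a mixed strategy need not be a literal coin flip but can be implemented via a state-dependent intervention hazard, and one has to invoke the QVI/verification apparatus of Section~\ref{sec:verification} carefully to conclude that the firm's envelope condition forces $G_i$ to vanish on the full support of the randomization, not merely in an averaged sense on $E$. Once that envelope step is in place, the Taylor curvature contradiction and the no-stickiness lemmas close the argument in a standard way.
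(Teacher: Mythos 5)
Your proposal is correct and follows essentially the same route as the paper's proof: the same gain function $G_i(z)=[V(z_i^*)-K_i]-V(z)$, the same value-matching / high-contact / nondegenerate-curvature expansion yielding an isolated zero of $G_i$ with a strict sign on a punctured neighborhood of $\beta_i$, and the same use of Lemmas~\ref{lem:equivalence} and~\ref{lem:no-local-time-belief} together with first-hitting-time impulses to dispose of knife-edge mixing on Lebesgue-null sets. The only difference is that the paper frames the positive-measure step as strict dominance (any strategy placing weight on the dominated action where $G_i\neq 0$ is strictly dominated by the pure best reply) rather than as pointwise indifference on the mixing set, which sidesteps the envelope/hazard-rate subtlety you flag as the main obstacle.
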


\begin{proof}
Work near $\beta_1$; the argument for $\beta_2$ is symmetric.

\emph{Step 1 (curvature kills positive-measure mixing).}
Define
\[
G(z):=
\big[V(z_1^*)-K_1\big]-V(z).
\]
Value matching at $\beta_1$ implies $G(\beta_1)=0$. The intervention branch $z\mapsto V(z_1^*)-K_1$ is constant in $z$, and high contact at $\beta_1$ gives $V'(\beta_1)=0$, so $G'(\beta_1)=-(V'(\beta_1))=0$. The relevant nondegeneracy is therefore second order: Assumption~\ref{ass:selection-regularity}(ii)(b) says $V''(\beta_1)\neq 0$, hence
\[
G''(\beta_1) = -V''(\beta_1) \neq 0.
\]
Thus $\beta_1$ is an \emph{isolated} zero of $G$, and $G$ takes a strict sign (either strictly positive or strictly negative) on each side of $\beta_1$ in some punctured neighborhood of $\beta_1$. Equivalently, there is no open interval around $\beta_1$ on which $G(z)=0$ or on which $G$ flips sign infinitely often. On any such neighborhood (except the single point $\{\beta_1\}$ itself), either intervening strictly dominates waiting or waiting strictly dominates intervening.

Suppose, for contradiction, that a stationary Markov strategy mixes between ``intervene'' (reset to $z_1^*$ at cost $K_1$) and ``wait'' on some interval $I\subset(\beta_1-\varepsilon,\beta_1+\varepsilon)$ with strictly positive Lebesgue measure. On any nontrivial subinterval of $I$ contained in the punctured neighborhood where $G$ has a strict sign, one action strictly dominates. A strategy that assigns positive probability to the dominated action on a set of positive measure is strictly dominated by a pure strategy. Hence any mixing set must have Lebesgue measure zero.

Up to this point the argument has not used silence: it relies only on curvature and isolated crossing.

\emph{Step 2 (local silence kills knife-edge, measure-zero mixing).}
What remains, in principle, is a degenerate randomization on a measure-zero set, e.g., having the firm “sit’’ exactly at $\beta_1$ and flip a coin. In many continuous-time disclosure models, mixed equilibria are supported by such knife-edge constructions via local time: the sufficient statistic can accumulate positive expected residence time at an indifference point.

Local silence rules that out for the \emph{public belief}. By Lemma~\ref{lem:equivalence}, when $\lambda=0$ on $[\beta_1-\varepsilon,\beta_1+\varepsilon]$ the public-belief process $m_t$ has no jump martingale in that band: on that set $m_t$ is of finite variation and fully predictable. The no-local-time lemma in Appendix~\ref{app:lemmas} then implies $m_t$ accrues no local time at $\beta_1$.

For the \emph{private state} $z_t$, the lack of stickiness at $\beta_1$ does not come from silence. It follows from the impulse control itself: by construction, the reset is applied at the first hitting time of $\beta_1$ and the state jumps instantaneously to $z_1^*$. Thus the set
\[
\{t:\,z_t=\beta_1\}
\]
has Lebesgue measure zero almost surely, and $z_t$ accrues no local time at $\beta_1$ either. In particular, neither $m_t$ nor $z_t$ can linger (“stick’’) at the knife-edge point.

The remaining issue is whether a knife-edge mixed strategy could be supported on some more complicated null set inside the band. Here we use only standard regularity of one-dimensional diffusions: conditional on not yet hitting $\beta_1$, the process spends zero (and hence zero expected) Lebesgue time in any Borel subset of $(\beta_1-\varepsilon,\beta_1+\varepsilon)$ with Lebesgue measure zero. The same property holds for any measurable sufficient statistic driven by $z_t$ and used in the firm’s rule. Combined with the absence of local time at $\beta_1$ itself (so there is no sticky boundary at the indifference point), this implies that the state and the sufficient statistic never spend positive expected time on \emph{any} Lebesgue-null subset of $(\beta_1-\varepsilon,\beta_1+\varepsilon)$ before the reset is applied. Any residual knife-edge randomization supported on such a null set is therefore irrelevant: it does not occur with positive probability along the equilibrium path.

Combining Steps 1 and 2: (i) curvature rules out mixing on any set of positive measure; (ii) the silence window plus the no-local-time lemma kills the ``sticky knife-edge'' construction that could otherwise support mixing on a measure-zero set. The only admissible best reply is pure: reset at $\beta_1$.

Repeating the same argument at $\beta_2$ yields pure pivot behavior there, with target $z_2^*$ at $\beta_2$. This eliminates any stationary Markov equilibrium with interior mixing near either trigger.
\end{proof}

\begin{corollary}[Two-reset ladder]
\label{cor:ladder}
Under the hypotheses above, the unique stationary Markov equilibrium selected by local silence is a two-reset S--s ladder with triggers $(\beta_1,\beta_2)$ and targets $(z_1^*,z_2^*)$, as characterized and verified in Section~\ref{sec:verification}.
\end{corollary}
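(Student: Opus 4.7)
The plan is to chain the selection result of \Cref{prop:selection} with the boundary-value characterization in \Cref{sec:verification}. \Cref{prop:selection} already guarantees that, under the posted local silence windows around each trigger, no stationary Markov equilibrium supports mixing on a set of positive Lebesgue measure near either $\beta_1$ or $\beta_2$, and that knife-edge randomization on null sets is ruled out by the no-local-time lemma combined with the instantaneous reset. So the firm's best reply at each trigger is a pure impulse, and the question reduces to identifying the unique pure ladder globally and verifying its optimality.

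First I would argue that the inaction region must be a single contiguous interval in $z$. With only two impulse options at fixed costs $K_1<K_2$ and the growth condition in \eqref{eq:geometry}, the QVI \eqref{eq:QVI} admits at most one connected inaction band: any additional component would have to be bounded by triggers at both endpoints, which, together with the two (constant-in-$z$) reset branches, would force the graphs of $z\mapsto V(z_i^\ast)-K_i$ to cross $V(z)$ at more than two isolated points---contradicting the isolated-indifference / nondegenerate-curvature clause of \Cref{ass:selection-regularity}(ii). Hence $I=[\beta_1,\beta_2]$. Incentive compatibility at each trigger (which action is executed at which boundary) then follows from the algebra in \eqref{eq:IC1}--\eqref{eq:IC2}, so the assignment ``patch at $\beta_1$, pivot at $\beta_2$'' is automatic once the QVI is satisfied.

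Next I would invoke \Cref{prop:BVP-unique}: under the stated continuity and growth hypotheses on $\mu_\theta,\phi,\pi,\lambda(\cdot)$, the boundary-value system \eqref{eq:HJB-inaction}--\eqref{eq:FOC2} admits at most one quadruple $(\beta_1,\beta_2;z_1^\ast,z_2^\ast)$ together with $V\in C^1(\R)\cap C^2(\beta_1,\beta_2)$, and existence is delivered under the mild additional Lipschitz/$C^1$ smoothness conditions stated there. \Cref{thm:verification} then certifies that the associated stationary Markov policy---diffuse on $(\beta_1,\beta_2)$, reset at the first hit of either boundary to the corresponding target---is optimal among all admissible controls and its value coincides with $V$, because the constructed $V$ satisfies the QVI globally and the no-local-time lemma justifies the It\^o/Dynkin step at the triggers under the posted silence policy. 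Combining: the selected equilibrium is pure (from \Cref{prop:selection}), ladder-shaped (from the geometry argument above), uniquely parameterized (from \Cref{prop:BVP-unique}), and optimal (from \Cref{thm:verification}), which is exactly the claim.

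The hard part---conceptually more than technically---is tying the three pieces together without circularity. \Cref{prop:selection} is a local statement at each trigger; \Cref{prop:BVP-unique} pins the ladder parameters conditional on \emph{assuming} the ladder structure; \Cref{thm:verification} proves optimality conditional on the BVP being satisfied. The delicate intermediate step is therefore ruling out richer inaction geometries (disconnected inaction bands, or degenerate equilibria in which only one of the two impulses is ever used) before invoking BVP uniqueness. That is precisely what the isolated-indifference and nondegenerate-curvature clauses in \Cref{ass:selection-regularity}(ii) are engineered to do; once the single-band ladder form is pinned down, the remaining steps are standard S--s verification and add nothing beyond bookkeeping.
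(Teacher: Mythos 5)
Your proposal is correct and follows essentially the same route the paper takes: \Cref{cor:ladder} is obtained by chaining \Cref{prop:selection} (purity via nondegenerate curvature plus the no-local-time lemma), the single-contiguous-band geometry, \Cref{prop:BVP-unique} (uniqueness of the trigger/target quadruple and $V$), and \Cref{thm:verification} (QVI verification of the ladder policy). One small nit: the growth condition you invoke to rule out disconnected inaction bands is the unnumbered ``Growth and consistency'' condition in Section~\ref{sec:verification}, not \eqref{eq:geometry}, which is the ladder-geometry display itself.
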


\subsection{Robustness}

\paragraph{Small deviations from silence.}
If $\lambda$ is small but strictly positive near a trigger, the generator of $m_t$ acquires a jump term of order $O(\lambda)$ instead of $0$. The curvature argument in Step~1 of the proof of Proposition~\ref{prop:selection} is unchanged: on any nontrivial interval near $\beta_i$, one action strictly dominates, so mixing on a set of positive measure is still strictly dominated. The no-local-time lemma then applies approximately: even with small $\lambda>0$, the residual jump martingale is $O(\lambda)$, so the process cannot sustain positive expected residence time on a knife-edge point without paying a first-order mixing loss. Trigger locations and the no-mixing conclusion are therefore stable to $O(\lambda)$ perturbations.

\paragraph{Alternative intensity observability.}
Our baseline treatment only requires that the policy $z\mapsto\lambda(z)$ be publicly known so that the compensator of $N$ is predictable. Whether the realized intensity path $(\lambda_t)$ itself is observed in real time or not only affects the filter through small corrections. If $(\lambda_t)$ is observable, then, as noted in Remark~\ref{rem:Lambda_measurability}, the event $\{\lambda_t=0\}$ on a posted silence window provides a coarse signal that $z_t$ lies in that band; the resulting correction to the drift of $m_t$ is of order $\delta$ and vanishes in the small-window limit. If $(\lambda_t)$ is not observable, one can instead extend the state to include beliefs over $\lambda(\cdot)$ and impose consistency. In either case, the selection logic goes through provided the induced public-belief martingale still vanishes (or is negligible) in a neighborhod of the boundary under the equilibrium policy, so that the process cannot linger with positive expected local time at an indifference point. We do not need the full extended filter for the main results.

\paragraph{Relation to classic S--s.}
The ladder is the two-trigger analogue of S--s policies in irreversible investment with fixed adjustment costs \citep{dixitpindyck1994,bertola1994restud,abel1996restud}. Here, posting local silence plays the role of an equilibrium selection device in a dynamic disclosure environment. Curvature at each trigger ensures that away from the exact trigger one action strictly dominates. The silence window removes the martingale part of public beliefs and, via the no-local-time lemma for $m_t$, prevents the belief from ``sticking’’ on the knife-edge; the controlled $z_t$ cannot stick because it jumps instantly upon hitting the trigger. The combination enforces high contact at the triggers and eliminates knife-edge interior mixing, consistent with the dynamic disclosure logic in \citet{guttman2014aer} and \citet{orlov2020jpe}.

\section{Financing Mapping and Levered Problem}
\label{app:financing}

This appendix does three things. First, it gives the verification argument for the levered equity and debt values. Second, it spells out the takeover-envelope logic. Third, it decomposes the loss from leverage into (i) a \emph{pre-default agency wedge} and (ii) a \emph{takeover irreversibility wedge}, and shows exactly when the former vanishes. This decomposition underlies the bound in Proposition~\ref{prop:debt-bound} in Section~\ref{sec:finance}.

Primitives and regularity assumptions are as in Sections~\ref{sec:verification} and~\ref{sec:micro_silence}. Throughout, $A^{\FB}(z,m)$ denotes the first-best (all-equity) value function: the maximal present value of cash flows $\pi(z,m)$ generated by a single residual claimant that chooses impulses, disclosure / publication intensity, etc., but faces \emph{no} debt, \emph{no} default option, and \emph{no} senior claimant.\footnote{Formally, $A^{\FB}$ solves the standard impulse-control QVI without coupons $c_d$ and without the limited-liability / default term $-S$ that appears in the levered problem. See Section~\ref{sec:verification}.}

\subsection{Equity and debt under leverage: verification}
\label{subsec:lev-verification}

Let $S(z,m)$ be the stationary Markov \emph{equity} value under leverage, and let $Y(z,m)$ be the market value of outstanding debt. Equity is residual and can: (i) run the process in continuous time, (ii) trigger a patch or pivot (the two costly resets), and (iii) choose default, after which equity walks away with $0$ and debtholders (or a buyer of the distressed asset) take over.

Let $\lambda(\cdot)$ denote the publication policy, and write $\mathcal{L}$ for the (joint) inaction-region generator of the state $(z_t,m_t)$.\footnote{Formally, $\mathcal{L}$ is the full controlled generator of the $(z_t,m_t)$ process in the environment of Section~\ref{sec:micro_silence}. For any smooth test function $f(z,m)$, the term $(\mathcal{L}f)(z,m)$ collects (i) the drift and diffusion terms in the private state $z$, (ii) the deterministic drift of the public belief $m$, and (iii) whenever $\lambda(z)>0$, the disclosure-induced jump term. In particular, on any inaction block the term $\mathcal{L}S(z,m)$ in \eqref{eq:QVI-equity-appD}--\eqref{eq:equity-ODE-lev} already contains the contribution from the $m$-dynamics (the $\partial_m S$ term and the disclosure-jump expectation), so we do not display these components separately. Moreover, $\mathcal{L}$ need not be linear in the controls.} As in Section~\ref{sec:verification}, define the impulse operator
\[
(\mathcal{M}S)(z,m)
:=
\max\{\,S(z_1^\ast,m)-K_1,\; S(z_2^\ast,m)-K_2\,\},
\]
where $(\beta_1,\beta_2; z_1^\ast,z_2^\ast)$ are the trigger/target pairs for the patch and pivot rungs.

Limited liability and seniority add a third control: default. The equity QVI is
\begin{equation}
\label{eq:QVI-equity-appD}
\max\Big\{
rS - \mathcal{L}S - \big(\pi(z,m)-c_d\big) + k(\lambda(z)),
\;\;
S-\mathcal{M}S,
\;\;
-\,S
\Big\}
= 0.
\end{equation}
Default is optimal on the stopping set $\{(z,m): S(z,m)=0\}$.

On any solvency / inaction block with $\beta_1<z<\beta_2$ and $S(z,m)>0$, equity either continues or impulses. There, $S$ satisfies
\begin{equation}
\label{eq:equity-ODE-lev}
rS(z,m)
=
\pi(z,m)-c_d
+
\mathcal{L}S(z,m)
-
k(\lambda(z)),
\qquad
\beta_1<z<\beta_2,
\end{equation}
together with:
\begin{itemize}[leftmargin=1.25em,itemsep=2pt,topsep=2pt]
\item value matching and high contact at each trigger $\beta_i$,
\item target optimality $S'_z(z_i^\ast,m)=0$ at each target $z_i^\ast$, and
\item the default boundary condition $S=0$ on $\{S \le 0\}$.
\end{itemize}
At a reset from $\beta_i$ to $z_i^\ast$, we have
\[
S(\beta_i,m)=S(z_i^\ast,m)-K_i
\quad\text{and}\quad
S'_z(\beta_i,m)=0,
\]
with the same boundary equalities as in the unlevered verification (applied here to the equity value $S$, not to the first-best $A^{\FB}$). The HJBs differ and therefore the optimal thresholds generally differ. All of this is augmented by the solvency region $\{S>0\}$ determined endogenously by~\eqref{eq:QVI-equity-appD}.

\paragraph{Debt.}
Debtholders receive the coupon $c_d$ until the (equity-chosen) default time $T^\ast$, at which point they seize the firm (or sell it). Their continuation value $Y(z,m)$ solves a linear stopping / control problem that \emph{takes equity's strategy, including its impulses and default boundary, as given}. On the solvent set $\{S>0\}$,
\begin{equation}
\label{eq:debt-ODE}
rY(z,m)
=
c_d
+
\mathcal{L}Y(z,m),
\qquad
\text{for } \beta_1<z<\beta_2 \text{ with } S(z,m)>0,
\end{equation}
and at default, i.e.\ on the boundary where $S(z,m)=0$,
\begin{equation}
\label{eq:debt-BC}
Y(z,m)
=
Y^{\mathrm{TO}}(z,m),
\qquad
\text{whenever } S(z,m)=0.
\end{equation}
Since the fixed cost $K_i$ is paid entirely by shareholders and does not reduce the collateral available to debtholders, and the reset moves the state to $(z_i^{*},m)$ inside the same solvent region (debt is senior), $Y$ does not jump down when equity triggers a patch or pivot. At each equity reset, debtholders simply inherit the post-reset continuation value $Y(z_i^\ast,m)$.

\paragraph{Total levered value.}
Fix an initial state $(z_t,m_t)$ and let $T^\ast$ be the (endogenous) default time induced by the equity policy solving~\eqref{eq:QVI-equity-appD}. Standard verification / Feynman--Kac arguments applied to~\eqref{eq:equity-ODE-lev}--\eqref{eq:debt-BC} give the present-value representation
\begin{equation}
\label{eq:Y-def}
S_t+Y_t
=
\E_t\!\left[
\int_t^{T^\ast} e^{-r(s-t)}\,\pi(z_s,m_s)\,ds
+
e^{-r(T^\ast-t)}\,Y^{\mathrm{TO}}(z_{T^\ast},m_{T^\ast})
\right].
\end{equation}
Here $Y^{\mathrm{TO}}(z,m)$ is the value captured at takeover by debtholders (or a buyer of the distressed asset), given the takeover technology described below.

\subsection{Takeover envelope, the pre-default agency wedge, and the levered bound}
\label{subsec:takeover-envelope}

We now connect \eqref{eq:Y-def} to the first-best benchmark $A^{\FB}$ and to the irreversibility parameters $(\varphi_1,\varphi_2)$ from Section~\ref{sec:finance}. This is where we make the pre-default wedge completely explicit.

\paragraph{Takeover map and tightness.}
At default $T^\ast$, control passes to debtholders (or to an acquirer of the distressed asset). We model post-default restructuring by a \emph{tight takeover switching-cost map} (Definition~\ref{def:takeover-map}): the acquirer can pay at most $\varphi_1$ to implement the low-cost ``patch'' upgrade or at most $\varphi_2$ to implement the high-cost ``pivot'' upgrade, thereby aligning the asset with the \emph{first-best} continuation going forward. Let
\[
\varphi_{\max} := \max\{\varphi_1,\varphi_2\}.
\]

\begin{lemma}[Takeover envelope]
\label{lem:takeover-envelope-fin}
Under a tight takeover map, for any default state $(z,m)$,
\begin{equation}
\label{eq:TO-envelope}
Y^{\mathrm{TO}}(z,m)
\;\ge\;
A^{\FB}(z,m)
-
\varphi_{\max}.
\end{equation}
\end{lemma}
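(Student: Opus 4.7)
The plan is to establish the envelope by a simple dominance argument: since $Y^{\mathrm{TO}}(z,m)$ is defined as the value attained by the acquirer under optimal post-default management, it is bounded below by the value of any single admissible acquirer strategy. It therefore suffices to exhibit one such strategy whose present value at takeover is at least $A^{\FB}(z,m)-\varphi_{\max}$. The whole argument will reduce to constructing that candidate strategy and performing one cash-flow accounting step on its associated stream.

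The natural candidate is the \emph{tight mimicking strategy} guaranteed by Definition \ref{def:takeover-map}. By tightness, there exists an acquirer policy that, starting from the inherited state $(z,m)$, attains the first-best continuation trajectory while invoking option $i$ only at the takeover price at most $\varphi_i$ rather than the operating price $K_i$. I would interpret this concretely as a one-time post-default realignment at cost no more than $\varphi_{\max}$, followed by execution of the first-best Markov impulse control with standard operating costs $K_i$ going forward from the post-alignment state.

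With this candidate in hand, the value computation is essentially a Feynman--Kac accounting step. The post-realignment continuation cash-flow stream coincides, by construction of the tight mimic, with the first-best stream starting from $(z,m)$ that defines $A^{\FB}(z,m)$, so its discounted value at the takeover instant is exactly $A^{\FB}(z,m)$. The only additional item on the acquirer's ledger is the one-time takeover realignment charge, which tightness bounds above by $\max\{\varphi_1,\varphi_2\}=\varphi_{\max}$. Hence the mimicking strategy delivers present value at least $A^{\FB}(z,m)-\varphi_{\max}$, and the envelope \eqref{eq:TO-envelope} follows by optimality of $Y^{\mathrm{TO}}$.

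The main hurdle I expect is not analytic but definitional: pinning down what ``never paying more than $\varphi_i$ for option $i$'' in Definition \ref{def:takeover-map} buys us in aggregate. Under the natural reading that this is a per-invocation cap and that tight alignment hits the first-best trajectory after a single option use, aggregate alignment cost is automatically at most $\varphi_{\max}$ and the accounting above goes through verbatim. If multiple invocations were required for alignment, one would have to either strengthen tightness to an explicit aggregate bound or weaken the lemma accordingly; since the ``least-reversibility'' interpretation of $\varphi_{\max}$ in Proposition \ref{prop:debt-bound} only makes economic sense with the aggregate bound, I would make the single-invocation reading explicit in Definition \ref{def:takeover-map}. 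With that convention stated, standard strong-Markov arguments at $T^\ast$ close the proof without any further hypotheses.
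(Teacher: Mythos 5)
Your proposal is correct and follows essentially the same route as the paper's proof: lower-bound $Y^{\mathrm{TO}}(z,m)$ by the value of the tight mimicking strategy that pays at most $\varphi_{\max}$ once at takeover to realign the inherited state with the first-best post-default choice and then follows the first-best continuation, whose value is $A^{\FB}(z,m)$. The definitional caveat you raise (reading tightness as a single realignment charge at takeover, with subsequent resets at the ordinary operating costs) is exactly the reading the paper's proof adopts, so no further argument is needed.
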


\begin{proof}
By tightness, at takeover the acquirer can spend at most $\varphi_{\max}$ to reset the inherited $(z,m)$ to whatever post-default state the first-best planner would choose (with zero cost if that continuation does not call for an immediate reset), and then follow that first-best continuation forever. Because $A^{\FB}$ is the value of that continuation, this yields \eqref{eq:TO-envelope}.
\end{proof}

\noindent\emph{Comparison.} Inequality~\eqref{eq:TO-envelope} implies the standard upper bound on the agency wedge, in the spirit of \citet{manso2008ecta}, Prop.~1: with a tight takeover map, the loss relative to first-best at the default date is bounded by the present value of the least switching cost.

\paragraph{A supermartingale and the pre-default agency wedge.}
Fix the (possibly distorted) \emph{equity policy} that actually generates $(S,Y)$ and $T^\ast$. Define
\begin{equation}
\label{eq:martingale-candidate}
M_s
:=
e^{-r(s-t)} A^{\FB}(z_s,m_s)
+
\int_t^s e^{-r(u-t)} \pi(z_u,m_u)\,du,
\qquad
s\ge t.
\end{equation}
Interpretation: $M_s$ is the first-best continuation value, evaluated \emph{along the equity-chosen path up to time $s$}. We drip in the realized flow $\pi$ and attach the first-best continuation $A^{\FB}$ at the current state.

Because $A^{\FB}$ solves the first-best QVI from Section~\ref{sec:verification} and is pointwise the \emph{supremum} over admissible impulse / publication policies \emph{without} debt, standard verification implies
\begin{equation}
\label{eq:supermartingale}
\text{$M_s$ is a supermartingale under any admissible policy.}
\end{equation}
Equivalently, for any stopping time $\tau$ (in particular $\tau=T^\ast$),
\begin{equation}
\label{eq:FB-dpp-ineq}
A_t^{\FB}
\;\ge\;
\E_t\!\left[
\int_t^{\tau} e^{-r(s-t)} \pi(z_s,m_s)\,ds
+
e^{-r(\tau-t)} A^{\FB}(z_{\tau},m_{\tau})
\right].
\end{equation}
As interpretation, a social planner with no debt can always imitate the levered firm's policy up to $\tau$, then switch to first best at $\tau$. So the levered policy can never \emph{beat} $A^{\FB}$. Inequality in \eqref{eq:FB-dpp-ineq} is weak because the levered policy may deviate from first best before $\tau$. If equity actually follows first best up to $\tau$, then \eqref{eq:FB-dpp-ineq} binds at $\tau$.

We now \emph{define} the pre-default agency wedge as that slack:
\begin{equation}
\label{eq:agency-def}
\mathcal{A}_t
:=
A_t^{\FB}
-
\E_t\!\left[
\int_t^{T^\ast} e^{-r(s-t)} \pi(z_s,m_s)\,ds
+
e^{-r(T^\ast-t)} A^{\FB}(z_{T^\ast},m_{T^\ast})
\right].
\end{equation}
By \eqref{eq:FB-dpp-ineq} with $\tau=T^\ast$ we have
\begin{equation}
\label{eq:agency-nonneg}
\mathcal{A}_t \;\ge\; 0.
\end{equation}
From an economic standpoint, $\mathcal{A}_t$ is the ex ante loss (from the first-best planner's point of view) generated by the levered firm's \emph{pre-default} behavior: distorted reset timing, distorted disclosure tempo, strategically acelerated default, and so on. If the levered firm \emph{happens} to follow the first-best policy all the way up to $T^\ast$ (for example, along the ``safe patch block'' of Section~\ref{sec:finance}, where default is off path and $\varphi_1=0$ makes the patch rung debt-neutral), then $\mathcal{A}_t=0$.

\paragraph{Putting it together.}
Start from \eqref{eq:Y-def} and apply Lemma~\ref{lem:takeover-envelope-fin} at $T^\ast$:
\begin{align}
S_t+Y_t
&=
\E_t\!\left[
\int_t^{T^\ast} e^{-r(s-t)} \pi(z_s,m_s)\,ds
+
e^{-r(T^\ast-t)}\,Y^{\mathrm{TO}}(z_{T^\ast},m_{T^\ast})
\right]
\label{eq:SY-start}
\\[4pt]
&\ge
\E_t\!\left[
\int_t^{T^\ast} e^{-r(s-t)} \pi(z_s,m_s)\,ds
+
e^{-r(T^\ast-t)}
\big(
A^{\FB}(z_{T^\ast},m_{T^\ast})
-
\varphi_{\max}
\big)
\right]
\label{eq:SY-use-envelope}
\\[4pt]
&=
\E_t\!\left[
\int_t^{T^\ast} e^{-r(s-t)} \pi(z_s,m_s)\,ds
+
e^{-r(T^\ast-t)} A^{\FB}(z_{T^\ast},m_{T^\ast})
\right]
-
\E_t\!\left[
e^{-r(T^\ast-t)} \varphi_{\max}
\right].
\label{eq:SY-split}
\end{align}
Now rearrange \eqref{eq:SY-split} using the definition of $\mathcal{A}_t$ in \eqref{eq:agency-def}:
\begin{equation}
\label{eq:SY-prebound}
S_t+Y_t
\;\ge\;
A_t^{\FB}
-
\mathcal{A}_t
-
\E_t\!\left[
e^{-r(T^\ast-t)} \varphi_{\max}
\right].
\end{equation}
Equivalently,
\begin{equation}
\label{eq:wedge-decomp}
A_t^{\FB} - (S_t+Y_t)
\;\le\;
\mathcal{A}_t
+
\E_t\!\left[
e^{-r(T^\ast-t)} \varphi_{\max}
\right].
\end{equation}

\medskip
\noindent\textbf{Interpretation of \eqref{eq:wedge-decomp}.}
The wedge between the all-equity first best and the levered (equity+debt) value splits into two nonnegative pieces:
\begin{enumerate}[label=(\roman*),leftmargin=1.5em,itemsep=2pt,topsep=2pt]
\item \emph{Pre-default agency wedge} $\mathcal{A}_t$: all distortions \emph{before} default (delayed patches, distorted disclosure tempo, inefficiently accelerated default, \dots). By \eqref{eq:agency-nonneg}, $\mathcal{A}_t\ge 0$. If equity's actual policy coincides with the first-best policy up to $T^\ast$, then $\mathcal{A}_t=0$.
\item \emph{Least-irreversibility wedge at takeover}: even if the pre-default path is efficient, the acquirer who takes over at $T^\ast$ may still need to spend up to $\varphi_{\max}$ to ``rewind'' the asset into the first-best continuation. The discounted expectation of that cost is the second term in \eqref{eq:wedge-decomp}.
\end{enumerate}

\paragraph{Connection to Proposition~\ref{prop:debt-bound}.}
Equation~\eqref{eq:wedge-decomp} is the general statement. Now impose the (sufficient) condition emphasized in Section~\ref{sec:finance}: on the ``safe patch block,'' where
\begin{enumerate}[label=(\alph*),leftmargin=1.5em,itemsep=2pt,topsep=2pt]
\item $\varphi_1=0$, so the patch rung is fully reversible at takeover;
\item default is off path because the solvency region lies strictly above the default frontier; and
\item equity's patch timing (trigger and target) \emph{coincides} with the first-best patch policy,
\end{enumerate}
we have:
\begin{itemize}[leftmargin=1.5em,itemsep=2pt,topsep=2pt]
\item $T^\ast=\infty$ on that history (no default while staying on the patch block), so $\E_t[e^{-r(T^\ast-t)}\varphi_{\max}]=0$;
\item $\mathcal{A}_t=0$, because equity's pre-default policy \emph{is} the first-best patch policy on that block.
\end{itemize}
Then \eqref{eq:wedge-decomp} collapses to
\[
A_t^{\FB}
=
S_t+Y_t
\quad
\text{on that history.}
\]

More generally, whenever equity follows the first-best policy up to $T^\ast$ (so $\mathcal{A}_t=0$), \eqref{eq:wedge-decomp} reduces to
\begin{equation}
\label{eq:wedge-bound-appD} 
A_t^{\FB} - (S_t+Y_t)
\;\le\;
\E_t\!\left[
e^{-r(T^\ast-t)} \varphi_{\max}
\right],
\end{equation}
which is the inequality reported as \eqref{eq:wedge-bound} in Section~\ref{sec:finance} and used in Proposition~\ref{prop:debt-bound}. Equation~\eqref{eq:wedge-decomp} simply makes explicit that, off those debt-neutral regions, an additional pre-default agency wedge $\mathcal{A}_t$ can appear.

\subsection{Constant-coefficient illustration}
\label{subsec:CC-illustration}

Suppose $\mu_\theta(z)\equiv\mu$, $\phi(z)\equiv\sigma>0$, and $\pi(z,m)=\pi_0(z)+\eta\,\Lambda(\varpi(m;\alpha))$ as in \eqref{eq:pi-with-adoption}, and the solvent inaction band is $(\beta_1,\beta_2)$. Then \eqref{eq:equity-ODE-lev} and \eqref{eq:debt-ODE} admit exponential-form solutions on $(\beta_1,\beta_2)$ exactly as in the unlevered benchmark; only the boundary conditions and the coupon differ. Boundary conditions:
\begin{itemize}[leftmargin=1.25em,itemsep=2pt,topsep=2pt]
\item value matching and high contact for each costly reset (patch / pivot),
\item $S=0$ on the endogenous default frontier,
\item $Y=Y^{\mathrm{TO}}$ at that default frontier,
\item continuity of $Y$ across equity's impulses (seniority).
\end{itemize}

Comparative statics line up with Section~\ref{sec:finance}: the pivot trigger moves outward in $K_2$ and $\varphi_2$, and inward in the coupon burden $c_d$; the ``safe patch block'' is unchanged if $\varphi_1=0$ and default stays off path. On that block, $Y$ is essentially a riskless perpetuity $c_d/r$, $\mathcal{A}_t=0$, and the wedge in \eqref{eq:wedge-decomp} is governed entirely by $\varphi_{\max}$, which is $0$ when $\varphi_1=0$ and no pivot is ever needed before any default event.

\subsection{Tightness and examples}
\label{subsec:tightness-examples}

The decomposition \eqref{eq:wedge-decomp} is tight. To see this, construct a case in which:
\begin{enumerate}[label=(\roman*),leftmargin=1.5em,itemsep=2pt,topsep=2pt]
\item the first-best policy would carry out a \emph{pivot} strictly before the state ever hits a solvency / default frontier;
\item under leverage, equity instead accelerates default just \emph{before} paying that expensive pivot, because of the coupon burden $c_d$;
\item at takeover, the acquirer immediately pays $\varphi_2$ to execute that pivot, then follows first-best continuation forever.
\end{enumerate}
Along such a path:
\begin{itemize}[leftmargin=1.25em,itemsep=2pt,topsep=2pt]
\item pre-default behavior \emph{up to} $T^\ast$ mimics the first-best policy except for the early default itself, so $\mathcal{A}_t$ can be made abitrarily small; and
\item $T^\ast$ arrives right before the high-cost pivot, so $e^{-r(T^\ast-t)}\varphi_{\max}$ essentially \emph{is} the wedge.
\end{itemize}
In the limit, \eqref{eq:wedge-decomp} binds with $\varphi_{\max}=\varphi_2$, giving
\[
A_t^{\FB}-(S_t+Y_t)
=
\E_t\!\left[
e^{-r(T^\ast-t)}\varphi_2
\right].
\]

\medskip
\noindent\textbf{Takeaway.}
Equation~\eqref{eq:wedge-decomp} separates two distinct channels through which leverage can destroy total surplus relative to the first best:
\begin{enumerate}[label=(\alph*),leftmargin=1.5em,itemsep=2pt,topsep=2pt]
\item a \emph{pre-default agency wedge} $\mathcal{A}_t$ that reflects distorted disclosure tempo, distorted intervention timing, or strategically accelerated default; and
\item a \emph{least-irreversibility wedge} at takeover, governed by $\varphi_{\max}$.
\end{enumerate}
On the ``safe patch block'' in Section~\ref{sec:finance} (low-cost, fully reversible maintenance moves, $\varphi_1=0$, and default off path), part (a) vanishes and part (b) collapses to zero. There, leverage is literally surplus-neutral on that block: $A_t^{\FB}=S_t+Y_t$.

\section{Data, Variables, and Robustness}
\label{app:empirics}

This appendix documents (i) the telemetry to be assembled, (ii) the construction of the key variables used in Section~\ref{sec:empirical}, and (iii) the provenance, auditing, identification, and robustness plan. It is a self-contained measurement blueprint (to be moved to an online appendix/ potentially re-used in a separate paper on empirical implementation).

\subsection{Panel structure and timing}

The primary panel is firm $i$ by calendar month $t$. Where product-line or model-line disclosures are separable, the unit refines to product/module $p$ and the panel is $(i,p,t)$. Event time $\tau$ is constructed around focal disclosure/release events. Unless stated otherwise, designs take these $(i,t)$ or $(i,p,t)$ panels as the estimating sample and convert calendar time $t$ to relative event time $\tau$ around a ``major reset'' (defined below).

\subsection{Core variables}
\label{app:core-vars}

\begin{enumerate}[label=(\alph*),leftmargin=1.5em,itemsep=1pt,topsep=2pt]

\item \textbf{Publication intensity} $\hat\lambda_{it}$ (or $\hat\lambda_{ipt}$).

Monthly count of \emph{firm-published} technical/safety signals by $i$ in month $t$. Signals include evaluation cards, benchmark updates, safety mitigation notes, incident/advisory notes, release notes, and changelog entries written by the firm.\footnote{Analyst commentary, press coverage, leaked benchmarks, and social-media rumor are excluded; the theory pertains to the firm's own disclosure tempo, i.e.\ control of $\lambda_t$.} An event-time analogue $\hat\lambda_{i\tau}$ (or $\hat\lambda_{ip\tau}$) is centered on each major reset and serves as the dependent variable in the event-study design~\eqref{eq:event-lambda-spec}.

\item \textbf{Content dispersion} $\widehat{\Var}^{x}_{it}$.

Within-month cross-signal dispersion of \emph{standardized} performance/safety metrics disclosed by firm $i$ in month $t$, harmonized across metric families. For each signal $s$, parse a numeric value $x_s$ and a benchmark family $F_s$; $z$-score using pooled family means $\bar{x}_{F}$ and s.d.s $\sigma_F$ across all firms/months:
\[
z_s := \frac{x_s - \bar{x}_{F_s}}{\sigma_{F_s}},
\qquad
\widehat{\Var}^{x}_{it} := \Var\!\bigl(\{z_s:\, s \text{ disclosed by $i$ in month $t$}\}\bigr),
\]
with the convention in~\eqref{eq:telemetry-variance} when $N_{it}\in\{0,1\}$.

\emph{Mapping to theory.} Outside observers see noisy draws $y_n = z_{T_n} + \varepsilon_n$ from a one-dimensional latent state $z_t$ arriving on a controlled Cox clock $\lambda_t$. After standardization, each signal-level $z_s$ is one such draw, and $\widehat{\Var}^{x}_{it}$ is the within-month variance of those draws. Under the coding convention in~\eqref{eq:telemetry-variance}, when the publication clock is nearly shut off and $N_{it}\in\{0,1\}$ the dispersion proxy is mechanically set to $0$ or missing, so the predicted ``dispersion dip'' is a direct consequence of the intensity dip, not an additional assumption about lower intrinsic noise. The model therefore predicts a dip in both intensity and (measured) dispersion in the pre-reset quiet window (posted silence with $\lambda_t\to 0$), which~\eqref{eq:telemetry-variance} and~\eqref{eq:event-var-spec} target. Pooled normalization preserves the moment; within-firm-month rescaling would mechanically erase the collapse.

\emph{Cadence-only robustness.} Let $\widehat{\Var}^{\mathrm{time}}_{it}$ be the intra-month interquartile range of timestamps for $i$ in month $t$, and $\widehat{\Var}^{\mathrm{time}}_{i\tau}$ the event-time analogue. This proxies cadence tightening independent of metric content and is used as a robustness outcome for S1.

\item \textbf{Release classification.}

Each firm-authored disclosure is tagged \texttt{patch}, \texttt{pivot}, \texttt{release}, or \texttt{other} using rule-based filters plus a light classifier; ambiguous high-salience items are hand-adjudicated. Precedence on the same timestamp: \emph{pivot/release} $>$ \emph{patch}. In estimation, \texttt{pivot}$\cup$\texttt{release} is pooled as a ``major reset.''

\item \textbf{Patch cascades and hazards.}

For each major reset at $\tau=0$, record the ordered sequence of subsequent patches $k=1,2,\dots$ for the same firm (or firm-product). Compute inter-event durations $\Delta_{ik}$ and estimate hazard/duration models as in~\eqref{eq:hazard-spec}. Interpretation: the reversible rung fires repeatedly inside its band; under high reversibility, debt does not distort timing on this ``safe patch block.''

\item \textbf{Reversibility proxies} $\mathrm{RevProxy}_i$.

Composite measures of how close a reset is to ``fully reversible.'' Candidate components include documented rollback/feature-flag/kill-switch infrastructure; modularity/separability of subsystems (share of changelog lines in isolated mitigations/filters); explicit ``revert/rolled back/restored'' language; and monolith vs.\ componentized inference stack inferred from cross-references. These proxies map to takeover switching costs and $\varphi_{\max}$ in Section~\ref{sec:finance} and enter S3 and S4.

\item \textbf{Financing.}

For public firms: leverage ratios, interest coverage, and coupon burdens from filings; for private firms: disclosed venture debt/structured financing where available. These form the leverage/servicing measures in~\eqref{eq:debt-insensitivity-spec}. In the theory, on the reversible rung ($\varphi_1\approx0$), default is off path and patch timing is debt-neutral; leverage distortions load on the high-cost rung (pivot).
\end{enumerate}

\subsection{Construction pipeline}
\label{app:construction}

\begin{itemize}[leftmargin=1.25em,itemsep=1pt,topsep=2pt]
\item \textbf{Timestamping.} All disclosures are timestamped in UTC. If only ``last updated'' exists, infer first-publish from syndication feeds, sitemaps, or repository tags; ambiguous cases are flagged for review.

\item \textbf{De-duplication.} Canonicalize URLs (strip tracking, follow redirects), and fuzzy-match titles/snippets to collapse cross-posts within 24 hours into a single canonical record. Each canonical item contributes once to $\hat\lambda_{it}$ and to $\widehat{\Var}^{x}_{it}$.

\item \textbf{Signal identification.} High-recall filters (``patch,'' ``hotfix,'' ``latency improvement,'' ``new base model,'' ``next-gen,'' ``full rollout,'' ``mitigation'') followed by a lightweight supervised classifier to assign \texttt{release\_class}. Ambiguous high-salience items are manually adjudicated.

\item \textbf{Standardization for $\widehat{\Var}^{x}_{it}$.} Parse numeric metrics as $(x_s,\texttt{metric\_name},\texttt{metric\_unit})$ and map to a benchmark family $F_s$ via a deterministic dictionary. Compute pooled $(\bar{x}_F,\sigma_F)$ across all firms/months; $z$-score via $z_s=(x_s-\bar{x}_F)/\sigma_F$; take within-month variance across $\{z_s\}$ to obtain $\widehat{\Var}^{x}_{it}$. Rationale for pooled normalization: (i) heterogeneous metrics become comparable draws of the same latent state, (ii) the pre-reset collapse in dispersion survives.

\item \textbf{Timing dispersion $\widehat{\Var}^{\mathrm{time}}_{it}$.} Intra-month IQR of timestamps (hours). Silence windows generate bunch-and-halt patterns; the IQR shrinks mechanically.
\end{itemize}

\subsection{Design alignment with theory and signatures S1--S5}
\label{app:design-theory}

\begin{itemize}[leftmargin=1.25em,itemsep=1pt,topsep=2pt]

\item \textbf{S1: Pre-reset dips in cadence and dispersion.}

Event studies~\eqref{eq:event-lambda-spec}--\eqref{eq:event-var-spec} trace $\hat\lambda_{i\tau}$ and $\widehat{\Var}^{x}_{i\tau}$ in event time around a major reset (\texttt{pivot}$\cup$\texttt{release}). Predicted patterns: (i) a dip in both variables just \emph{before} the reset (posted silence with $\lambda_t=0$), (ii) a jump to a new plateau \emph{after} the reset. Cadence-only $\widehat{\Var}^{\mathrm{time}}_{i\tau}$ should mirror the dip. Placebo: option-implied volatility on a public parent should \emph{not} display a pre-release dip (options typically spike pre-event, whereas the firm starves its own outward signals here).

\item \textbf{S2: Two-plateau outcome distribution.}

Post-reset bimodality is consistent with two endogenous targets $(z_1^\ast,z_2^\ast)$: one after reversible patches, one after major resets. In the empirical implementation, the finite-mixture components are not point masses: each component is allowed its own variance, so the two bumps are interpreted as the ``smeared'' post-reset distributions around $z_1^\ast$ and $z_2^\ast$ generated by diffusion between the reset and disclosure plus measurement noise. Estimators include finite-mixture fits and Hartigan dip tests on post-event metrics, stratified by \texttt{patch} vs.\ \texttt{pivot/release}. Falsifier: robust unimodality after audited classification.

\item \textbf{S3: Debt insensitivity on the reversible rung.}

Hazards per (8.4), focusing on the high-$\mathrm{RevProxy}$ block: define the ``high-$\mathrm{RevProxy}$'' block as those observations with $\mathrm{RevProxy}_i$ above a high cutoff (for example, the top quartile of the empirical $\mathrm{RevProxy}_i$ distribution), and let
\[
\bar R \;:=\; \E\!\big[\mathrm{RevProxy}_i \,\big|\, \mathrm{RevProxy}_i \text{ in the high block}\big]
\]
denote the sample mean of $\mathrm{RevProxy}_i$ in that block. In the hazard specification (8.4) the derivative of $\log h_{it}$ with respect to leverage at $\mathrm{RevProxy}_i=\bar R$ is $\rho_1+\rho_3\bar R$, so the S3 null is
\[
H_0:\;\rho_1+\rho_3\bar R = 0,
\]
i.e.\ no leverage effect on patch timing at a representative high-reversibility rung. Leverage effects may load on the high-cost rung (pivot timing) but not on patch timing when reversibility is high, in line with the tight takeover bound in Proposition~\ref{prop:debt-bound}.

\item \textbf{S4: Patch cascades.}

Duration models~\eqref{eq:hazard-spec} for $\Delta_{ik}$, time since major reset, and reversibility proxies. Prediction: fast follow-on patches and debt-insensitive patch timing when $\mathrm{RevProxy}_i$ is high; no pre-release bunching.

\item \textbf{S5: Adoption boundary.}

Where platform/API uptake is observed, we focus on a firm-specific adoption threshold $\alpha_i$ implied by the theory. In the model, $\alpha_i$ is pinned by the smooth-fit/adoption condition $\bar\mu(\alpha)\,S'(\alpha) = r\,S(\alpha)$ in eq.~(8.6) (reproduced as~\eqref{eq:alpha-general-again}); it is a cutoff in the latent state and is not estimated in the RD itself.

In the data, $\alpha_i$ is treated as a cutoff in a scalar running variable $m_{ij}$ that is monotone in the underlying quality index (for example a pooled $z$-score or a linear index in the standardized metrics from S1). Around this cutoff we estimate a conventional local RD and allow the size of the jump at the cutoff to depend on the pre-release silence depth $\mathrm{SilenceDepth}_{ie}$; the formal estimating equation is given in~\eqref{eq:rd-adoption} below. The S5 prediction is that (i) there is a positive jump in platform/API uptake at $\alpha_i$, (ii) that jump is larger for resets with deeper pre-release silence (larger $\mathrm{SilenceDepth}_{ie}$), and (iii) there are no ``jump-overs'' in the RD window, i.e.\ no mass of adopters with $m_{ij}<\alpha_i$ but arbitrarily close to it.
\end{itemize}

\subsection{Provenance and access}
\label{app:provenance}

All telemetry is sourced from firm-authored or firm-controlled public sources.

\begin{enumerate}[label=(\alph*),leftmargin=1.5em,itemsep=1pt,topsep=2pt]
\item \textbf{Vendor blogs and product documentation} (launch notes, evaluation cards, incident/advisory notes, mitigation reports). \emph{Access:} RSS/Atom feeds when available; otherwise structured scraping. Historical backfill via the Internet Archive.

\item \textbf{Release notes and changelogs.} \emph{Access:} ``What’s new''/``Release notes'' portals and public changelogs; normalize timestamps; de-duplicate versus blogs/docs.

\item \textbf{Open-source repositories and registries.} \emph{Access:} GitHub REST/GraphQL for \texttt{releases} and \texttt{tags}; GHSA feeds; registries (PyPI, npm) for version notes authored by maintainers.

\item \textbf{Security/safety advisories.} \emph{Access:} vendor advisory portals; GHSA; NVD/CVE for timestamp corroboration. Count advisories authored or co-authored by the firm as ``firm-published'' disclosures.

\item \textbf{Model cards and hub listings.} \emph{Access:} public model cards/hub metadata; parse versioned cards as first-party disclosures when authored by the vendor.

\item \textbf{Financing.} \emph{Access:} EDGAR (10-K, 10-Q) for public firms (interest expense, long-term debt, covenant notes); for private firms, disclosed venture debt/structured financing and official statements about runway. Time-align to month $t$.

\item \textbf{Silence depth for S5} $\mathrm{SilenceDepth}_{ie}$.

For each major reset $e$ by firm $i$ that gives rise to a platform/API for which uptake can be tracked (the events used in S5), define a scalar ``silence depth'' measure as the short-run drop in disclosure intensity just before the reset.

Let $\hat\lambda_{i\tau}$ be the event-time publication intensity for that reset and let $\bar\lambda_i$ be the firm-level mean intensity outside any event windows (e.g.\ the average $\hat\lambda_{it}$ over months more than $T$ periods away from any major reset). For a baseline $K=3$ pre-reset months we set
\[
\mathrm{SilenceDepth}_{ie}
:= \frac{1}{K}\sum_{\tau=-K}^{-1} \frac{\bar\lambda_i - \hat\lambda_{i\tau}}{\bar\lambda_i}.
\]
Higher values mean deeper pre-reset silence (a larger drop in cadence relative to the firm's usual tempo). Any monotone re-scaling of this object (e.g.\ z-scoring or combining it with the analogous dip in $\widehat{\Var}^x_{i\tau}$) is admissible; in S5 it only enters as a continuous moderator.

\end{enumerate}
\paragraph{Acquisition protocol (blueprint).}

APIs first (official feeds when present). Scraping is a fallback with identifying user agent, \texttt{robots.txt} compliance, conservative rate limits, and no paywall/authentication circumvention. Histories are reconstructed via the Internet Archive and repository tags; source and crawl timestamps are recorded, raw HTML/JSON is snapshotted, and a SHA-256 is stored per item. Entity resolution uses versioned YAML mapping from \texttt{firm\_id} to canonical domains, product/model slugs, and repository organizations. Disclosure rows include: \texttt{firm\_id}, optional \texttt{product\_id}, \texttt{datetime\_utc}, \texttt{source\_type}, normalized \texttt{url}, \texttt{title}, machine \texttt{release\_class} $\in\{\texttt{patch},\texttt{pivot},\texttt{release},\texttt{other}\}$ (with \texttt{pivot}$\cup$\texttt{release} treated as a major reset), parsed numeric metrics (name, value, unit, family $F$), and parser provenance. Only public content is contemplated.

\subsection{Identification strategies}
\label{app:identification}

\begin{enumerate}[label=(\roman*),leftmargin=1.5em,itemsep=1pt,topsep=2pt]

\item \textbf{Event windows (S1, S2, S4).}

Define $\tau=0$ as a major reset. For each firm $i$, build $\tau\in[-T,+T]$ and estimate~\eqref{eq:event-lambda-spec} and~\eqref{eq:event-var-spec} with firm fixed effects and calendar effects; include leads to probe for anticipatory changes (silence must begin \emph{before} the reset if strategic). Post-reset outcome distributions are used for S2; post-reset patch cascades for S4.

\item \textbf{Stacked difference-in-differences (S1).}

Stack firm-by-month panels across events and compare treated windows to matched never-/not-yet-treated windows with firm and calendar-month fixed effects; absorb secular AI-cycle shocks and industry-wide hype waves.

\item \textbf{Patch-cascade hazards (S3, S4).}

Cox or AFT models for the arrival of patch $k+1$ conditional on elapsed time since patch $k$ and since the last major reset; covariates include leverage, $\mathrm{RevProxy}_i$, and interactions (cf.\ \eqref{eq:hazard-spec}).

\item \textbf{Debt insensitivity (S3).}

Patch timing/hazard on leverage, $\mathrm{RevProxy}_i$, and their interaction, with firm and calendar controls (specification~\eqref{eq:debt-insensitivity-spec}). Prediction: conditional on high reversibility, patch timing is debt-neutral; leverage effects load on the high-cost rung (pivot timing).

\item \textbf{Adoption boundary (S5).}

For each platform/API $i$ and associated major reset $e$, recover a firm-specific adoption cutoff $\alpha_i$ from the structural condition in eq.~(8.6) (reproduced as~\eqref{eq:alpha-general-again}). Let $Uptake_{ij}$ be an indicator (or rate) of adoption by unit $j$ (developer, app, or customer segment) within a fixed window after the reset, and let $m_{ij}$ be a scalar running variable that is monotone in the underlying quality index (for example, a linear index in the pooled standardized metrics). The baseline RD specification is
\begin{equation}
\label{eq:rd-adoption}
Uptake_{ij}
= \theta_i
+ f_i(m_{ij}-\alpha_i)
+ \bigl[\beta_0 + \beta_1\,\mathrm{SilenceDepth}_{ie}\bigr]\mathbf{1}\{m_{ij}\ge\alpha_i\}
+ u_{ij},
\end{equation}
estimated on a symmetric window $\lvert m_{ij}-\alpha_i\rvert \le h$ using conventional local-polynomial RD routines (e.g.\ local linear, triangular kernel, MSE-optimal bandwidth). Here $f_i(\cdot)$ is approximated by separate polynomials below and above zero, so that the slope of $Uptake_{ij}$ in $m_{ij}$ can differ to the left and right of the cutoff; $\theta_i$ absorbs platform fixed effects; and $u_{ij}$ is an error term with standard errors clustered at the $(i,e)$ level.

In this parametrization, $\beta_0$ is the average discontinuity in uptake at the theoretical cutoff $\alpha_i$, and $\beta_1>0$ is the prediction that the size of that jump is increasing in pre-release silence depth $\mathrm{SilenceDepth}_{ie}$. The ``no jump-overs'' implication is probed by checking that fitted adoption probabilities remain low just to the left of the cutoff for units with $m_{ij}<\alpha_i$ inside the RD window, even when $m_{ij}$ is close to $\alpha_i$.

\end{enumerate}

\subsection{Robustness}
\label{app:robustness}

\begin{itemize}[leftmargin=1.25em,itemsep=1pt,topsep=2pt]
\item \textbf{Window choice.} Alternative $T\in\{\pm2,\pm3,\pm6\}$ months; placebo pseudo-events at random non-release months.

\item \textbf{Estimator choice for counts.} Poisson, quasi-Poisson, and negative binomial for $\hat\lambda_{it}$ with firm-clustered (and firm-by-calendar) standard errors; document overdispersion and zero inflation.

\item \textbf{Classifier sensitivity.} ``Strict signals'' (evaluation cards/safety writeups only) vs.\ ``broad signals'' (strict plus changelogs). Baselines lie between.

\item \textbf{Cross-firm spillovers.} Competitor-month controls (e.g., mean $\hat\lambda$ of other frontier labs that month) to rule out industry-wide pauses.

\item \textbf{Leave-one-out.} Iteratively drop each large firm and re-run event-study and hazard specifications to check non-dominance.

\item \textbf{Timing dispersion proxy.} Repeat S1 with $\widehat{\Var}^{\mathrm{time}}_{it}$ to isolate cadence from content.

\item \textbf{Discipline and prior art.} Parsing and patch-cascade logic follow software-engineering/infosec telemetry \citep{arora2010isr,li2017ccs}. Telemetry outcomes are distinct from option-implied volatility spikes around earnings/macro news \citep{todorovzhang2025et,alexiou2025rof}, which are market-pricing objects, not firm-chosen disclosure cadence.
\end{itemize}

\subsection{Replication artifacts and reproducibility}
\label{app:replication}

\paragraph{Replication artifacts.}

If a data release is undertaken in subsequent work, the repository would include: (i) the mapping from raw metric names/units to families $F$, (ii) pooled $\bar{x}_F$ and $\sigma_F$ used for $z$-scoring, and (iii) code to compute $\widehat{\Var}^{x}_{it}$ from parsed disclosures.

\paragraph{Reproducibility.}

If an empirical execution proceeds, the replication package would provide (i) crawlers/parsers (with rate limiting and source attribution), (ii) the firm/product YAML mapping, (iii) a reproducible build script that replays API pulls where allowed and falls back to cached HTML/JSON snapshots otherwise to emit the analysis-ready $(i,t)$ and $(i,p,t)$ panels, and (iv) unit tests and deterministic fixtures. Licensed or paid data, if any, would be stubbed with documented placeholders.

\section{Additional Lemmas and Technical Results}
\label{app:lemmas}

This appendix records technical ingredients used throughout. Notation follows Sections~\ref{sec:micro_silence}--\ref{sec:verification} and Appendix~\ref{app:silence}. The private state $(z_t)_{t\ge0}$ is a one-dimensional It\^o diffusion with drift $\mu_\theta(\cdot)$ and diffusion coefficient $\phi(\cdot)$ that are locally Lipschitz with linear growth, so that a unique nonexplosive strong solution exists. The public posterior mean $(m_t)_{t\ge0}$ is a piecewise-deterministic Markov process (PDMP) under the Cox ``publication clock'' described in Appendix~\ref{app:silence}. The discount rate is $r>0$. Impulse controls (``patch'' / ``pivot'') instantaneously reset $z_t$ to a target $z_i^*$ at cost $K_i$; after a reset the diffusion resumes with the same coefficients.

\subsection{No local time and no accumulation}
\label{subsec:noLT}

We first formalize two pieces of regularity used repeatedly:
(i) under a posted silence window, the public-belief process $m_t$ has no local time; and
(ii) the controlled diffusion does not generate ``Zeno'' behavior (infinitely many resets in finite time).

\begin{lemma}[No local time for beliefs under local silence]
\label{lem:no-local-time-belief}
\label{lem:no-local-time} 
Fix an open interval $U\subset\mathbb{R}$ and a (publicly observed / credibly committed)
publication policy that is Markov in the public mean, $\lambda_t=\lambda(m_t)$, with
$\lambda(m)\equiv 0$ for all $m\in U$. Suppose also that no reset is triggered while
$m_t\in U$. Then, on any time interval contained in $\{t:\,m_t\in U\}$, the public
posterior mean $m_t$ has bounded variation and zero quadratic variation. In particular,
for every $a\in U$:
\begin{enumerate}[(i)]
\item the semimartingale local time $L^a_t(m)$ is identically zero; and
\item the set $\{t:\, m_t=a\}$ has Lebesgue measure zero almost surely.
\end{enumerate}
\end{lemma}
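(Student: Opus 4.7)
The plan is to reduce $m_t$ to a pure ODE flow on the silence window and then read off both conclusions from standard results on continuous semimartingales of bounded variation. By hypothesis $\lambda(m)\equiv 0$ on $U$ and no reset fires while $m_t\in U$, so by the PDMP construction of Section~\ref{subsec:beliefs} and Lemma~\ref{lem:equivalence}, neither a Cox publication nor an impulse occurs on any subinterval of $\{t:m_t\in U\}$. Consequently $m_t$ satisfies $\dot m_t=\bar\mu(m_t)$ pathwise there, with Lipschitz $\bar\mu$ inherited from the filter regularity.

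From this, bounded variation and zero quadratic variation are immediate: $t\mapsto m_t$ is absolutely continuous with bounded derivative on any compact subinterval of the silence window, and its decomposition as a semimartingale has no Brownian part and no jumps. For conclusion (i), I would invoke the occupation time formula
\[
\int_0^t f(m_s)\,d\langle m\rangle_s \;=\; 2\int_{\mathbb{R}} f(a)\,L^a_t\,da,
\]
valid for continuous semimartingales. Since $\langle m\rangle\equiv 0$ on the silence window, the left-hand side vanishes for every nonnegative measurable $f$; approximating $\mathbf{1}_{\{a\}}$ and using right-continuity of $a\mapsto L^a_t$ then forces $L^a_t\equiv 0$ for every $a\in U$.

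For conclusion (ii), I would use the ODE representation directly. On any subinterval of $\{t:m_t\in U\}$, $t\mapsto m_t$ is $C^1$, so for any $a\in U$ with $\bar\mu(a)\neq 0$ the crossing of level $a$ is transversal and the preimage $\{t:m_t=a\}$ is a discrete set, hence Lebesgue null. The borderline case $\bar\mu(a)=0$ would in principle allow $m_t$ to stagnate at $a$, but the silence windows used in the paper are posted around triggers and the adoption cutoff precisely where the deterministic drift is nonzero (Assumption~\ref{ass:local_silence_alpha} and its analogues in Section~\ref{sec:micro_silence}), so this degeneracy does not arise. The main obstacle I anticipate is therefore bookkeeping rather than substantive: one must verify the sign-and-nondegeneracy conditions on $\bar\mu$ inside each posted window in each application, but no measure-theoretic subtlety beyond the occupation time formula is required.
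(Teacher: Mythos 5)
Your proposal is correct and, for the core of the lemma, follows the same route as the paper: on the silence window the publication intensity (hence the compensated jump martingale) is switched off, so $m_t$ follows $\dot m_t=\bar\mu(m_t)$ pathwise, is absolutely continuous, and therefore has bounded variation and zero quadratic variation; part (i) then follows from the occupation-time formula exactly as you write it, which is also how the paper argues (Tanaka plus the occupation-time identity). The divergence is in part (ii). The paper disposes of it in one line by asserting that the ``occupation measure of the singleton'' vanishes, but once $\langle m\rangle\equiv 0$ that identity reads $0=0$ and controls nothing about Lebesgue time spent at a level, so your explicit ODE argument---transversal crossings when $\bar\mu(a)\neq 0$, so the level set is discrete---is the more honest route. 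As written, though, it proves (ii) only under a nondegeneracy condition on $\bar\mu$ that is not among the lemma's hypotheses, and your fallback (that the posted windows in the applications sit where the drift is nonzero) lives outside the lemma. You can close the degenerate case internally: with $\bar\mu$ Lipschitz, an interior rest point $a$ with $\bar\mu(a)=0$ cannot be reached in finite time from any other state by uniqueness of the ODE flow, and a publication jump lands exactly on $a$ with probability zero under the Gaussian update map; hence, apart from the knife-edge event that $m$ already sits at $a$ at the instant the window opens, the set $\{t:m_t=a\}$ is a.s.\ at most countable and thus Lebesgue-null. With that addendum your proof is complete and, on part (ii), tighter than the paper's own treatment.
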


\begin{proof}
Appendix~\ref{app:silence} makes $m_t$ explicit as a PDMP with generator
\[
(\mathcal{A}_{\lambda} f)(m)
=
\bar\mu(m)\, f'(m)
+
\Lambda(m)\;
\E\big[f(\mathcal{U}(m,\varepsilon))-f(m)\big],
\]
where $\lambda_t$ is the publicly observed publication intensity, $\Lambda(m)=\E[\lambda_t\mid m_t=m]$ is the \emph{publicly inferred} publication hazard, and $\mathcal{U}$ is the Bayesian update map at a publication; see equation~\eqref{eq:PDMP-generator-appendix} and Lemma~\ref{lem:equivalence} in Appendix~\ref{app:silence}.

By construction of the posted silence window, the policy sets $\lambda_t=\lambda(m_t)\equiv 0$ whenever $m_t\in U$. Hence, conditional on $m_t=m$ with $m\in U$ we have $\lambda_t=0$ almost surely, and therefore
\[
\Lambda(m)
=
\E[\lambda_t\mid m_t=m]
=
0
\quad\text{for all }m\in U,
\]
and the jump term vanishes there. Between jumps, the posterior mean $m_t$ evolves deterministically via the ODE
\[
(\dot m_t,\dot v_t)=\bigl(\bar\mu(m_t),\bar\gamma(m_t,v_t)\bigr),
\]
so while $m_t\in U$ its path is absolutely continuous and therefore of bounded variation. A semimartingale of bounded variation has zero quadratic variation and hence zero local time at every interior point. In particular, by Tanaka's formula and the occupation-time formula, $L_t^a(m)\equiv 0$ for all $a\in U$, and the occupation measure of the singleton $\{a\}$ is identically zero almost surely. This yields both (i) and (ii).
\end{proof}

\noindent\emph{Remark.}
This lemma treats the silence window as defined in belief space, with a policy of the form $\lambda_t=\lambda(m_t)$ and $\lambda(m)=0$ on $U$, exactly as in Section~\ref{sec:micro_silence}. In the microfoundation of Appendix~\ref{app:silence}, such a policy can be implemented by choosing a state-dependent intensity $\lambda(z_t)$ whose realized value is publicly observed.

Intuitively, posting a local silence window shuts off the \emph{martingale} part of public beliefs on that band. The public mean $m_t$ stops ``twitching'' stochastically and becomes purely deterministic drift. The process therefore cannot maintain a mixed strategy by repeatedly returning to a knife-edge boundary via stochastic jitter, which is exactly the logic used in Appendix~\ref{app:silence} to rule out local randomization near reset triggers.

\begin{lemma}[No Zeno accumulation of resets; finite cycle time]
\label{lem:cycle-time-bound}
Let $I=[\beta_1,\beta_2]$ be a bounded inaction band and define the first exit time
\[
\tau \;:=\; \inf\{t\ge 0:\; z_t \notin I\}.
\]
Assume:
\begin{enumerate}[(i)]
\item $\mu_\theta(\cdot)$ and $\phi(\cdot)$ are continuous on $I$ and locally Lipschitz (hence bounded on $I$); and
\item uniform ellipticity on $I$: there exists $\underline{\sigma}>0$ such that $\phi(z)\ge \underline{\sigma}$ for all $z\in I$.
\end{enumerate}
Then $\E_{z_0}[\tau]<\infty$ for every $z_0\in(\beta_1,\beta_2)$. In particular, in the impulse control ladder where exits at $\beta_1$ and $\beta_2$ trigger instantaneous resets to targets $z_1^*,z_2^*\in(\beta_1,\beta_2)$, each patch/pivot cycle has finite expected length and there is no Zeno accumulation of interventions on any finite time interval.
\end{lemma}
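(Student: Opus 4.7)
The plan is to build a smooth bounded Lyapunov function $w$ on $I$ with $\mathcal{L}w\le -1$ on the band, apply Dynkin's formula to recover $\E_{z_0}[\tau]\le\|w\|_{\infty,I}$, and then lift the single-cycle bound to the ladder via the strong Markov property. The only genuine calibration step is the choice of $w$; once $\mathcal{L}w\le -1$ holds uniformly on $I$, everything else (It\^o, optional stopping, monotone convergence, strong Markov) is textbook.

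For the test function, let $M:=\sup_{z\in I}|\mu_\theta(z)|<\infty$ (from (i)) and let $\underline\sigma>0$ be the ellipticity constant from (ii). Define
$$w(z):=A\bigl(e^{\gamma(\beta_2-\beta_1)}-e^{\gamma(z-\beta_1)}\bigr),\qquad z\in I,$$
with $\gamma:=4M/\underline\sigma^{2}$ (any positive constant if $M=0$) and $A>0$ chosen so that $A\gamma\bigl[\tfrac12\gamma\underline\sigma^{2}-M\bigr]\ge 1$. Then $w\in C^{2}(I)$ is nonnegative with $\|w\|_{\infty,I}=w(\beta_1)=A\bigl(e^{\gamma(\beta_2-\beta_1)}-1\bigr)<\infty$. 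A direct computation gives
$$\mathcal{L}w(z)=-A\gamma e^{\gamma(z-\beta_1)}\Bigl[\mu_\theta(z)+\tfrac{\gamma}{2}\phi^{2}(z)\Bigr]\le -A\gamma\Bigl[\tfrac{\gamma\underline\sigma^{2}}{2}-M\Bigr]\le -1$$
on $I$, using $e^{\gamma(z-\beta_1)}\ge 1$, $\mu_\theta\ge -M$, and $\phi^{2}\ge\underline\sigma^{2}$.

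Next apply It\^o to $w(z_{T\wedge\tau})$ for a deterministic horizon $T>0$. Since $w'\phi$ is bounded on $I$ the stochastic integral is a genuine martingale on $[0,T\wedge\tau]$ and has zero expectation, so
$$\E_{z_0}\!\left[\int_{0}^{T\wedge\tau}\mathcal{L}w(z_s)\,\mathrm{d}s\right]=\E_{z_0}\bigl[w(z_{T\wedge\tau})\bigr]-w(z_0).$$
Combining with $\mathcal{L}w\le -1$ and $w\ge 0$ gives $\E_{z_0}[T\wedge\tau]\le w(z_0)\le\|w\|_{\infty,I}$, and monotone convergence as $T\uparrow\infty$ yields $\E_{z_0}[\tau]\le\|w\|_{\infty,I}<\infty$ for every $z_0\in(\beta_1,\beta_2)$, which is the single-cycle bound.

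For the no-Zeno extension, note that the targets $z_1^{\ast},z_2^{\ast}\in(\beta_1,\beta_2)$ lie in the \emph{interior} of $I$, so $c:=\min\bigl\{\E_{z_1^{\ast}}[\tau],\E_{z_2^{\ast}}[\tau]\bigr\}>0$ (strict positivity follows from sample-path continuity of $z_t$: starting strictly inside $I$, the exit time is a.s.\ positive, hence has positive mean). After each reset the diffusion restarts at one of the two targets, so by the strong Markov property the successive inter-reset diffusion times $\{\tau^{(n)}\}_{n\ge 1}$ form an i.i.d.\ mixture with $\E[\tau^{(n)}]\ge c$, and the expected number of resets in $[0,T]$ is at most $T/c<\infty$. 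Hence almost surely only finitely many resets occur on any finite horizon, ruling out Zeno accumulation. The main obstacle is just the calibration of $w$: making $\mathcal{L}w\le -1$ uniformly requires \emph{both} bounds to hold on the \emph{closed} band, which is precisely what hypotheses (i)--(ii) guarantee.
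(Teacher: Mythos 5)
Your single-cycle argument is correct but takes a slightly different route from the paper: instead of solving the Poisson problem $\mathcal{L}u=-1$ on $(\beta_1,\beta_2)$ with $u(\beta_1)=u(\beta_2)=0$ exactly (which gives $\E_{z_0}[\tau]=u(z_0)$), you build an explicit exponential supersolution $w$ with $\mathcal{L}w\le -1$ and deduce $\E_{z_0}[\tau]\le\|w\|_\infty$. Your calibration checks out: with $\gamma=4M/\underline{\sigma}^2$ the bracket $\mu_\theta+\tfrac{\gamma}{2}\phi^2\ge \tfrac{\gamma}{2}\underline{\sigma}^2-M=M\ge 0$, so the sign manipulation using $e^{\gamma(z-\beta_1)}\ge 1$ is legitimate, and the It\^o/optional-stopping/monotone-convergence steps are fine since $w'\phi$ is bounded on $I$. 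This Lyapunov route is arguably more elementary (no BVP existence needed) and yields an explicit bound in the spirit of the paper's comparison remark; either way the single-cycle conclusion is the same.

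The weak point is the last step. You assert that the inter-reset times ``form an i.i.d.\ mixture'' and conclude $\E[\#\text{resets in }[0,T]]\le T/c$ in one line. The durations are not i.i.d.: which target the $n$-th cycle starts from depends on which boundary was hit in cycle $n-1$, i.e.\ on the path. What you actually have is the conditional bound $\E[\Delta_n\mid\mathcal{F}_{\tau_{n-1}}]\ge c$, and passing from that to $\E[N_T]\le T/c$ is not immediate by naive Wald-type conditioning: the event $\{N_T\ge k\}=\{\tau_k\le T\}$ is not measurable with respect to the pre-cycle information, and conditioning on it biases $\Delta_k$ downward, so the obvious inequality goes the wrong way. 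The correct finish is the optional-sampling argument the paper gives (or an equivalent): form the submartingale $S_n=\sum_{k\le n}(\Delta_k-c)$, stop it at (a truncation of) the number of completed cycles by time $T$, use $\tau_{N_T}\le T$ and let the truncation go to infinity to get $\E[N_T]\le T/c$, hence $\Pr(N_T=\infty)=0$. The conclusion you state is right and the repair is routine, but as written the counting bound is asserted rather than proved, and the ``i.i.d.'' justification offered for it is not accurate.
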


\begin{proof}
Let $\mathcal{L}$ denote the diffusion generator on $I$,
\[
(\mathcal{L} f)(z)
\;=\;
\mu_\theta(z) f'(z) + \tfrac12 \phi^2(z) f''(z),
\qquad z\in(\beta_1,\beta_2).
\]
Because $I$ is compact, $\mu_\theta$ and $\phi$ are bounded on $I$, and $\phi\ge \underline{\sigma}>0$ (uniform ellipticity). Consider the boundary value problem
\begin{equation}
\label{eq:poisson-exit}
\mathcal{L} u \;=\; -1 \quad \text{on }(\beta_1,\beta_2),
\qquad
u(\beta_1)=u(\beta_2)=0.
\end{equation}
Standard ODE theory for one-dimensional uniformly elliptic operators with continuous coefficients yields a unique solution
$u\in C^2(\beta_1,\beta_2)\cap C([\beta_1,\beta_2])$ to \eqref{eq:poisson-exit}. One can solve explicitly by reducing to a first-order equation for $u'$:
$u'' + \bigl(2\mu_\theta/\phi^2\bigr)u' = -2/\phi^2$, integrate with the usual integrating factor, and then impose the two boundary conditions.

By Dynkin's formula and optional stopping at $\tau\wedge t$,
\[
\E_{z_0}\!\big[u(z_{\tau\wedge t})\big] - u(z_0)
\;=\;
\E_{z_0}\!\int_0^{\tau\wedge t} (\mathcal{L}u)(z_s)\,ds
\;=\; -\,\E_{z_0}[\tau\wedge t].
\]
Because $u\ge 0$ on $[\beta_1,\beta_2]$ with $u=0$ at the boundary (maximum principle for \eqref{eq:poisson-exit}), we have
$0\le \E_{z_0}[u(z_{\tau\wedge t})]\le \|u\|_\infty$.
Letting $t\uparrow\infty$ and applying monotone convergence to $\tau\wedge t$ gives
\[
\E_{z_0}[\tau] \;=\; u(z_0) \;\le\; \|u\|_\infty \;<\;\infty,
\]
so $\E_{z_0}[\tau]<\infty$ for every $z_0\in(\beta_1,\beta_2)$.

Now fix an impulse ladder with lower and upper targets $z_1^*,z_2^*\in(\beta_1,\beta_2)$ after exits at $\beta_1$ and $\beta_2$. By continuity of $u$ on the compact set $J:=\{z_1^*,z_2^*\}\subset(\beta_1,\beta_2)$ and the strict inequality $u>0$ on the open interval, we have
\[
\delta \;:=\; \min_{z\in J} u(z) \;>\;0.
\]
Thus every excursion starting from either target has mean exit time at least $\delta$.

Define $\tau_0:=0$ and, for $n\ge 1$, let $\tau_n$ be the $n$th exit time from $I$ (that is, the end of the $n$th inaction spell), and set $\Delta_n:=\tau_n-\tau_{n-1}$. Let $(\mathcal{F}_t)_{t\ge0}$ be the natural filtration of $(z_t)$ and write $\mathcal{G}_n:=\mathcal{F}_{\tau_n}$. By the strong Markov property and the definition of $u$,
\[
\E[\Delta_n \mid \mathcal{G}_{n-1}]
\;=\;
u(z_{\tau_{n-1}^+})
\;\ge\; \delta,
\]
because $z_{\tau_{n-1}^+}\in\{z_1^*,z_2^*\}$ almost surely. In particular, each $\Delta_n$ is integrable and its conditional mean is uniformly bounded below by $\delta$.

Define
\[
S_n \;:=\; \sum_{k=1}^n (\Delta_k - \delta),
\qquad n\ge 0,
\]
with the convention $S_0:=0$. Then for every $n\ge 1$,
\[
\E[S_n \mid \mathcal{G}_{n-1}]
=
S_{n-1} + \E[\Delta_n - \delta \mid \mathcal{G}_{n-1}]
\;\ge\; S_{n-1},
\]
so $(S_n)_{n\ge0}$ is a submartingale with respect to $(\mathcal{G}_n)$.

Fix a finite horizon $T<\infty$ and define
\[
N_T \;:=\; \max\{n:\,\tau_n\le T\},
\]
with the convention $N_T=0$ if $\tau_1>T$. Then $N_T$ is a stopping time with respect to $(\mathcal{G}_n)$ and $\tau_{N_T}\le T$ almost surely. For each integer $K\ge 1$ consider the bounded stopping time $N_T^K := N_T\wedge K$. By optional sampling for submartingales,
\[
\E[S_{N_T^K}] \;\ge\; S_0 = 0.
\]
On the other hand,
\[
S_{N_T^K}
=
\sum_{k=1}^{N_T^K} (\Delta_k - \delta)
=
\tau_{N_T^K} - \delta\,N_T^K.
\]
Using $\tau_{N_T^K}\le T$ we obtain
\[
0
\;\le\;
\E[\tau_{N_T^K}] - \delta\,\E[N_T^K]
\;\le\;
T - \delta\,\E[N_T^K],
\]
so that
\[
\E[N_T^K] \;\le\; \frac{T}{\delta}
\qquad\text{for all }K\ge1.
\]
Since $N_T^K\uparrow N_T$ as $K\to\infty$, the monotone convergence theorem yields
\[
\E[N_T]
\;=\;
\lim_{K\to\infty}\E[N_T^K]
\;\le\; \frac{T}{\delta}
\;<\;\infty.
\]

The random variable $N_T$ takes values in $\mathbb{N}\cup\{\infty\}$. If $\Pr(N_T=\infty)$ were strictly positive, then
\[
\E[N_T]
=
\sum_{n\ge0}\Pr(N_T>n)
\;\ge\;
\sum_{n\ge0}\Pr(N_T=\infty)
=
\infty,
\]
a contradiction. Hence $\Pr(N_T=\infty)=0$ and there are almost surely only finitely many interventions on any finite horizon $[0,T]$. This rules out Zeno accumulation and, combined with $\E_{z_0}[\tau]<\infty$, justifies referring to the ladder as having finite cycle time.
\end{proof}

\noindent\emph{Remark (why uniform ellipticity and a crude bound).}
If $\phi$ is allowed to vanish on $I$, the mean exit time from a bounded interval can be infinite for some drifts. The uniform lower bound $\phi\ge \underline{\sigma}>0$ rules out that pathology and is standard in ladder problems. A crude explicit bound follows by comparison with the constant-coefficient case (generator $(\underline{\sigma}^2/2)\partial_{zz}$):
solving $(\underline{\sigma}^2/2) u''=-1$ with $u(\beta_1)=u(\beta_2)=0$ yields
\[
u(x)=\frac{(x-\beta_1)(\beta_2-x)}{\underline{\sigma}^2}
\;\le\; \frac{(\beta_2-\beta_1)^2}{4\,\underline{\sigma}^2},
\]
so $\E_{z_0}[\tau]\le C(I,\mu_\theta,\phi)\le \frac{(\beta_2-\beta_1)^2}{4\,\underline{\sigma}^2}\,e^{K(\beta_2-\beta_1)}$
for some $K=\sup_{z\in I}\!\big|2\mu_\theta(z)/\phi^2(z)\big|$ (obtained from the integrating-factor formula for $u$).

This lemma is what lets us treat the two-trigger ladder as a standard stationary Markov impulse policy without worrying about pathological ``reset chatter'' at arbitrarily high frequency.

\subsection{Constant-coefficient benchmark and comparative statics}
\label{subsec:CS}

This subsection records (i) the closed-form baseline in the constant-coefficient benchmark that we use for intuition and figures, and (ii) the local comparative statics of trigger and target choices with respect to the fixed costs $K_i$. These comparative statics underlie the claims in Section~\ref{sec:finance} and in Appendix~\ref{app:financing} that higher intervention costs widen the inaction region and magnify the jump size.

\begin{assumption}[Constant-coefficient benchmark]
\label{ass:CC}
On a given solvent inaction block:
$\mu_\theta(z)\equiv \mu$ is constant,
$\phi(z)\equiv \sigma>0$ is constant,
and the net flow payoff is $\pi(z,m)\equiv\pi_0$ (constant).\footnote{Allowing $\pi$ to be affine or to embed disclosure costs $k(\lambda)$ changes algebra but not the comparative-static signs below.}
Publication intensity $\lambda$ on that inaction block is constant and can be absorbed into $\pi_0$.
\end{assumption}

Under Assumption~\ref{ass:CC}, the HJB / continuation ODE in any inaction region is linear with constant coefficients. Let $\eta_+$ and $\eta_-$ be the distinct real roots of
\[
\tfrac12 \sigma^2 \eta^2 + \mu \eta - r = 0,
\qquad
\eta_+>0>\eta_-,
\]
and write $\Pi := \pi_0/r$. The general $C^2$ solution to the stationary ODE
\[
rV(z) = \pi_0 + \mu V'(z) + \tfrac12 \sigma^2 V''(z)
\]
on that region is
\begin{equation}
\label{eq:V-general}
V(z)
\;=\;
\Pi
\;+\;
A\,e^{\eta_+ z}
\;+\;
B\,e^{\eta_- z},
\end{equation}
for constants $(A,B)$ pinned down by boundary conditions / continuation values.

Now embed the two-reset ladder. Let $(\beta_1,\beta_2)$ be the inaction band. When $z_t$ hits the lower trigger $\beta_1$ from above, the firm pays the \emph{patch} cost $K_1$ and jumps the state to $z_1^*>\beta_1$. When $z_t$ hits the upper trigger $\beta_2$ from below, the firm pays the \emph{pivot} cost $K_2$ and jumps to $z_2^*<\beta_2$. By construction of ``upward'' patch vs.\ ``downward'' pivot, we have
\[
\beta_1 < z_1^* < z_2^*< \beta_2.
\]

Standard impulse-control verification (Section~\ref{sec:verification}, Appendix~\ref{app:silence}, and Appendix~\ref{app:financing}) gives the boundary conditions at each rung:
\begin{align}
\label{eq:BC-lower}
\text{(patch rung)}\qquad
&V(\beta_1)
=
V(z_1^*) - K_1,
&&\text{(value matching)},\\
&V'(\beta_1)=0,
&&\text{(high contact / smooth pasting at the trigger)},\\
&V'(z_1^*)=0,
&&\text{(target optimality)} \nonumber
\end{align}
and
\begin{align}
\label{eq:BC-upper}
\text{(pivot rung)}\qquad
&V(\beta_2)
=
V(z_2^*) - K_2,
&&\text{(value matching)},\\
&V'(\beta_2)=0,
&&\text{(high contact at the trigger)},\\
&V'(z_2^*)=0.
&&\text{(target optimality)} \nonumber
\end{align}
Intuition:
\emph{Value matching} says you are indifferent between (i) acting right at the trigger and paying $K_i$ to jump to the target, versus (ii) continuing without acting exactly at that same trigger state.
\emph{High contact} says the marginal value of nudging the state infinitesimally \emph{without} paying the fixed cost is locally zero at the trigger, because that trigger is precisely the point at which acting versus waiting is knife-edge.
\emph{Target optimality} says that, conditional on paying $K_i$, the chosen target $z_i^*$ solves a FOC: you do not want to ``over-shoot'' once having incurred the fixed cost.

The continuation solution \eqref{eq:V-general} together with the \emph{six} boundary conditions \eqref{eq:BC-lower}--\eqref{eq:BC-upper} pin down
\[
(\beta_1, z_1^*, \beta_2, z_2^*, A, B)
\]
as a (locally) unique $C^1$ function of $(K_1,K_2)$ and the continuation values inherited from neighboring block(s).
The next lemma records the comparative statics of $(\beta_1,z_1^*)$ and $(\beta_2,z_2^*)$ with respect to $(K_1,K_2)$.

\begin{lemma}[Fixed-cost comparative statics in the constant-coefficient ladder]
\label{lem:cs-constant}
Impose Assumption~\ref{ass:CC} and the regularity used in Section~\ref{sec:verification} and Appendix~\ref{app:silence}: in particular,
\begin{enumerate}[(i)]
\item the ``nondegenerate curvature'' condition at each rung (the indifference at $\beta_i$ is isolated, not flat, and $V''(\beta_i)$ and $V''(z_i^*)$ have opposite nonzero signs, as in Assumption~\ref{ass:selection-regularity}); and
\item no local time / no accumulation (Lemmas~\ref{lem:no-local-time-belief} and \ref{lem:cycle-time-bound}).
\end{enumerate}
Then, locally (via the Implicit Function Theorem),
\[
\frac{\partial \beta_1}{\partial K_1} \;<\; 0,
\qquad
\frac{\partial z_1^*}{\partial K_1} \;>\; 0,
\]
and symmetrically,
\[
\frac{\partial \beta_2}{\partial K_2} \;>\; 0,
\qquad
\frac{\partial z_2^*}{\partial K_2} \;<\; 0.
\]

\begin{itemize}[leftmargin=1.5em,itemsep=2pt,topsep=2pt]
\item $\beta_1$ is the ``too-bad'' threshold where the firm finally pays $K_1$ to patch. If $K_1$ rises, the firm becomes more reluctant to pay it. It thus tolerates worse performance (or higher misalignment, cost, latency) before acting. The trigger moves \emph{outward} in that direction: $\beta_1$ shifts further into the ``bad'' region. Since ``bad'' is lower in $z$ at the patch rung, this is $\partial \beta_1/\partial K_1<0$.
\item Conditional on finally paying $K_1$, the firm now wants to ``make it count.'' The optimal post-patch target $z_1^*$ thus moves in the \emph{good} direction, $\partial z_1^*/\partial K_1>0$. The jump size $z_1^*-\beta_1$ strictly increases.
\end{itemize}

\begin{itemize}[leftmargin=1.5em,itemsep=2pt,topsep=2pt]
\item $\beta_2$ is the ``too-much'' / ``time to pivot'' threshold. When $K_2$ increases, the firm delays the expensive pivot and rides the current architecture longer, so $\beta_2$ moves outward in that (high) direction: $\partial \beta_2/\partial K_2>0$.
\item Once it \emph{does} pivot, it pivots harder. The chosen post-pivot target $z_2^*$ drops further, so $\partial z_2^*/\partial K_2<0$. The jump $\beta_2-z_2^*$ grows.
\end{itemize}

In both cases, higher fixed intervention costs widen the inaction band and magnify the eventual intervention. This is the continuous-time ladder analogue of the textbook S--s logic: higher adjustment costs stretch the no-adjustment region and make each discrete adjustment larger.
\end{lemma}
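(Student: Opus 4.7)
The plan is to apply the Implicit Function Theorem (IFT) at each rung separately and express the comparative statics as ratios involving $V''(\beta_i)$ and $V''(z_i^*)$, whose signs are pinned by Assumption~\ref{ass:selection-regularity}(i) together with the target-optimality FOC. I treat the patch rung in detail; the pivot rung is symmetric after reversing the direction of approach.

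For the patch rung, I view $F_1 = V(\beta_1) - V(z_1^*) + K_1$, $F_2 = V'(\beta_1)$, and $F_3 = V'(z_1^*)$, together with a closure condition fixing the two integration constants of $V(z)=\Pi+A e^{\eta_+ z}+B e^{\eta_- z}$ in terms of the ladder data, as an implicit system $F(\beta_1,z_1^*,A,B;K_1)=0$. Only $F_1$ has explicit $K_1$ dependence, so $\partial F/\partial K_1=(1,0,0,0)^\top$, and IFT gives a linear Jacobian equation with a single nonzero entry on the right. At the candidate solution the identities $V'(\beta_1)=V'(z_1^*)=0$ kill the $F_1$ cross partials in $(\beta_1,z_1^*)$, so the diagonal entries of the relevant Jacobian block reduce to $V''(\beta_1)$ and $V''(z_1^*)$. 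Cramer's rule then yields $\partial\beta_1/\partial K_1$ proportional to $1/V''(\beta_1)$ and $\partial z_1^*/\partial K_1$ proportional to $1/V''(z_1^*)$, with a common cofactor sign built from the exponential basis $\{e^{\eta_+ z},e^{\eta_- z}\}$ at $(\beta_1,z_1^*)$, which is strictly positive because $\eta_+>0>\eta_-$ and $\beta_1<z_1^*$.

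The two curvature signs are pinned by Assumption~\ref{ass:selection-regularity}(i) (nonzero curvatures with opposite signs) together with the target-optimality FOC: the target is a local maximum of the post-impulse continuation value, so $V''(z_1^*)<0$, and the opposite-sign condition then forces $V''(\beta_1)>0$ (the trigger is a local minimum on the band, since the post-impulse level $V(z_1^*)$ lies strictly above $V(\beta_1)$ by value matching). Substituting these signs into the Cramer expressions yields $\partial\beta_1/\partial K_1<0$ and $\partial z_1^*/\partial K_1>0$: raising the fixed patch cost widens the inaction band at the low end and moves the target farther inside, enlarging the jump $z_1^*-\beta_1$. The symmetric computation at the pivot rung, where the firm resets downward from $\beta_2>z_2^*$ and the ordering in $z$ is reversed, delivers $\partial\beta_2/\partial K_2>0$ and $\partial z_2^*/\partial K_2<0$.

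The main obstacle is the joint coupling of the two rungs through the shared integration constants $(A,B)$: a perturbation of $K_1$ in principle propagates to $(\beta_2,z_2^*)$, and strictly speaking the IFT must be run on the full $6\times 6$ system $(\beta_1,\beta_2,z_1^*,z_2^*,A,B)$ rather than rung by rung. I would handle this by showing that at the candidate ladder the off-diagonal block of the full Jacobian is dominated by the diagonal blocks: once the band width $\beta_2-\beta_1$ is bounded below (guaranteed by Lemma~\ref{lem:cycle-time-bound} and the isolated-crossing part of Assumption~\ref{ass:selection-regularity}), the basis functions $e^{\eta_\pm z}$ evaluated at $\beta_1$ and $\beta_2$ have sharply different magnitudes, making the cross-rung couplings small relative to the within-rung curvature terms. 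Under this block-dominance structure the rung-by-rung signs derived above are inherited by the full coupled system, which is the standard route by which multi-rung S--s ladders inherit their comparative statics from the classic single-rung template.
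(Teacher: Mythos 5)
Your overall route --- stack the boundary conditions, apply the Implicit Function Theorem, and read the signs off Cramer's rule through curvature terms --- is the same as the paper's, and your final functional form (each derivative proportional to the reciprocal of the own-point curvature, with an overall minus sign from the IFT) is consistent with what the exact computation delivers after cancellation. The genuine gap is in how you handle the cross-rung coupling through $(A,B)$. Your rung-level system $(F_1,F_2,F_3)$ plus ``a closure condition'' is not well posed on its own (the closure \emph{is} the upper-rung conditions), and your claim that the positive cofactor is ``built from the exponential basis at $(\beta_1,z_1^*)$'' cannot be right: at the solution the row $F_1$ has zero entries in the $\beta_1$ and $z_1^*$ columns (because $V'(\beta_1)=V'(z_1^*)=0$), so the $K_1$-shock enters the system only through the $(A,B)$ columns, and every Cramer cofactor for $\partial\beta_1/\partial K_1$ or $\partial z_1^*/\partial K_1$ necessarily contains upper-rung rows and columns (the entries $e^{\eta_\pm\beta_2}-e^{\eta_\pm z_2^*}$ and the curvatures $V''(\beta_2),V''(z_2^*)$); moreover the two derivatives carry \emph{different} Wronskian-type numerators, so there is no single ``common'' factor attached to the lower rung alone. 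Your fallback --- block dominance because the band width is bounded below --- is not established and is not implied by what you cite: Lemma~\ref{lem:cycle-time-bound} gives finite expected exit time for a \emph{given} band, not a lower bound on $\beta_2-\beta_1$, and Assumption~\ref{ass:selection-regularity} is a local nondegeneracy condition, not a quantitative separation of the exponentials. The paper needs no dominance approximation: it runs the IFT on the full $6\times 6$ system and shows the relevant cofactors factor \emph{exactly} into a curvature term times a minor collecting the Wronskian and upper-rung blocks, so the coupling is handled exactly and only the sign of that minor (together with the curvatures) matters.

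A second, smaller issue is sign bookkeeping. You assign $V''(z_1^*)<0$ (target is a local maximum) and $V''(\beta_1)>0$; the paper's proof uses the opposite labels (``concave at trigger, convex at target'') but pairs them with the opposite-point cofactor form, so the two bookkeepings can only be reconciled by pinning down the sign of the Wronskian-type minors that you have not computed. Since, under your convention, the stated inequalities hinge entirely on the sign of the ``common positive factor,'' and that factor in truth involves both rungs, your conclusions are currently asserted rather than derived. To close the argument you should either reproduce the paper's $6\times 6$ cofactor factorization and fix the orientation of the full determinant, or compute the relevant $2\times 2$ exponential determinants explicitly in the constant-coefficient case (using $\eta_+>0>\eta_-$ and $\beta_1<z_1^*<z_2^*<\beta_2$) and verify their signs, rather than appeal to dominance.
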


\begin{proof}[Proof]
Stack the \emph{six} boundary conditions \eqref{eq:BC-lower}--\eqref{eq:BC-upper}
and solve them \emph{jointly} for the \emph{six} unknowns
\[
\theta
\;:=\;
(\beta_1,z_1^*,\beta_2,z_2^*,A,B).
\]
In the constant-coefficient benchmark the continuation solution on the inaction
band is $V(z)=\Pi+A e^{\eta_+ z}+B e^{\eta_- z}$, common to the whole band; hence
$(A,B)$ enter both rungs, and the lower/upper conditions are coupled.

Write $F(\theta;K_1,K_2)=(F_1,\dots,F_6)$ with rows
\[
\begin{aligned}
F_1&:=V(\beta_1)-V(z_1^*)+K_1 &&\text{(lower value matching)},\\
F_2&:=V'(\beta_1)             &&\text{(high contact at $\beta_1$)},\\
F_3&:=V'(z_1^*)               &&\text{(target optimality at $z_1^*$)},\\
F_4&:=V(\beta_2)-V(z_2^*)+K_2 &&\text{(upper value matching)},\\
F_5&:=V'(\beta_2)             &&\text{(high contact at $\beta_2$)},\\
F_6&:=V'(z_2^*)               &&\text{(target optimality at $z_2^*$)}.
\end{aligned}
\]
At any regular solution $(\theta^\circ;K_1^\circ,K_2^\circ)$ we have
$F(\theta^\circ;K_1^\circ,K_2^\circ)=0$. Let $E_\pm(z):=e^{\eta_\pm z}$.
Then $V'(z)=A\eta_+E_+(z)+B\eta_-E_-(z)$ and
$V''(z)=A\eta_+^2E_+(z)+B\eta_-^2E_-(z)$.

\medskip
\noindent\emph{Step 1 (local well-posedness).}
Consider the Jacobian matrix $\partial_\theta F(\theta^\circ;K_1^\circ,K_2^\circ)$.
Using the Wronskian
\[
W(z):=\det
\begin{bmatrix}
E_+(z) & E_-(z)\\
\eta_+E_+(z) & \eta_-E_-(z)
\end{bmatrix}
=(\eta_--\eta_+)E_+(z)E_-(z)\neq0,
\]
together with the nondegenerate-curvature assumption
($V''(\beta_i^\circ),V''(z_i^{*\circ})\neq0$ with opposite signs at each rung),
one checks that $\partial_\theta F(\theta^\circ;K_1^\circ,K_2^\circ)$ is nonsingular:
informally, the two derivative conditions at each rung pin down the local curvature
and the $(A,B)$ directions via the nonzero Wronskian. Hence the Implicit Function
Theorem yields a $C^1$ map
\[
(K_1,K_2)\mapsto \theta(K_1,K_2)
\quad\text{near }(K_1^\circ,K_2^\circ).
\]

\medskip
\noindent\emph{Step 2 (signs via Cramer’s rule).}
Differentiate $F(\theta(K_1,K_2);K_1,K_2)=0$ with respect to $K_1$:
\[
\partial_\theta F \cdot \frac{\partial \theta}{\partial K_1}
\;+\;
\partial_{K_1}F
\;=\;0,
\qquad
\partial_{K_1}F=(1,0,0,0,0,0)^\top.
\]

By Cramer's rule,
\[
\frac{\partial \beta_1}{\partial K_1}
\;=\;
-\frac{\operatorname{cof}_{1,1}(\partial_\theta F)}{\det(\partial_\theta F)},
\qquad
\frac{\partial z_1^*}{\partial K_1}
\;=\;
-\frac{\operatorname{cof}_{1,2}(\partial_\theta F)}{\det(\partial_\theta F)}.
\]

Because $K_1$ enters only the \emph{lower} value-matching row $F_1$, the relevant
cofactors factor into the curvature term at the opposite contact point times a
positive minor collecting the $(\beta_2,z_2^*,A,B)$ block. More precisely, after a harmless choice of orientation so that
$\Delta:=\det(\partial_\theta F)>0$,
\[
\operatorname{cof}_{1,1}(\partial_\theta F)
\;=\;
V''(z_1^{*\circ})\cdot \Delta_+,
\qquad
\operatorname{cof}_{1,2}(\partial_\theta F)
\;=\;
V''(\beta_1^\circ)\cdot \Delta_+,
\]

where $\Delta_+>0$ is the determinant of a principal minor containing only
Wronskian terms and upper-rung rows/columns (and thus independent of the local
signs we are about to use). Therefore
\[
\frac{\partial \beta_1}{\partial K_1}
=
-\frac{V''(z_1^{*\circ})\,\Delta_+}{\Delta}
\;<\;0,
\qquad
\frac{\partial z_1^*}{\partial K_1}
=
-\frac{V''(\beta_1^\circ)\,\Delta_+}{\Delta}
\;>\;0,
\]
because $V''(z_1^{*\circ})>0$ and $V''(\beta_1^\circ)<0$ by the usual
``concave at trigger / convex at target'' pattern (cf.\ Assumption~\ref{ass:selection-regularity}).

Upper-rung comparative statics are symmetric: differentiating w.r.t.\ $K_2$ gives
\[
\frac{\partial \beta_2}{\partial K_2}>0,
\qquad
\frac{\partial z_2^*}{\partial K_2}<0.
\]

\medskip
\noindent\emph{Remark (coupling through $A,B$).}
The equalities above follow from the \emph{coupled} $6\times6$ system.
The parameters $(A,B)$ tie the rungs together algebraically; increasing $K_1$
can shift $(\beta_2,z_2^*)$ through that coupling. The signs for
$(\partial\beta_1/\partial K_1,\partial z_1^*/\partial K_1)$ (and the symmetric
pair for $K_2$) nevertheless follow from the cofactor structure: the relevant cofactors isolate curvature terms $V''(\cdot)$ at the active
rung and multiply them by a positive minor, so the cross-rung coupling does not alter the sign conclusions.
\end{proof}

This lemma formalizes two claims used repeatedly:
(i) higher $K_1$ or $K_2$ widens the inaction band;
(ii) each discrete intervention becomes ``lumpier.'' Empirically, this motivates treating long runs of small ``patch''-style tweaks as distinct from rarer ``pivot / major-release'' resets (Appendix~\ref{app:empirics}).

\subsection{Pointers}
\label{subsec:pointers}

For the takeover-envelope argument and the decomposition of the debt wedge into a
pre-default agency wedge plus a takeover irreversibility wedge, see
Lemma~\ref{lem:takeover-envelope-fin} and equations~\eqref{eq:wedge-decomp}
and~\eqref{eq:wedge-bound} in Appendix~\ref{app:financing}.
\\
For the public-belief generator $\mathcal{A}_\lambda$ and the
``local silence kills the martingale'' result that underlies
Lemma~\ref{lem:no-local-time-belief}, see
equation~\eqref{eq:PDMP-generator-appendix} and
Lemma~\ref{lem:equivalence} in Appendix~\ref{app:silence}.


\bibliographystyle{abbrvnat}

\bibliography{references}

\end{document}